\newtheorem{thm}{Theorem}
\newtheorem{obs}[thm]{Observation}
\newtheorem{lem}[thm]{Lemma}
\newtheorem{lemma}[thm]{Lemma}
\newtheorem{claim}[thm]{Claim}
\newtheorem{defn}[thm]{Definition}%
\newenvironment{Proof}{\begin{proof}}{\end{proof}}
\newcommand{\oeta}{{\ensuremath{{\overline{\eta}}}}}
\newcommand{\eps}{{\ensuremath{{\varepsilon}}}}
\newcommand{\tN}{{\ensuremath{{\tilde{N}}}}}
\newcommand{\oj}{{\ensuremath{{\overline{j}}}}}
\newcommand{\os}{{\ensuremath{{\overline{s}}}}}
\newcommand{\tS}{{\ensuremath{{\widetilde{S}}}}}
\newcommand{\ooi}{{\ensuremath{{i^t}}}}
\newcommand{\oi}{{\ensuremath{{\overline{i}}}}}
\newcommand{\DelPhi}{{\ensuremath{{\Delta\Phi}}}}
\newcommand{\HitAlg}{{\ensuremath{{\mathrm{H}}}}}
\newcommand{\MovAlg}{{\ensuremath{{\mathrm{M}}}}}
\newcommand{\HitOPT}{{\ensuremath{{\mathrm{H^*}}}}}
\newcommand{\MovOPT}{{\ensuremath{{\mathrm{M^*}}}}}
\newcommand{\opt}{{\mathrm{Opt}}}
\newcommand{\Alg}{{\ensuremath{{\mathrm{Alg}}}}}
\newcommand{\optcost}{{\mathrm{Optcost}}}
\newcommand{\hc}{{\mathrm{Hitcost}}}
\newcommand{\oy}{{\overline{y}}}
\newcommand{\ox}{{\overline{x}}}
\newcommand{\cB}{{\mathcal{D}}}
\newcommand{\Ex}{{\mathbb E}}
\begin{document}
\title{{\bf A Polylogarithmic-Competitive Algorithm for the $k$-Server Problem}}
\author{
Nikhil Bansal
\thanks{Technical University of Eindhoven, Netherlands.
E-mail: n.bansal@tue.nl.}
\and
 Niv Buchbinder
 \thanks{Computer Science Department, Open University of Israel. E-mail: niv.buchbinder@gmail.com. Supported by ISF grant 954/11 and BSF grant 2010426.}
\and Aleksander M\k{a}dry
\thanks{Microsoft Research, Cambridge, MA USA. E-mail: madry@mit.edu. Research done while at the Computer Science and Artificial Intelligence Laboratory, MIT, Cambridge, MA USA, and partially supported by NSF grant CCF-0829878 and by ONR grant N00014-11-1-0053.}
\and
Joseph (Seffi) Naor
\thanks{Computer Science Department, Technion, Haifa, Israel. E-mail: naor@cs.technion.ac.il. Supported by ISF grant 954/11 and BSF grant 2010426.}}

\date{}
\maketitle

\begin{abstract}
We give the first polylogarithmic-competitive randomized online algorithm
for the $k$-server problem on an arbitrary finite metric space. In
particular, our algorithm achieves a competitive ratio of
$\widetilde{O}(\log^3 n \log^2 k)$ for any metric space on $n$
points. Our algorithm improves upon the deterministic $(2k-1)$-competitive
algorithm of Koutsoupias and Papadimitriou \cite{KP95} whenever
$n$ is sub-exponential in $k$.
\end{abstract}

\section{Introduction}

The $k$-server problem is one of the most fundamental and
extensively studied problems in online computation. Suppose there
is an $n$-point metric space and $k$ servers are located at some
of the points of the metric space. At each time step, an online
algorithm is given a request at one of the points of the metric
space, and this request is served by moving a server to the
requested point (if there is no server there already). The cost of
serving a request is defined to be the distance traveled by the
server. Given a sequence of requests, the task is to devise an
online strategy minimizing the sum of the costs of serving the
requests.

The $k$-server problem was originally proposed by Manasse et
al.~\cite{MMS90} as a broad generalization of various online
problems. The most well studied problem among them is the paging
(also known as caching) problem, in which there is a cache that
can hold up to $k$ pages out of a universe of $n$ pages. At each
time step a page is requested; if the page is already in the cache
then no cost is incurred, otherwise it must be brought into the
cache (possibly causing an eviction of some other page) at a cost
of one unit. It is easy to see that the paging problem is
equivalent to the $k$-server problem on a uniform metric space,
and already in their seminal paper on competitive analysis,
Sleator and Tarjan \cite{ST85} gave several $k$-competitive
algorithms for paging, and showed that no deterministic algorithm
can do better. This prompted Manasse et al.~\cite{MMS90} to state
a far-reaching conjecture that a similar result holds for an
arbitrary metric. More precisely, they conjectured that there is a
$k$-competitive online algorithm for the $k$-server problem on any
metric space and for any value of $k$. This conjecture is known as
as the {\em $k$-server conjecture}.

At the time that the $k$-server conjecture was stated,  an online
algorithm with competitive ratio that depends only on $k$ was not
known. It was first obtained by Fiat et al.~\cite{FRR90}. Improved
bounds were obtained later on by \cite{G91,BG00}, though the ratio
still remained exponential in $k$. A major breakthrough was
achieved by Koutsoupias and Papadimitriou \cite{KP95}, who showed
that  so-called {\em work function algorithm} is
$(2k-1)$-competitive. This result is almost optimal, since we know
that any deterministic algorithm has to be at least
$k$-competitive. We note that a tight competitive factor of $k$ is
only known for special metrics such as the uniform metric, line
metric, and -- more generally -- trees \cite{C+,CL91}.

Even though the aforementioned results are all deterministic,
there is also a great deal of interest in randomized algorithms
for the $k$-server problem. This is motivated primarily by the
fact that randomized online algorithms (i.e., algorithms working
against an oblivious adversary) tend to have much better
performance than their deterministic counterparts. For example,
for the paging problem, several $O(\log k)$-competitive algorithms
are known ~\cite{F+,GS,ACN,BBN07}, as well as a lower bound of
$\Omega(\log k)$ on the competitive ratio.

Unfortunately, our understanding of the $k$-server problem when
randomization is allowed is much weaker than in the deterministic
case. Despite much work \cite{BKRS92,BBM01,BLMN}, no better lower
bound than $\Omega(\log k)$ is known on competitive factors in the
randomized setting. Conversely, no better upper bound, other than
the {\em deterministic} guarantee of $2k-1$ \cite{KP95} mentioned
above, is known for general metrics. Thus, an exponential gap
still remains between the known lower and upper bounds.

Given the lack of any lower bounds better than $\Omega(\log k)$,
it is widely believed that there is an $O(\log k)$-competitive
randomized algorithm for the $k$-server problem on every metric
space against an oblivious adversary - this belief is captured by
the so-called {\em randomized $k$-server conjecture}.
Unfortunately, besides the previously-mentioned $O(\log
k)$-competitive algorithm for the case of a uniform metric, even
when we allow the competitiveness to depend on other parameters of
the metric, such as the number of points $n$, or the diameter
$\Delta$, non-trivial guarantees are known only for very few
special cases. For example, the case of a well-separated metric
\cite{Seiden}, the case of a metric corresponding to a binary HST
with high separation \cite{CMP08}, the case of $n=k+O(1)$
\cite{BBBT}, as well as some other cases
\cite{CL06,BBN10a,BBN10b}. For the weighted paging
problem\footnote{In weighted paging, arbitrary weights are
associated with fetching the pages into the cache. This problem
corresponds to the $k$-server problem on a weighted star.},
\cite{BBN07} gave an $O(\log k)$-competitive algorithm (see also
\cite{BBN10a}) which is based on the online primal-dual approach.
However, no non-trivial guarantees are known  even for very simple
extensions of the uniform metric, e.g., two-level HSTs with high
separation.

For a more in-depth treatment of the extensive literature on both paging and the $k$-server problem, we suggest \cite{BE98}.

\subsection{Our Result}

We give the first polylogarithmic competitive algorithm for the
$k$-server problem on a general metric with a finite number of
points $n$. More precisely, our main result is the following.
\begin{thm}
\label{thm:main}
There is a randomized algorithm for the $k$-server problem that achieves a competitive ratio of
$O(\log^2 k \log^3 n \log \log n) = \widetilde{O}(\log^2 k \log^3 n)$
on any metric space on $n$ points.

\end{thm}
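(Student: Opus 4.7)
The plan is to prove Theorem~\ref{thm:main} via a three-layer strategy that is by now standard for online problems on general metrics, but whose innermost layer is new. First, I would invoke the probabilistic tree-embedding machinery of Fakcharoenphol--Rao--Talwar to reduce the $k$-server problem on an arbitrary $n$-point metric to the $k$-server problem on a (randomly chosen) hierarchically separated tree (HST) of depth $O(\log n)$ with separation $\sigma=\Theta(\log n)$, at the cost of an $O(\log n)$ multiplicative overhead in the competitive ratio. This lets me assume from now on that the metric is an HST and that server locations live only at the leaves.

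Second, I would aim for a \emph{fractional} $k$-server algorithm on the HST whose competitive ratio against the fractional optimum is polylogarithmic in $k$ and in the depth. The natural state variables here are, for every internal node $v$, the total ``anti-server'' (hole) mass in the subtree rooted at $v$, equivalently the fractional deficit $k_v:=k-(\text{server mass in the subtree})$. A request at a leaf $\ell$ translates into the constraint $k_\ell=0$, and the cost of movement decomposes additively over HST edges: the cost charged to edge $(v,\mathrm{parent}(v))$ is proportional to the edge length times the change in $k_v$. Thus the fractional problem at each internal node $v$ reduces to an online allocation problem among its children, governed by an exogenous ``demand'' process coming from below, and resembling a continuous weighted-paging instance. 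I would handle this local problem with a primal-dual / mirror-descent-style update, in the spirit of the $O(\log k)$ weighted caching algorithm of Bansal--Buchbinder--Naor, using a weighted entropy regularizer that is sensitive to the scale of the local metric. The analysis is via a carefully chosen potential $\Phi$ decomposing over the HST nodes; the contribution at each node is compared to the fractional optimum by a KKT/primal-dual argument, yielding at most $\mathrm{polylog}(k)$ overhead per level. Summing over levels and using telescoping of the potential across the depth gives an $\widetilde{O}(\log^2 k\,\log^2 n)$ competitive fractional algorithm.

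Third, I would convert the fractional algorithm into a randomized integral one with only an $O(\log n)$ further loss. For HSTs, the standard trick is an online rounding that maintains a distribution over $k$-server configurations consistent with the fractional marginals at every level, updating it whenever a fractional change occurs; this can be done by cleverly recoupling conditional distributions down the tree so that the expected movement cost matches the fractional movement cost up to a logarithmic factor (coming from the expected depth / number of levels touched per unit of fractional change). Composing the three layers yields a competitive ratio of $O(\log n)\cdot\widetilde{O}(\log^2 k\log^2 n)\cdot O(\log n)=\widetilde{O}(\log^2 k\log^3 n)$, matching the theorem.

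The main obstacle will clearly be the second step: designing the fractional algorithm and, above all, its potential-function analysis. The delicate point is that the local allocation problem at an internal node is driven by an adversary who is \emph{not} arbitrary but constrained by the dynamics in the subtrees below, and that the local potentials at different levels must combine without the per-level losses accumulating multiplicatively beyond a single $\log n$ factor. Getting the per-level overhead down to $\mathrm{polylog}(k)$ (and not, say, $\mathrm{poly}(k)$) is what makes the final ratio polylogarithmic rather than polynomial in $k$, and it is here that a careful choice of regularizer and a level-dependent scaling of dual variables, together with an amortization argument that charges movement at node $v$ against changes in the deficit $k_v$ produced by descendants, will be essential.
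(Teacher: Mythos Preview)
Your outline has the right three-layer architecture, but there are two concrete gaps that, as stated, would make the argument fail.

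First, the FRT embedding does \emph{not} produce HSTs of depth $O(\log n)$: the depth is $\Theta(\log_\sigma \Delta)$, where $\Delta$ is the aspect ratio of the metric, and $\Delta$ can be $2^{\Omega(n)}$. Since your per-level loss is multiplied by the depth, this alone would destroy any polylogarithmic bound. The paper handles this by introducing \emph{weighted} $\sigma$-HSTs---trees where edge lengths from a node to its children may be non-uniform but still drop by a factor $\sigma$ along any root-to-leaf path---and proving (Theorem~\ref{thm:transform_ov}) that any $\sigma$-HST on $n$ leaves can be contracted into a \emph{balanced} weighted $\sigma$-HST of depth $O(\log n)$ with only $O(1)$ distortion. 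The allocation and fractional-$k$-server machinery then has to be developed on weighted stars rather than uniform ones, an additional complication you do not mention.

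Second, taking $\sigma=\Theta(\log n)$ is too small for the recursive analysis you sketch. The per-level allocation guarantee the paper obtains is $(1+\eps,\,O(\log(k/\eps)))$: hit cost is matched up to a factor $1+\eps$, movement up to $O(\log(k/\eps))$. For the $(1+\eps)$ factors not to compound over $\ell=O(\log n)$ levels one must take $\eps=\Theta(1/\ell)$, which forces the movement factor to $\beta=O(\log(k\ell))$; the recursion (Theorem~\ref{thm:alloc_to_kserver}) then only closes if $\sigma=\Omega(\ell\beta)=\Omega(\log n\,\log(k\log n))$. This larger $\sigma$ is exactly what the paper chooses, and the extra $\log k$ hidden in $\sigma$ is what produces the $\log^2 k$ in the final bound. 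Your sentence about ``keeping the per-level losses from accumulating multiplicatively'' is precisely the crux, but you have not identified the mechanism (the $(1+\eps)$ hit-cost guarantee with $\eps=1/\ell$) nor its price (the blow-up in $\sigma$). Two smaller points: the paper's online rounding on a $\sigma$-HST with $\sigma>5$ loses only $O(1)$, not $O(\log n)$; and your final arithmetic $O(\log n)\cdot\widetilde O(\log^2 k\,\log^2 n)\cdot O(\log n)$ equals $\widetilde O(\log^2 k\,\log^4 n)$, not $\widetilde O(\log^2 k\,\log^3 n)$.
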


The starting point of our algorithm is the recent approach
proposed by Cot\'{e} et al.~\cite{CMP08} for solving the
$k$-server problem on hierarchically well-separated trees (HSTs).
It is well known that solving the problem on HSTs suffices, as any
metric space can be embedded into a probability distribution over
HSTs with low distortion \cite{FRT03}.

More precisely, Cot\'{e} et al. defined a problem on uniform
metrics which we call the {\em allocation problem}. They showed
that an online randomized algorithm for the allocation problem
that provides certain refined competitive guarantees can be used
as a building block to recursively solve the $k$-server problem on
an HST, provided the HST is sufficiently well-separated. Roughly
speaking, in their construction, each internal node of the HST
runs an instance of the allocation problem that determines how to
distribute the available servers among its children nodes.
Starting from the root, which has $k$ servers, the recursive calls
to the allocation problem determine the number of servers at each
leaf of the HST, giving a valid $k$-server solution. The guarantee
of this $k$-server solution depends on both the guarantees for the
allocation problem, as well as the depth of the HST (i.e., the
number of levels of recursion). The guarantees obtained by
Cot\'{e} et al.~\cite{CMP08} for the allocation problem on a
metric space with two points allowed them to obtain an algorithm
for the $k$-server problem on a sufficiently well-separated {\em
binary} HST having a competitive ratio that is polylogarithmic in
$k$, $n$, and the diameter $\Delta$ of the underlying metric
space. Unfortunately, the fact that the HST has to be binary as
well as have a sufficiently good separation severely restricts the
metric spaces to which this algorithm can be applied.

Given the result of Cot\'{e} et al.~\cite{CMP08}, a natural
approach to establishing our result is coming up with  a
randomized algorithm having the required refined guarantees for
the allocation problem on an arbitrary number of points. However,
it is unclear to us how to obtain such an algorithm. Instead, we
pursue a more refined approach to solving the $k$-server problem
via the allocation problem. By doing so we are able to bypass the
need for a ``true" randomized algorithm for the allocation problem
and are able to work with a (much) weaker formulation. More
precisely, our result consists of three main parts.

\begin{enumerate}
\item We show that instead of obtaining a randomized algorithm for
the allocation problem, it suffices to obtain an algorithm for a
certain {\em fractional} relaxation of it.
Employing this relaxation makes the task of designing such a
fractional allocation algorithm easier than designing the version
of the allocation problem that was considered earlier. Next,
building upon the arguments in Cot\'{e} et al. \cite{CMP08}, we
show that a sufficiently good online algorithm for this fractional
allocation problem can be used as a building block to obtain a
good {\em fractional} solution to the $k$-server problem on an
HST. Finally, by proving that such a fractional $k$-server
solution can be rounded in an online randomized manner, while
losing only an $O(1)$ factor in the competitive ratio, we get a
reduction of the $k$-server problem to our version of the
fractional allocation problem.

An interesting feature of this reduction is that our fractional
relaxation is too weak to give anything better than an $O(k)$
guarantee for the (integral) allocation problem, since there are
instances on which any integral solution must pay $\Omega(k)$
times the fractional cost. Therefore, it is somewhat surprising
that even though our relaxation is unable to provide any
reasonable algorithm for the (integral) allocation problem, it
suffices to give a good guarantee for the (integral) $k$-server
problem.

\item As the next step, we design an online algorithm for the
fractional allocation problem with the refined guarantees required
in the above reduction. Our techniques here are inspired by the
ideas developed recently in the context of the caching with costs
problem \cite{BBN10a} and weighted paging \cite{BBN07}.
However, while these previous algorithms were designed and
described using the online primal-dual framework, our algorithm is
combinatorial. To analyze the performance we employ a novel
potential function approach.

By plugging the algorithm for the fractional allocation problem
into the above reduction, we get a (roughly) $O(\ell
\log(k\ell))$-competitive algorithm for the $k$-server problem on
an HST of depth $\ell$, provided that the HST is sufficiently
well-separated.

\item Finally, we note that the competitive guarantee provided by
the above $k$-server algorithm depends on the depth $\ell$ of the
HST we are working with and, as $\ell$ can be $\Omega(\log
\Delta)$, this guarantee can be polylogarithmic in $\Delta$.
Therefore, as $\Delta$ can be $2^{\Omega(n)}$, this would lead to
competitiveness that is even polynomial in $n$.\footnote{To see an
example when this is the case, one could consider a metric space
corresponding to $n$ points on a line that are spaced at
geometrically increasing distances.} To deal with this issue, we
define a weighted version of an HST in which the edge lengths on
any root-to-leaf path still decrease at (at least) an exponential
rate, but the lengths of the edges from a given node to its
children could be non-uniform. We prove that any HST can be
transformed to a weighted HST of depth $\ell=O(\log n)$ while
incurring only an $O(1)$ distortion in leaf-to-leaf distances. We
then show that our previous ideas can be applied to
 weighted HSTs as well. In particular, our online
fractional allocation algorithm is actually developed for a
weighted star metric (instead of a uniform one), and, as we show,
it can be employed in our reduction to obtain a fractional
$k$-server algorithm on a weighted HST. The fractional $k$-server
algorithm can again be rounded to a randomized algorithm with only
an $O(1)$ factor loss. Since $\ell$ is now $O(\log n)$ and thus
does not depend on $\Delta$, it gives us an overall guarantee
which is polylogarithmic only in $n$ and $k$.

\end{enumerate}

In Section \ref{s:overview} we describe the above ideas more
formally and also give an overview of the paper.

\subsection{Preliminaries}

We provide definitions and concepts that will be needed in the paper.

\paragraph{Hierarchically well-separated trees.}
Hierarchical well-separated trees (HST-s), introduced by Bartal
\cite{Ba96,Ba98}, is a metric embedding technique in which a
general metric is embedded into a probability distribution defined
over a set of structured trees (the HST-s). The points of the
metric are mapped onto the leaves of the HST, while internal tree
nodes represent clusters. The distances along a root-leaf path
form a geometric sequence, and this factor is called the {\em
stretch} of the HST. An HST with stretch $\sigma$ is called a
$\sigma$-HST. The embedding guarantees that the distance between
any pair of vertices in the metric can only increase in an HST,
and the expected blowup of each distance, known as the {\em
distortion}, is bounded. It is well known that any  metric on $n$
points can be embedded into a distribution over $\sigma$-HSTs with
distortion $O(\sigma \log_{\sigma} n)$ \cite{FRT03}. This approach
of embedding into HSTs is particularly useful for problems (both
offline and online) which seem hard on a general metric, but can
be solved fairly easily on trees (or HSTs).

Due to the special structure of HSTs, the task of solving problems
on them can sometimes be reduced to the task of solving a more
general (and thus harder) problem, but on a uniform metric. For
example, this approach was used to obtain the first
polylogarithmic guarantees for the {\em metrical task systems}
problem (MTS) by \cite{BBBT} (later further refined by \cite{FM}).
More precisely, Blum et al.~\cite{BBBT} defined a refined version
of MTS on a uniform metric known as {\em unfair}-MTS and showed
how an algorithm with a certain refined guarantee for it can be
used recursively to obtain an algorithm for MTS on an HST. This
approach is especially appealing in the context of the $k$-server
problem, as this problem on a uniform metric (i.e.~paging) is
well- understood. This motivated Cot\'{e} et al.~\cite{CMP08} to
define a problem on a uniform metric, that we call the allocation
problem, and show how a good algorithm for it can be used to
design good $k$-server algorithms on HSTs. This problem is defined
as follows.

\paragraph{The allocation problem.}
Suppose that a metric on $d$ points is defined by a weighted star
in which the distance from the center to each point $i$, $1 \leq i
\leq d$, is $w_i$.\footnote{Even though Cot\'{e} et al.
\cite{CMP08} considered the allocation problem on a uniform
metric, we find it useful to work with the more general
weighted-star metric version of this problem.}  At time step $t$,
the total number of available servers, $\kappa(t) \leq k$, is
specified, and we call the vector
$\kappa=(\kappa(1),\kappa(2),\ldots)$ the {\em quota pattern}. A
request arrives at a point $\ooi$ and it is specified by a
$(k+1)$-dimensional vector $\vec{h}^t=(h^t(0), h^t(1), \ldots,
h^t(k))$, where $h^t(j)$ denotes the cost of serving the request
using $j$ servers. The cost vectors at any time are guaranteed to
satisfy the following {\em monotonicity} property: for any $0\leq
j \leq k-1$, the costs satisfy $h^t(j) \geq h^t(j+1)$. That is,
serving a request with more servers cannot increase the cost. Upon
receiving a request, the algorithm may choose to move additional
servers from other locations to the requested point and then serve
it. The cost is divided into two parts. The {\em movement cost}
incurred for moving the servers, and the {\em hit cost} determined
by the cost vector and the number of servers at location $\ooi$.

In this paper, we will be interested in designing algorithms for
(a fractional version of) this problem that provide a certain
refined competitive guarantee. Namely, we say that an online
algorithm for the allocation problem is {\em
$(\theta,\gamma)$-competitive} if it incurs:
\begin{itemize}
\item a hit cost of at most $\theta \cdot (\optcost+\Delta\cdot
g(\kappa))$; \item a movement cost of at most $\gamma \cdot
(\optcost+\Delta\cdot g(\kappa))$,
\end{itemize}
where $\optcost$ is the total cost (i.e., hit cost plus movement
cost) of an optimal solution to a given instance of the allocation
problem, $g(\kappa):=\sum_{t} |\kappa(t)-\kappa(t-1)|$ is the
total variation of the server quota pattern, and $\Delta$ is the
diameter of the underlying metric space.

\paragraph{From allocation to $k$-server.}
Cot\'{e} et al.~\cite{CMP08} showed that a
$(1+\varepsilon,\beta)$-competitive online algorithm for the
allocation problem on $d$ points -- provided $\eps$ is small
enough and $\beta = O_{\eps}(\textrm{polylog}(d,k))$ -- can be
used to obtain a polylogarithmic competitive algorithm for the
$k$-server problem on general metrics. In particular, the next
theorem follows from their work and is stated explicitly in
\cite{BBN10a}.
\begin{thm}\label{thm:mey}
Suppose there is a $(1+\eps,\beta)$-competitive algorithm for the
allocation problem on a uniform metric on $d$ points. Let $H$ be
an $\sigma$-HST with depth $\ell$. Then, for any $\eps \leq
1$, there is an $ O(\beta \gamma^{\ell+1}/(\gamma-1))$-competitive
algorithm for the $k$-server problem on $H$, where
$$ \gamma = (1+\varepsilon)\left(1+\frac{3}{\sigma}\right) + O\left(\frac{\beta}{\sigma}\right).$$
Setting $\eps=1/\ell$, this gives an $O(\beta \ell)$-competitive
algorithm on $\sigma$-HSTs, provided the HST separation parameter
$\sigma$ is at least $\beta \ell$.
\end{thm}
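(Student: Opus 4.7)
The plan is to proceed by induction on the HST depth $\ell$. Let $A_\ell$ denote the algorithm built for $\sigma$-HSTs of depth at most $\ell$ and let $C_\ell$ be its competitive ratio. The base case $\ell=0$ is trivial (a single leaf). For the inductive step, consider a depth-$\ell$ HST $H$ whose root $r$ has children whose subtrees are $T_1,\ldots,T_d$, each a $\sigma$-HST of depth at most $\ell-1$. In a $\sigma$-HST all root-to-child edges share a common length $L$, and the geometric decay of edge lengths implies that the diameter of every $T_i$ is at most $\tfrac{2L}{\sigma-1}$. Since the recursion needs to support a time-varying server budget inside each subtree, it is cleanest to state the inductive claim not for plain $k$-server but for the allocation problem on an HST, i.e., $k$-server subject to a changing quota.

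The algorithm $A_\ell$ has two layers. At the root, it runs the given $(1+\eps,\beta)$-competitive allocation algorithm $B$ on the $d$-point uniform metric induced by the children (pairwise distance $2L$), using the root-level quota (here a constant $k$, so $g=0$). When a request arrives at a leaf in subtree $T_i$ it is forwarded to $B$ as a request at point $i$ with a hit cost vector $\vec h^{\,t}$ defined by letting $h^t(j)$ be the incremental offline optimum for serving the restricted stream in $T_i$ subject to always having at least $j$ servers present; this is monotone in $j$ as required. Inside each $T_i$, an instance of $A_{\ell-1}$ is run recursively with time-varying quota $\kappa^B_i(t)$ dictated by $B$. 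Whenever $B$ changes its allocation, $A_\ell$ physically moves servers across the root at cost $2L$ per unit of allocation change.

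For the analysis, let $\opt$ be the optimum cost on $H$ and let $\kappa^*(t)$ be OPT's server distribution across subtrees. By the choice of $\vec h^{\,t}$, one shows $\opt^{\mathrm{alloc}}\le O(\opt)$, since $\sum_t h^t(\kappa^*_{i(t)}(t))$ is bounded by OPT's cost restricted to the subtrees and the allocation-level movement of OPT is bounded by OPT's cross-root movement. The $(1+\eps,\beta)$ guarantee of $B$, applied with $\Delta=2L$ and $g=0$ at the root, then yields hit cost $\le(1+\eps)\opt^{\mathrm{alloc}}$ and movement cost $\le\beta\,\opt^{\mathrm{alloc}}$. The movement cost of $B$ directly pays for root-level moves of $A_\ell$; the hit cost of $B$, by definition of $\vec h^{\,t}$, upper bounds (up to the standard HST stretching $(1+3/\sigma)$) the aggregate offline optimum $\sum_i \opt^{T_i}(\kappa_i^B)$, against which the inductive hypothesis bounds the recursive cost of $A_{\ell-1}$ in each $T_i$ by $C_{\ell-1}\bigl(\opt^{T_i}(\kappa_i^B)+\Delta_i\,g(\kappa_i^B)\bigr)$. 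The quota variation $g(\kappa_i^B)$ is bounded by $B$'s movement cost divided by $\Theta(L)$, so the $\Delta_i g(\kappa_i^B)$ slack contributes an additive $O(\beta/\sigma)\cdot\opt$ term. Combining gives the recurrence
\begin{equation*}
C_\ell\ \le\ \gamma\,C_{\ell-1}+O(\beta),\qquad \gamma=(1+\eps)\!\left(1+\tfrac{3}{\sigma}\right)+O(\beta/\sigma),
\end{equation*}
whose solution is $C_\ell=O\!\bigl(\beta\,\gamma^{\ell+1}/(\gamma-1)\bigr)$. Setting $\eps=1/\ell$ makes $(1+\eps)^\ell=O(1)$, and taking $\sigma\ge\beta\ell$ makes $\gamma^\ell=O(1)$, so the final bound collapses to $O(\beta\ell)$.

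The main obstacle is the correct design and accounting of the hit cost vectors $\vec h^{\,t}$: they must simultaneously be monotone, dominate the recursive cost produced by $A_{\ell-1}$ inside each $T_i$ in a form compatible with the inductive hypothesis, and stay small enough that $\opt^{\mathrm{alloc}}\le O(\opt)$. A second delicate point is charging the quota variation $g(\kappa_i^B)$ against the movement cost of $B$, since a single allocation shift at the root affects two subtrees at once; this is exactly what forces the $O(\beta/\sigma)$ term in $\gamma$ and the requirement $\sigma\ge\beta\ell$ needed for a polylogarithmic final guarantee.
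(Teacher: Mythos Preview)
Your proposal is correct and follows essentially the same approach as the paper. Note that the paper does not actually prove Theorem~\ref{thm:mey} directly---it is cited as following from Cot\'{e} et al.\ and stated explicitly in \cite{BBN10a}---but the paper does prove the closely analogous Theorem~\ref{th:rec} (for the fractional allocation problem on weighted HSTs) using exactly the inductive structure you describe: induction on depth, hit cost vectors $h^t_{p_i}(j)=\optcost(p_i,j\cdot\vec{1},t)-\optcost(p_i,j\cdot\vec{1},t-1)$ at the root, recursive quota patterns dictated by the root-level allocation, and the recurrence $\beta_\ell=\gamma\beta_{\ell-1}+O(\beta)$ with the same $\gamma$.

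One small imprecision: your hit cost is phrased as ``at least $j$ servers present,'' whereas the paper (and Cot\'{e} et al.) use ``exactly $j$ servers''; monotonicity makes this harmless. More substantively, the step you flag as the main obstacle---that $\sum_t h^t(\kappa_i^B(t))$ approximates $\optcost(T_i,\kappa_i^B)$ to within an additive $O\bigl(W(p_i)\,g(\kappa_i^B)/(\sigma-1)\bigr)$---is precisely Lemma~\ref{l:optcost_to_hc} in the paper, whose proof (Appendix~\ref{app:mey}) rests on a quasi-convexity property of the work function due to Cot\'{e} et al.\ (their Corollary~2, restated here as Lemma~\ref{l:qconvex}). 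This is the nontrivial ingredient that makes both directions of your accounting (``$\opt^{\mathrm{alloc}}\le O(\opt)$'' and ``hit cost of $B$ upper bounds the recursive optima'') go through, and it is where the $(1+3/\sigma)$ factor originates.
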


At a high level, the $k$-server algorithm in Theorem \ref{thm:mey}
is obtained as follows.  Each internal node $p$ in the HST runs an
instance of the allocation problem on the uniform metric formed by
its children. In this instance, the cost vectors appearing at a
child $i$ are guided by the evolution of the cost of the optimal
solution to the instance of the $k$-server problem restricted to
the leaves of the subtree that is rooted at $i$. Furthermore, the
quota patterns for each of the allocation problem instances is
determined recursively. The root of the tree has a fixed server
quota of $k$, and the quota corresponding to a non-root node $i$
is specified by the number of servers that are allocated to $i$ by
the instance of the allocation problem run at the parent of $i$.
The distribution of the servers on the leaves of the tree is
determined in this manner, thus leading to a solution to the
$k$-server problem. The overall guarantee in Theorem \ref{thm:mey}
follows roughly by showing that the hit cost guarantee of
($1+\eps$) multiplies at each level of the recursion, while
the movement cost guarantee of $\beta$ adds up.

\paragraph{Weighted HSTs.} Note that the guarantee in
Theorem \ref{thm:mey} depends on $\ell$, the depth of the
$\sigma$-HST, which in general is $\Theta(\log_\sigma \Delta)$. To
avoid the resulting dependence on the diameter $\Delta$ that can
be as large as $2^{\Omega(n)}$, we introduce the notion of a {\em
weighted} $\sigma$-HST. A weighted $\sigma$-HST is a tree having
the property that for any node $p$, which is not the root or a
leaf, the distance from $p$ to its parent is at least $\sigma$
times the distance from $p$ to any of its children. Thus, unlike
an HST, distances from $p$ to its children can be non-uniform. The
crucial property of weighted HSTs that we will show later is that
any $\sigma$-HST $T$  with $O(n)$ nodes can be embedded into a
weighted $\sigma$-HST with depth $O(\log n)$, such that the
distance between any pair of leaves of $T$ is distorted by a
factor of at most $2 \sigma/(\sigma-1)$ (which is $O(1)$ if, say,
$\sigma\geq 2$). Reducing the depth from $O(\log \Delta)$ to
$O(\log n)$ allows us to replace the factor of $\log \Delta$ by
$\log n$ in the bound on the competitive factor we get for the
$k$-server problem.

\paragraph{Fractional view of randomized algorithms.}
The relation between randomized algorithms and their corresponding
fractional views is an important theme in our paper. By
definition, a randomized algorithm is completely specified by the
probability distribution over the configurations (deterministic
states) at each time step of the algorithm. However, working
explicitly with such distributions is usually very cumbersome and
complex, and it is often simpler to work with a {\em fractional
view} of it. In a fractional view, the algorithm only keeps track
of the marginal distributions on certain quantities, and specifies
how these marginals evolve with time. Note that there are multiple
ways to define a fractional view (depending on which marginals are
tracked). For example, for the $k$-server problem on an HST, the
fractional view might simply correspond to specifying the
probability $p_i$ of having a server at leaf $i$ (instead of
specifying the entire distribution on the $k$-tuples of possible
server locations). Clearly, the fractional view is a lossy
representation of the actual randomized algorithm. However, in
many cases (though not always), a fractional view can be converted
back to a randomized algorithm with only a small loss. We now
describe the fractional views we employ for the two main problems
considered in this paper.

\paragraph{Fractional view of the $k$-server problem on an HST.}

Let $T$ be a $\sigma$-HST. For a node $j \in T$, let $T(j)$ be the
set of leaves in the subtree rooted at $j$. In the fractional
view, at each time step $t$, the probability of having a server at
leaf $i$, denoted by $p^t_i$, is specified. Upon getting a request
at leaf $i$ at time $t$, a fractional algorithm must ensure that
$p^t_i=1$. Let the expected number of servers at time $t$ at
leaves of $T(j)$ be denoted by $k^t(j) = \sum_{i\in T(j)}
p^t_{i}$. Clearly, the movement cost incurred at time $t$ is $
 \sum_{j \in T} W(j) |k^{t}(j) - k^{t-1}(j)|,$ where
$W(j)$ is the distance from $j$ to its parent in $T$.

It is easy to verify that the cost incurred by any randomized
algorithm is at least as large as the cost incurred by its induced
fractional view. Conversely, it turns out that the fractional view
is not too lossy for a $\sigma$-HST  (provided $\sigma>5$). In
particular, in Section \ref{s:rounding} we show that for a
$\sigma$-HST ($\sigma>5$), an online algorithm for the $k$-server
problem in the fractional view above can be converted into an
online randomized algorithm, while losing only an $O(1)$ factor in
the competitive ratio.

\paragraph{The fractional allocation problem.}
For the allocation problem we consider the following fractional
view. For each location $i \in [d]$, and all possible number of
servers $j \in \{0,\ldots,k\}$, there is a variable $x^t_{i,j}$
denoting the probability of having {\em exactly} $j$ servers at
location $i$ at time $t$. For each time $t$, the variables
$x^t_{i,j}$ must satisfy the following constraints.

\begin{enumerate}
\item For each location $i$, the variables $x^t_{i,j}$ specify a
probability distribution, i.e., $\sum_j x^t_{i,j}=1$ and each $x_{i,j}^t$ is non-negative.
 \item The
number of servers used is at most $\kappa(t)$, the number of available
servers. That is,
 $$\sum_i \sum_j  j \cdot x^t_{i,j} \leq  \kappa(t).$$
\end{enumerate}
At time step $t$, when cost vector $h^t$ arrives at location
$\ooi$, and possibly $\kappa(t)$ changes, the algorithm can change
its distribution from $\ox^{t-1}$ to $\ox^t$ incurring a hit cost
of $\sum_j h^t(j) x^t_{\ooi,j}$. The movement cost incurred is
defined to be
\begin{eqnarray}
\sum_{i} w_i \sum_{j=1}^{k}  \left| \sum_{j' < j} x^t_{i,j'} -
\sum_{j'<j} x^{t-1}_{i,j'} \right|. \label{eq:move_cost}
\end{eqnarray}

{\em Remark:} Note that when our configurations are integral, this
quantity is exactly the cost of moving the servers from
configuration $\ox^{t-1}$ to configuration $\ox^t$. %
In the fractional case, each term in the outermost sum can be seen as equal to the {\em earthmover distance} between the probability vectors
$(x_{i,0}^{t-1},\ldots,x_{i,k}^{t-1})$ and $(x_{i,0}^{t},\ldots,x_{i,k}^{t})$ with respect to a linear metric defined on
$\{0,1, \ldots, k \}$.
The earthmover distance is the optimal solution to a
transportation problem in which $x^{t-1}$ is the supply vector,
$x^t$ is the demand vector, and the cost of sending one unit of
flow between $x^{t-1}_{i,j}$ and $x^t_{i,j'}$ is $w_i\cdot
|j-j'|$, since $|j-j'|$ is the change in number of servers
resulting from sending this unit of flow. It is not hard to
see\footnote{Using uncrossing arguments on the optimal
transportation solution.} that in the case of a linear metric, the
optimal solution to the transportation problem (up to a factor of
2) is captured by (\ref{eq:move_cost}).

\paragraph{A gap instance for the fractional allocation problem.}
As mentioned earlier, unlike the fractional view of the $k$-server
problem presented above, the fractional view of the allocation
problem turns out to be too weak to yield a randomized algorithm
for its integral counterpart. We thus present an instance of the
allocation problem for which the ratio between the cost of any
integral solution and the cost of an optimal fractional solution
is $\Omega(k)$. However, we stress that even though the fractional
view fails to approximate the integral allocation problem, we are
still able to use it to design a fractional (and, in turn,
integral) solution to the $k$-server problem. In particular, we
show in Section \ref{s:alloc_to_kserver} that Theorem
\ref{thm:mey} holds even when we substitute the randomized
algorithm for the allocation problem with the fractional
algorithm.

Let us consider a uniform metric space over $d=2$ points, and
consider an instance of the allocation problem in which exactly
$\kappa(t)=k$ servers are available at each time. Furthermore, at
each odd time step $1,3,5,\ldots$, the cost vector
$h=(1,1,...,1,0)$ arrives at location $1$, and at each even time
step $2,4,6,\ldots$, the vector $h'=(1,0,0,...,0)$ arrives at
location $2$.

We show that any integral solution to this instance of the
allocation problem must incur a high cost, while there is an
$\Omega(k)$ times cheaper solution in the fractional view.

\begin{claim} Any solution to the instance above incurs a cost of $\Omega(T)$ over $T$ time steps.
\end{claim}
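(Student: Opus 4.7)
The plan is to amortize the cost of any integral algorithm over consecutive pairs of odd--even time steps. Let $a_t \in \{0,1,\ldots,k\}$ denote the integer number of servers the algorithm places at location~$1$ at time~$t$, so location~$2$ holds $k-a_t$ servers (using $\kappa(t)=k$). Given the cost vectors $h=(1,\ldots,1,0)$ at odd times and $h'=(1,0,\ldots,0)$ at even times, the hit cost at an odd step~$t$ is $1$ precisely when $a_t<k$, while the hit cost at an even step~$t$ is $1$ precisely when $a_t=k$. These two ``good'' configurations are mutually exclusive, which is the source of the tension we will exploit.

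For each pair of consecutive odd--even steps $(2i-1,2i)$, I would show that the combined hit and movement cost attributed to these two steps is at least a positive constant. Split into cases. If $a_{2i-1}=a_{2i}$, then exactly one of the two hit costs equals $1$: the odd one when this common value is less than $k$, and the even one when it equals $k$. If $a_{2i-1}\neq a_{2i}$, then the algorithm must move at least one integral server between locations $1$ and $2$ between these consecutive steps, incurring movement cost at least $2w>0$, where $w$ is the common weight of the star in the uniform metric. Setting $c:=\min(1,2w)$, this gives at least $c$ units of cost per pair.

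Summing over $i=1,\ldots,\lfloor T/2\rfloor$ yields a total cost of at least $c\lfloor T/2\rfloor = \Omega(T)$, as required. There is no real technical obstacle here: the only points to verify are that each movement-cost term $|a_{2i}-a_{2i-1}|$ is charged to a single disjoint pair (so no double counting occurs), and that integrality enters exactly once, namely in the lower bound $|a_{2i}-a_{2i-1}|\geq 1$ when the two values differ. The conceptual content of this gap example lies on the other side of the comparison---exhibiting a fractional solution of cost $O(T/k)$ (by moving, say, a fraction $1/k$ of a server back and forth between the two locations)---which is a separate argument from what this particular claim asserts.
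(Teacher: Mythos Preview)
Your proof is correct and matches the paper's argument: pair up consecutive odd--even time steps and observe that in each pair either a hit cost of $1$ is incurred or at least one integral server must move. (Your closing parenthetical mischaracterizes the fractional side of the gap, though: the paper's $O(T/k)$ solution is \emph{static}, with $x_{1,0}=1/k$, $x_{1,k}=1-1/k$, $x_{2,1}=1$ and zero movement cost --- but that is outside the scope of this claim.)
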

\begin{proof}
Observe that the hit cost can be avoided at location $1$ only if
it contains $k$ servers, and it can be avoided at location $2$
only if it contains at least one server. Thus, any algorithm that
does not pay a hit cost of at least $1$ during any two consecutive
time steps, must move at least one server between locations 1 and
2, incurring a movement cost of at least $1$, concluding that the
cost is $\Omega(T)$.
\end{proof}
\begin{claim} There is a solution in the fractional  view of cost $O(T/k)$ over $T$ time steps.
\end{claim}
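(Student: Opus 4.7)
The plan is to exhibit an explicit static fractional distribution and verify that it is feasible and pays only $O(T/k)$ total cost. Because the distribution will not change with $t$, the movement cost will vanish automatically, so the entire analysis reduces to bounding hit costs. The key conceptual point I want to exploit is that the fractional relaxation only constrains the \emph{marginal} expected number of servers $\sum_i \sum_j j \, x^t_{i,j} \le \kappa(t)$: there is no joint distribution coupling the two locations, so mass can effectively be ``double-counted'' across locations in a way that no integral configuration respecting the budget can imitate.

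Concretely, at every time step $t$ I would take
\[
x^t_{1,k} = 1 - \tfrac{1}{k}, \qquad x^t_{1,0} = \tfrac{1}{k}, \qquad x^t_{2,1} = 1,
\]
with all remaining entries equal to zero. First I would check feasibility: the marginals at locations $1$ and $2$ are non-negative and sum to $1$, and the expected total number of servers is $(1-\tfrac{1}{k}) \cdot k + 0 \cdot \tfrac{1}{k} + 1 \cdot 1 = k = \kappa(t)$, so the quota constraint holds (with equality). Next I would compute the per-step cost: since $\ox^t$ is independent of $t$, the earthmover expression in (\ref{eq:move_cost}) is zero at every step, so there is no movement cost at all. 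At odd times the request at location $1$ has cost vector $h=(1,\ldots,1,0)$, so the hit cost is $\sum_{j<k} x^t_{1,j} = 1/k$; at even times the request at location $2$ has cost vector $h'=(1,0,\ldots,0)$, so the hit cost is $x^t_{2,0} = 0$. Summing over $T$ steps yields total cost at most $T/k = O(T/k)$, as claimed.

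There is no real technical obstacle here; the only point worth flagging is conceptual. The distribution above cannot arise as the marginal of any probability distribution over integral server placements that always respect $\kappa(t)=k$: with probability $1 - 1/k$ we simultaneously place ``$k$ servers at location $1$'' and ``$1$ server at location $2$'', overshooting the budget by one. This is precisely what lets the fractional solution be a factor of $k$ cheaper than every integral one, and together with the previous claim it exhibits the advertised $\Omega(k)$ integrality gap for the fractional allocation relaxation.
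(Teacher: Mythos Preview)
Your proof is correct and essentially identical to the paper's: you exhibit the same static fractional distribution $x^t_{1,0}=1/k$, $x^t_{1,k}=1-1/k$, $x^t_{2,1}=1$, verify feasibility, and compute the hit cost of $1/k$ per odd step with zero movement cost. Your additional paragraph explaining why this distribution cannot arise from any integral joint placement is a nice conceptual supplement that the paper does not make explicit at this spot.
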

\begin{proof}
Consider the following solution in the fractional view. At each
time step $t$, let:
$$ x^t_{1,0}= \frac{1}{k}, \quad \quad  x^t_{1,k}=1-\frac{1}{k}, \qquad \textrm {and} \qquad   x^t_{2,1}=1.$$
Note that this solution satisfies all the constraints in the
fractional view. Since location $2$ always has a server, it never
pays any hit cost. Moreover, location $1$ has fewer than $k$
servers with probability $1/k$, it thus incurs only a hit cost
$1\cdot x_{1,0} = 1/k$ at every odd time step. Also, as the
solution does not change over time, the movement cost is $0$.
\end{proof}

\section{Overview of Our Approach}
\label{s:overview}

In this section we give a formal description of our results,
outline how they are organized, and discuss how they fit together
so as to obtain our main result.

\paragraph{Fractional allocation algorithm.} In Section \ref{s:frac} we consider the
fractional allocation problem on a weighted star, and prove the
following theorem.

\begin{thm}\label{thm:frac_ov}
For any $\eps>0$, there exists a {\em fractional} $(1+\eps, O(\log (k/\eps)))$-competitive allocation algorithm on a weighted star metric.
\end{thm}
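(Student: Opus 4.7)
My plan is to reparameterize the problem using the cumulative fractional variables $y^t_{i,j} := \sum_{j' \ge j} x^t_{i,j'}$, i.e.\ the probability of having at least $j$ servers at location $i$. In these coordinates the quota constraint becomes $\sum_i \sum_{j\ge 1} y^t_{i,j} \le \kappa(t)$, the ``nesting'' constraint is $y^t_{i,j}\ge y^t_{i,j+1}$, the movement cost is exactly $\sum_i w_i\sum_{j\ge 1}|y^t_{i,j}-y^{t-1}_{i,j}|$, and a one-line Abel summation, using the monotonicity gaps $\delta^t_j := h^t(j{-}1)-h^t(j)\ge 0$, rewrites the hit cost at the requested location as
\[
h^t(k)\;+\;\sum_{j=1}^{k}\delta^t_j\,\bigl(1-y^t_{\ooi,j}\bigr).
\]
Thus the instantaneous cost decomposes over the $k$ ``levels'', and each level in isolation behaves like a weighted-paging instance with weights $w_i$ on the locations; the levels are coupled only through the shared quota and through the nesting constraint.

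\textbf{The algorithm.} I would then design a combinatorial, continuous-time algorithm in the spirit of the entropic / multiplicative-weights schemes that work for weighted paging. When the cost vector $h^t$ arrives at $\ooi$, process it as a stream of infinitesimal level-$j$ costs of rate $\delta^t_j$: at each moment raise $y^t_{\ooi,j}$ at a rate proportional to $\delta^t_j\cdot(y^t_{\ooi,j}+\eps/k)$ and, simultaneously, decrease each $y^t_{i',j}$ for $i'\ne\ooi$ at the multiplicative rate $\propto(y^t_{i',j}+\eps/k)$ with the common Lagrange coefficient chosen so that the quota $\sum y^t_{i,j}\le\kappa(t)$ stays saturated. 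After each infinitesimal step, project onto the nesting polytope (by ``pulling up'' $y^t_{\ooi,j'}$ with $j'<j$ and symmetrically ``pulling down'' $y^t_{i',j''}$ with $j''>j$ at the other locations). The $\eps/k$ offset inside the logarithm is precisely what both bounds the resulting logarithmic factor by $O(\log(k/\eps))$ \emph{and} lets us raise $y^t_{\ooi,j}$ aggressively enough that hit cost is charged at a $(1+\eps)$ rate rather than a constant-factor rate.

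\textbf{Analysis by a potential function.} Fix an optimal solution with cumulative variables $y^*_{i,j}$ and use a generalized-KL (Bregman) potential
\[
\Phi^t \;=\; C \sum_{i,\,j\ge 1} w_i\Bigl[\,y^*_{i,j}\ln\frac{y^*_{i,j}+\eps/k}{y^t_{i,j}+\eps/k}\;+\;y^t_{i,j}-y^*_{i,j}\,\Bigr],
\]
with $C=\Theta(\log(k/\eps))$. The analysis then has the three standard parts. (a) While the algorithm processes a cost vector and $y^*$ is fixed, the instantaneous hit-cost-rate plus $C$ times movement-cost-rate of the algorithm, plus $\dot\Phi^t$, is at most $(1+\eps)$ times OPT's instantaneous hit-cost-rate at $\ooi$; this is a level-by-level entropy calculation that exploits both the multiplicative $(y+\eps/k)$ rates and the $\eps/k$ offset, and it is exactly the step that yields the refined $(1+\eps,O(\log(k/\eps)))$ guarantee. (b) When OPT moves, the resulting jump in $\Phi^t$ is bounded by $O(\log(k/\eps))$ times OPT's movement cost using $\ln\frac{y^*+\eps/k}{y+\eps/k}\le\ln(k/\eps)$ on the affected levels. (c) When $\kappa(t)$ changes, the forced renormalization is charged to the additive $\Delta\cdot g(\kappa)$ slack in the definition of $(\theta,\gamma)$-competitive.

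\textbf{Main obstacle.} The principal difficulty, and what distinguishes this from ordinary weighted paging, is the nesting constraint $y_{i,j}\ge y_{i,j+1}$: raising $y^t_{\ooi,j}$ at one level forces cascading lifts at all lower levels of $\ooi$, and the symmetric cascade of lowerings propagates to higher levels at the other locations. The novel combinatorial ingredient is to orchestrate these cascades so that the KL gain from the level whose cost is actually being paid dominates the KL losses on the piggy-backed levels, i.e.\ so that the level-by-level entropy inequality in (a) survives the projection onto the nesting polytope. A secondary subtlety, which needs care but does not affect the headline factors, is packaging each change in $\kappa(t)$ as a one-shot $O(\Delta)$-bounded rearrangement whose potential jump is matched by the corresponding term in $\Delta\cdot g(\kappa)$.
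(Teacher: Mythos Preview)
Your overall framework---cumulative variables, a multiplicative/entropic update with an $\eps/k$ offset, and a KL-type potential---is exactly the paper's scaffolding. But the execution has two genuine gaps.

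\medskip
\textbf{Wrong multiplicative direction.} You put the rate on the \emph{presence} variable, $\dot y_{i,j}\propto (y_{i,j}+\eps/k)$, whereas the paper (and the weighted-paging literature) puts it on the \emph{absence} variable: they work with $y^{\mathrm{pap}}_{i,j}=\Pr[<j\text{ servers}]=1-y_{i,j}$ and use $\dot y^{\mathrm{pap}}\propto (y^{\mathrm{pap}}+\beta)$. With your direction you add servers slowly precisely when you have few ($y_{\ooi,j}\approx 0\Rightarrow$ rate $\approx\eps/k$) and evict fastest from locations that are fully occupied---the opposite of the marking intuition. This is not cosmetic: your inequality (a) already fails on two points with $k=1$, $w_1=w_2=1$. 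Take $y_1=y_2=\tfrac12$, request at location $1$ with $\delta_1=1$, and let OPT sit at location $2$, so $y^*_1=0$, $y^*_2=1$, $H^*=1$. Then $N=\alpha/2$ and a direct computation gives $\dot\Phi=+C\alpha/2$, hence $H+CM+\dot\Phi\ge \tfrac12+C\alpha/2$, which exceeds $(1+\eps)H^*=1+\eps$ for any $C\alpha\ge 2$. If instead OPT sits at location $1$ (so $H^*=0$), the same computation forces $C\alpha$ to be large. No choice of $C$ rescues both. In the paper's direction the same scenario produces $\dot\Phi^m=(1+\eps)\alpha/2$, but that potential enters the hit-cost inequality scaled by $1/\alpha$, giving $\tfrac12+(1+\eps)/2\le 1+\eps$.

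\medskip
\textbf{One potential is not enough for the refined guarantee.} Even after fixing the direction, a single KL potential cannot deliver both halves of $(1+\eps,\,O(\log(k/\eps)))$. The logarithmic term forces the jump due to OPT's movement to be $\Theta(\log(k/\eps))\cdot\MovOPT$; if that same potential amortizes the hit cost you end up with $\HitAlg\le(1+\eps)\HitOPT+\Theta(\log(k/\eps))\,\MovOPT$, not $(1+\eps)\,\opt$. The paper decouples this with \emph{two} potentials and two inequalities: a logarithmic $\Phi^m$ for the movement bound (where the $\alpha=\log(k/\eps)$ blowup is acceptable) and a separate \emph{linear} potential $\Phi^h=\tfrac1\alpha\sum_{i,j}w_i\,y^{\mathrm{pap}}_{i,j}$ for the hit-cost bound, combined there with $\tfrac1\alpha\Phi^m$ so that OPT's movement contributes only a $(1+\eps)$ factor. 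Your choice $C=\Theta(\log(k/\eps))$ makes this worse, not better: then the jump from (b) is $\Theta(\log^2(k/\eps))\,\MovOPT$, contradicting your own claim in (b).

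\medskip
Finally, ``project onto the nesting polytope'' is where the real content hides. The paper's device is to \emph{merge} colliding blocks and replace $\lambda$ on a merged block by its average $\lambda(B)$; a prefix-average invariant \eqref{p:majorize} together with a Cauchy--Schwarz step shows the averaged cost vector $\lambda^\eta$ only underestimates OPT's hit cost and overestimates the algorithm's, so both amortized inequalities survive. A generic Euclidean-style projection does not give you this one-sidedness.
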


\paragraph{From allocation to $k$-server problem.}
In Section \ref{s:alloc_to_kserver} we show how the algorithm from
Theorem \ref{thm:frac_ov} can be used to obtain a fractional
$k$-server algorithm on a sufficiently well-separated weighted
HST. In particular, we show that:

\begin{thm}
\label{thm:alloc_to_kserver} Let $T$ be a weighted $\sigma$-HST of
depth $\ell$. If, for any $0\leq \eps \leq 1$, there exists a
$(1+\eps,\log (k/\eps))$-competitive algorithm for the fractional
allocation problem on a weighted star, then there is an $O(\ell
\log (k\ell))$-competitive  algorithm for the fractional
$k$-server problem on $T$, provided $\sigma = \Omega( \ell \log(k
\ell))$.
\end{thm}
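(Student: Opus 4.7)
The plan is to adapt the top-down recursive reduction of Cot\'e et al.~\cite{CMP08} described after Theorem~\ref{thm:mey} to (a) the fractional views of both the allocation problem and the $k$-server problem from the preliminaries, and (b) the weighted-HST setting. At every internal node $p$ of $T$ I would run an independent copy of the fractional allocation algorithm from the hypothesis, on the weighted star centred at $p$ whose locations are the children $c_1,\dots,c_d$ of $p$ with weights $w_i$ equal to the HST distances from $p$ to $c_i$. The quota $\kappa_p$ at an internal node $p$ is the (time-varying) number of servers the parent's allocation instance currently assigns to the location corresponding to $p$; the root has quota $k$ throughout. Composing the per-node distributions $x^t_{i,j}$ along root-to-leaf paths induces the fractional $k$-server distribution $p_i^t$ on the leaves.

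Next I would make the cost accounting explicit. When a request arrives inside $T(c_i)$, the cost vector $\vec h^t$ presented to the allocation instance at $p$ has $h^t(j)$ equal to the amortized cost that the recursive sub-algorithm at $c_i$ pays for that request given $j$ servers in $T(c_i)$ (so in particular $\opt_p^{\mathrm{alloc}}\le\opt_p$). Two identities then follow. First, the hit cost of the allocation at $p$ equals the total cost the recursive sub-algorithms inside the children pay, since $\sum_j h^t(j)\,x^t_{i,j}$ is precisely the expected cost at $c_i$ for the current request. Second, the movement cost of the allocation at $p$ dominates, up to a constant factor, the fractional $k$-server movement cost on the edges leaving $p$: the earth-mover expression in \eqref{eq:move_cost} at location $i$ is at least $w_i\,|k^t(c_i)-k^{t-1}(c_i)|$, which is exactly the fractional $k$-server movement charged to the edge $(p,c_i)$.

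The analysis now proceeds bottom-up by induction. Writing $\Alg_p$ for the algorithm's cost on $T(p)$, $\beta=O(\log(k/\eps))$ for the movement factor of the allocation algorithm, and $\Delta_p$ for the star-diameter at $p$, the $(1+\eps,\beta)$-guarantee at $p$ together with the two identities above yields
\[
\Alg_p \;\le\; \sum_{i}\Alg_{c_i}\;+\;\bigl((1+\eps)+\beta\bigr)\bigl(\opt_p\,+\,\Delta_p\,g(\kappa_p)\bigr).
\]
Unfolding this recursion over the $\ell$ levels, the multiplicative hit-cost slack becomes $(1+\eps)^\ell$ while the movement factor $\beta$ accumulates additively from level to level; choosing $\eps=1/\ell$ makes $(1+\eps)^\ell=O(1)$ and delivers an overall competitive ratio of $O(\beta\ell)=O(\ell\log(k\ell))$, which is the bound claimed in Theorem~\ref{thm:alloc_to_kserver}.

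The main obstacle, and the only place where the weighted-$\sigma$-HST hypothesis is used non-trivially, is absorbing the additive $\Delta_p\,g(\kappa_p)$ terms that appear at every internal node. The key observation is that $\kappa_p$ changes only when the parent of $p$ transfers fractional mass across the edge joining $p$ to its parent, so $g(\kappa_p)$ is at most the movement cost of the parent's allocation instance divided by the length $w$ of that edge. The weighted-HST property gives $w\ge\sigma\,\Delta_p$, so $\Delta_p\,g(\kappa_p)$ is at most a $(1/\sigma)$-fraction of the movement cost incurred one level up. Iterating this bound over the $\ell$ levels produces a geometric cascade with ratio $O(\beta/\sigma)$, and the hypothesis $\sigma=\Omega(\ell\log(k\ell))=\Omega(\beta\ell)$ makes the cascade contribute only a constant factor overall, exactly matching the $(1+\eps)(1+3/\sigma)+O(\beta/\sigma)$ bookkeeping of Theorem~\ref{thm:mey}.
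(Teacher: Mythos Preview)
Your high-level picture matches the paper's approach, but two load-bearing steps are missing and as stated the argument does not go through.

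First, the quota you feed to the allocation instance at a non-root node $p$ is not a single (time-varying) integer. The parent's fractional allocation outputs a \emph{distribution} $(x^t_{p,j})_j$ over server counts at $p$, not a number $\kappa_p(t)$. Your sentence ``the quota $\kappa_p$ \ldots\ is the number of servers the parent's allocation instance currently assigns'' elides this, and without resolving it the identity ``$\sum_j h^t(j)x^t_{i,j}$ equals the expected recursive cost at $c_i$'' has no meaning: there is no single sub-algorithm running with $j$ servers. The paper handles this by maintaining, at every node, a \emph{convex combination} $\Lambda_p^t$ of allocation instances indexed by entire quota histories, with weights kept consistent with the parent's $x^t_{i,j}$ (equation~\eqref{eq:lambda_consist}); Lemma~\ref{l:easypart} then shows the fractional $k$-server movement is bounded by the combined allocation movement. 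This bookkeeping is not cosmetic---it is what lets the fractional allocation relaxation, which is too weak to round pointwise, still drive a fractional $k$-server solution.

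Second, your choice of hit-cost vectors is problematic. You set $h^t(j)$ to be ``the amortized cost the recursive sub-algorithm at $c_i$ pays \ldots\ given $j$ servers'', but (i) this quantity depends on the entire quota history, not just on $j$, so it is not well-defined; (ii) even granting a definition, there is no reason these $h^t(j)$ are monotone non-increasing in $j$, which the allocation algorithm requires; and (iii) with these cost vectors the optimal allocation cost $\opt_p^{\mathrm{alloc}}$ is tied to the \emph{algorithm's} cost, not to $\opt_p$, so the inequality $\opt_p^{\mathrm{alloc}}\le\opt_p$ you invoke is circular. The paper instead defines $h^t_{p_i}(j)=\optcost(p_i,j\cdot\vec 1,t)-\optcost(p_i,j\cdot\vec 1,t-1)$, i.e.\ increments of the \emph{optimal} $k$-server cost with $j$ servers; monotonicity then holds, and the crucial link between $\sum_t h^t(\kappa(t))$ and $\optcost(p,\kappa)$ is supplied by Lemma~\ref{l:optcost_to_hc}, which is where the factor $1/(\sigma-1)$ and hence the $(1+3/\sigma)$ term in $\gamma$ actually come from. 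Your final paragraph gestures at this absorption, but without Lemma~\ref{l:optcost_to_hc} there is no bridge from allocation hit cost to $\optcost$, and the induction (Theorem~\ref{th:rec}) cannot be closed.
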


\paragraph{Putting it all together.}
We now show how to use Theorems \ref{thm:frac_ov} and
\ref{thm:alloc_to_kserver} to prove our $k$-server guarantee for
general metrics, i.e., to prove Theorem \ref{thm:main}.

To this end, we need two more results that we prove in Section \ref{s:combining}. First,
\begin{thm}
\label{thm:rounding_ov}
Let $T$ be a $\sigma$-HST with $\sigma>5$. Then any online fractional $k$-server algorithm on $T$ can be converted into
a randomized $k$-server algorithm on $T$ with an $O(1)$ factor loss in the competitive ratio.
\end{thm}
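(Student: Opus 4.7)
The plan is to construct an online, top-down rounding of the fractional solution into a random integral $k$-server configuration. For each node $j\in T$ at each time $t$, I maintain a non-negative integer random variable $K^t(j)$ equal to the number of servers currently placed at leaves of $T(j)$, so that $K^t(\mathrm{root})=k$, $K^t(i)\in\{0,1\}$ for every leaf $i$, the consistency $K^t(j)=\sum_{c} K^t(c)$ holds (with $c$ ranging over children of $j$), the marginal satisfies $E[K^t(j)]=k^t(j)$, and $K^t(i)=1$ whenever leaf $i$ is the current request. The leaf-level values then encode a random integral $k$-server configuration, which is the output of the randomized algorithm.

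For the rounding itself, I would draw once at the outset, for each internal node $j$, an independent uniform seed $U_j\in[0,1]$. At time $t$, given $K^t(j)$ and the fractional child values $k^t(c_1),\ldots,k^t(c_m)$ summing to $k^t(j)$, form the cumulative partial sums $S_i^t=\sum_{i'\leq i} k^t(c_{i'})$ and round via $K^t(c_i)=\lfloor S_i^t+U_j\rfloor-\lfloor S_{i-1}^t+U_j\rfloor$, with an adjustment so that the integer children sum to $K^t(j)$ rather than to an independent rounding of $k^t(j)$. This dependent-rounding trick yields non-negative integer children with the correct marginals, and enjoys a Lipschitz property: as the $k^t(c_i)$ evolve continuously, an integer count $K^t(c_i)$ toggles only when a shifted cumulative sum crosses an integer, so the expected number of toggles over a window is bounded by the total variation of the $S_i^t+U_j$'s.

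For the hit cost, at a request time $t$ at leaf $i$ the fractional algorithm guarantees $p_i^t=1$, so $k^t(i)=1$, and hence $K^t(i)\in\{0,1\}$ with expectation $1$ must equal $1$ almost surely; the randomized algorithm pays no hit cost. For the movement cost, I would charge $W(j)\cdot E[|K^t(j)-K^{t-1}(j)|]$ to each edge $j$ and decompose it into a ``local'' contribution bounded by $W(j)\cdot|k^t(j)-k^{t-1}(j)|$ --- which matches the fractional movement cost at edge $j$ directly --- plus a ``cascade'' contribution arising when a change in some ancestor's integer count $K^t(a)$ forces re-rounding at descendants through the shared seeds. Using the $\sigma$-HST property $W(j)\leq W(a)/\sigma^{\mathrm{depth}(j)-\mathrm{depth}(a)}$, the total cascade cost due to a single unit change at $a$, summed over all its descendants, is at most $W(a)\sum_{d\geq 1}(c/\sigma)^d$ for some absolute constant $c$ bounding the expected branching of the cascade per level; for $\sigma>5$ this geometric series converges to a constant fraction of $W(a)$, which combined with the local bound yields the claimed $O(1)$-factor loss.

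The main obstacle will be verifying that the cascade branches by only a bounded absolute constant $c$ per level, strictly less than $\sigma$: a single unit change at a high node can in principle propagate into many descendant subtrees, and one must show, via the structure of the shifted-cumulative-sums scheme, that the expected number of re-roundings one level deeper is bounded independent of the branching factor of the tree. Pinning down $c$ is what fixes the required separation to $\sigma>5$; a sloppier rounding or analysis would force a larger separation threshold, and a tight bound is needed here because this is the one place in the whole pipeline where $\sigma$-dependence is not a free parameter to set.
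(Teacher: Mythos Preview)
Your approach is genuinely different from the paper's, and the gap you yourself flag is real and not resolved.

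The paper does \emph{not} use a hierarchical seed-based rounding. Instead it maintains an explicit probability distribution $\mu_{S^t}$ over full $k$-configurations (size-$k$ subsets of leaves) and enforces, in addition to consistency with $\ox^t$, a \emph{balance} property: every configuration $C$ in the support satisfies $n_p(C)\in\{\lfloor x_p\rfloor,\lceil x_p\rceil\}$ for every node $p$. When the fractional state moves by an elementary swap of $\delta$ mass between leaves $i$ and $i'$ with least common ancestor $p$, the paper (i) adds $i$ to $\delta$ mass of configurations and removes $i'$ from $\delta$ mass, (ii) swaps one leaf of $T(p)$ between the two touched configurations to restore size $k$, and (iii) repairs the resulting imbalance via a local swap lemma. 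That lemma finds, at the highest imbalanced node, two configurations with $n_p(C)<\lfloor x_p\rfloor$ and $n_p(C')\ge\lfloor x_p\rfloor+1$, and swaps one leaf of $T(p)$ for one leaf of a sibling subtree between them; the key computation shows each such swap decreases a scalar ``balance gap'' potential by $W(p)\delta\cdot(\sigma-5)/(\sigma-1)$ while costing $O(W(p)\delta)$. That is where $\sigma>5$ comes from.

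In your scheme, two steps are not nailed down. First, the ``adjustment so that the integer children sum to $K^t(j)$'' is the entire coupling between levels: with independent seeds $U_j$, the threshold rounding at $j$ naturally sums to a rounding of $k^t(j)$, which need not equal the realized $K^t(j)$ handed down from the parent; how you reconcile them determines both whether the marginals $E[K^t(c)]=k^t(c)$ survive and how a unit change in $K^t(j)$ propagates. Second, your cascade constant $c$ is never derived. If the adjustment is done cleanly (e.g.\ incrementing a single child when $K^t(j)$ exceeds the natural sum), a unit change at $j$ touches exactly one child, giving $c=1$ and convergence for any $\sigma>1$; there is nothing in your construction that singles out $c=5$. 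Your claim that ``pinning down $c$ is what fixes the required separation to $\sigma>5$'' is reverse-engineered from the theorem statement rather than from your own scheme, and the paper's $5$ arises from an entirely different calculation (a factor $4$ from two leaves times two configurations in the swap, plus the geometric tail). So as written, the proposal neither establishes the marginal correctness after adjustment nor the cost bound, and the asserted threshold is unjustified within your framework.
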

Note that the above result gives a rounding procedure only for
HSTs (and not weighted HSTs). To relate HSTs to weighted HSTs, we
show the following.

\begin{thm}
\label{thm:transform_ov} Let $T$ be a $\sigma$-HST with $n$
leaves, but possibly arbitrary depth. Then $T$ can be transformed
into a weighted $\sigma$-HST $\widetilde{T}$ such that:
$\widetilde{T}$ has depth $O(\log n)$, the leaves of
$\widetilde{T}$ and $T$ are identical, and any leaf to leaf
distance in $T$ is distorted by a factor of at most $2
\sigma/(\sigma-1)$ in $\widetilde{T}$.
\end{thm}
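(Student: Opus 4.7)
My plan is to contract all non-essential nodes of $T$ while approximately preserving leaf-to-leaf distances, exploiting a heavy-path decomposition of $T$. For each node $v$ in $T$ let $n(v)$ denote the number of leaves in the subtree rooted at $v$, and call a child $u$ of $v$ \emph{heavy} if $n(u)>n(v)/2$ (at most one child can qualify). A node of $T$ will be called \emph{kept} if it is the root, a leaf, a light (non-heavy) child of its parent, or an internal node that has no heavy child at all. The tree $\widetilde{T}$ is defined to have as its vertex set exactly the kept nodes of $T$, and the parent in $\widetilde{T}$ of a non-root kept node $u$ is taken to be the nearest kept proper ancestor of $u$ in $T$; the leaves of $\widetilde{T}$ coincide with those of $T$ by construction.

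For the depth bound I would use the standard heavy-path argument: along any root-to-leaf path in $T$, each transition between two consecutive kept nodes either traverses a light edge in $T$ or terminates at an internal branching node, and in both cases the number of leaves available in the current subtree at least halves. Since such a halving can occur at most $\lfloor\log_2 n\rfloor$ times, any root-to-leaf path in $\widetilde{T}$ has length $O(\log n)$.

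For the edge weights, I would assign to the edge from a kept non-root $u$ to its $\widetilde{T}$-parent $p$ the weight $d_T(p,u)$, possibly inflated by a factor up to $\sigma/(\sigma-1)$ in order to enforce the weighted-$\sigma$-HST separation at $u$. This inflation can be necessary because the longest $T$-path from $u$ to any of its kept descendants descends along a heavy path whose geometric sum reaches $W/(\sigma-1)$, where $W$ denotes the scale of $u$'s parent edge in $T$, whereas the first $T$-edge going upward from $u$ has length only $W$, giving raw separation $\sigma-1$; scaling by $\sigma/(\sigma-1)$ lifts this ratio to exactly $\sigma$.

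The main obstacle is the distortion bound. For leaves $l_1, l_2$ with LCA $a$ in $T$ and LCA $a^*$ in $\widetilde{T}$ (the nearest kept ancestor of $a$), I would derive the identity $d_{\widetilde{T}}(l_1,l_2) \leq \tfrac{\sigma}{\sigma-1}\bigl(d_T(l_1,l_2)+2\,d_T(a,a^*)\bigr)$ directly from the edge-weight definitions and an expansion of $d_T(l_i,a^*)=d_T(l_i,a)+d_T(a,a^*)$. Since $a^*$ sits at the top of the heavy path carrying $a$, $d_T(a,a^*)$ is a partial geometric sum bounded by $W_{a^*}/(\sigma-1)$, while $d_T(l_1,l_2)$ must include at least one edge out of $a$ of a directly comparable scale; careful accounting then yields the target factor $2\sigma/(\sigma-1)$. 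The delicate point I expect to require the most care is confirming that the per-level inflations do not compound multiplicatively: this relies on the HST property, namely that edge weights along any root-to-leaf path decay geometrically, so that the dominant contribution to any leaf-to-leaf distance comes from the topmost divergence point of the pair and inflations at deeper levels only rescale the leading term rather than accumulate.
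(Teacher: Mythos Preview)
Your heavy-path decomposition and depth bound are essentially the paper's construction. The real gap is your edge-weight choice and the distortion argument built on it.

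Setting the $\widetilde T$-edge from $u$ to its kept ancestor $p$ to weight $d_T(p,u)$ gives (before any inflation) $d_{\widetilde T}(l_1,l_2)=d_T(l_1,l_2)+2\,d_T(a,a^*)$, and the detour term $2\,d_T(a,a^*)$ is \emph{not} controlled by $d_T(l_1,l_2)$. The sentence ``$d_T(l_1,l_2)$ must include at least one edge out of $a$ of a directly comparable scale'' is where the argument fails: the edges leaving $a$ toward $l_1,l_2$ have length $W(a)/\sigma$, whereas $d_T(a,a^*)$ is dominated by edges of scale up to $\sigma^{m-1}W(a)$, with $m$ the number of heavy edges from $a$ up to $a^*$; these scales are not comparable when $m$ is large. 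Concretely, take the caterpillar $\sigma$-HST in which $r$ has a leaf child $c_0$ and an internal child $v_1$, each $v_i$ has a leaf child $c_i$ and an internal child $v_{i+1}$, and $v_{n-1}$ has two leaf children $c_{n-1},c_n$. Every $v_i$ with $1\le i\le n-2$ is a heavy child that itself has a heavy child, so none is kept; hence the $\widetilde T$-LCA of $c_{n-2}$ and $c_{n-1}$ is $r$. Then $d_T(c_{n-2},c_{n-1})=\Theta\bigl(W(v_1)/\sigma^{\,n-2}\bigr)$ while $d_{\widetilde T}(c_{n-2},c_{n-1})\ge d_T(r,c_{n-2})\ge W(v_1)$, so the distortion is $\Theta(\sigma^{\,n-2})$, not $O(1)$. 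Inflating edge weights by $\sigma/(\sigma-1)$ only makes this worse.

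The paper's remedy, which also makes your inflation step unnecessary, is to \emph{contract} rather than sum: give the $\widetilde T$-edge out of $u$ the weight of the single $T$-edge from $u$ to its $T$-parent, discarding the intermediate heavy edges entirely. Then leaf-to-leaf distances can only shrink, so the only thing to bound is the shrinkage; the weighted-$\sigma$-HST separation at every kept internal $u$ is automatic because any $\widetilde T$-child $c$ of $u$ is a proper $T$-descendant and hence $W(c)\le W(u)/\sigma$; and since at most one of the two children of the $T$-LCA $a$ on the $l_1$--$l_2$ path can be heavy, at least one of the two longest edges of that path survives in $\widetilde T$, which immediately yields the $2\sigma/(\sigma-1)$ bound.
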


Given the above results, we can present the proof of our main theorem.
\begin{Proof}[Proof of Theorem \ref{thm:main}]
Our algorithm proceeds as follows. First, we use the standard
technique \cite{FRT03} to embed the input (arbitrary) metric $M$
into a distribution $\mu$ over $\sigma$-HSTs with stretch $\sigma=
\Theta(\log n \log (k \log n))$. This incurs a distortion of
$O(\sigma \log_\sigma n)$ and the resulting HSTs have depth
$O(\log_\sigma \Delta)$, where $\Delta$ is the diameter of $M$.

Next, we pick a random HST $T$ according to the distribution
$\mu$, and transform $T$ into $\widetilde{T}$ using Theorem
\ref{thm:transform_ov}. As $\widetilde{T}$ has depth $\ell =
O(\log n)$, it holds that $\sigma =\Theta(\ell \log (k \ell))$ and
hence applying Theorem \ref{thm:alloc_to_kserver} to
$\widetilde{T}$ gives an $O(\ell \log (k\ell)) = O(\log n \log(k
\log n))$-competitive fractional $k$-server algorithm on
$\widetilde{T}$. Since the leaves of $T$ and $\widetilde{T}$ are
identical, and the distances only have $O(1)$ distortion, the
fractional $k$-server solution on $\widetilde{T}$ induces an
$O(\log n \log(k \log n))$-competitive fractional $k$-server
solution on $T$. By Theorem \ref{thm:rounding_ov}, this gives an
$O(\log n \log (k\log n))$-competitive randomized $k$-server
algorithm on $T$.

We now relate the optimum $k$-server cost on $M$ to the optimum on
$T$. Let $\opt_M^*$ denote the optimum $k$-server solution on $M$,
and let $c_T$ denote the cost of this solution on $T$. Since the
expected distortion of distances in our ensemble of HSTs is small,
we have:
\begin{equation}
\label{low:dist}
\Ex_{\mu}[c_T] = O( \sigma \log_\sigma n) \cdot \opt_M^*.
\end{equation}

Let $\Alg_T$ denote the cost of the solution produced by the
online algorithm on $T$, and let $\Alg_M$ denote the cost of this
solution on the metric $M$. As the pairwise distances in $T$ are
at least the distances in $M$,  $\Alg_M \leq \Alg_T $. Also, as
$\Alg_T$ is $O(\log n \log (k\log n))$-competitive, it follows
that:
$$\Alg_M \leq \Alg_T  = O(\log n \log (k\log n)) \cdot c_T^* \leq O(\log n \log (k\log n)) \cdot c_T $$
where $c_T^*$ is the optimum  $k$-server cost on $T$ (and hence
$c_T^* \leq c_T$). Taking expectation with respect to $\mu$ above
and using \eqref{low:dist}, the expected cost of our solution
$\Ex_{\mu}[\Alg_M]$ satisfies:
$$\Ex_{\mu}[\Alg_M] =
O(\log n \log (k\log n)) \cdot \Ex_{\mu}[ c_T]  =  O( \sigma
\log_\sigma n)  \cdot  O(\log n \log (k\log n))  \cdot \opt_M^*,$$
which implies that the  overall algorithm has a competitive ratio
of
$$ O\left(\sigma \left(\frac{\log n}{\log \sigma}\right)\right) \cdot
O\left( \log n \log (k \log n)\right) = O\left( \frac{\log^3 n (\log (k  \log n))^2}{\log \log n} \right) = O\left(\log^2 k \log^3 n \log \log n\right).$$
\end{Proof}

\section{The Fractional Allocation Problem}
\label{s:frac}

Consider a metric corresponding to a weighted star on $d$ leaves (also called {\em locations}) $1,\ldots,d$, where $w_i$ is the distance from leaf $i$ to the root.
Let us fix a sequence of cost vectors $h^0, h^1,\ldots$ and a server quota pattern $\kappa=(\kappa(1),\kappa(2),\ldots)$, where $\kappa(t)$ is the number of servers available at time $t$,
and $\kappa(t)\leq k$ for all times $t$.

Recall that in the fractional allocation problem the state at each time $t$ is described by non-negative variables $x^t_{i,j}$
denoting the  probability that there are {\em exactly} $j$ servers at location $i$.
At each time $t$, the variables $x^t_{i,j}$ satisfy: (1) $\sum_j x^t_{i,j}=1$, for each $i$; (2) $\sum_i \sum_j j x^t_{i,j} \leq \kappa(t)$.%

As we shall see, when describing and analyzing our algorithm for the fractional allocation problem, it will be easier to work with variables $y^t_{i,j}$, defined as
$$y^t_{i,j} = \sum_{j'=0}^{j-1} x^t_{i,j'}, \qquad \textrm{for } i \in \{1, \ldots, d\}, \quad j\in\{1,2, \ldots, k+1\}.$$
I.e.,  $y^t_{i,j}$ is the probability that at time $t$ we have {\em less} than $j$ servers at location $i$. Clearly, for every $i$,  as long as:
\begin{eqnarray}\label{eq:y_1_consistency}
y_{i,j}^t&\in &[0,1]\\\label{eq:y_2_consistency}
y_{i,j-1}^t&\leq& y_{i,j}^t, \qquad  y_{i,k+1}=1, \qquad \forall  i \in \{1, \ldots, d\}, \quad j\in\{2, \ldots, k+1\},
\end{eqnarray}
there is always a unique setting of the variables $x_{i,j}^t$s that corresponds to the $y_{i,j}^t$s. Therefore, in what follows we make sure that the variables $y_{i,j}^t$s generated by our algorithm satisfy the above two conditions.

The condition that at most $\kappa(t)$ servers are available at each time $t$ can be expressed in terms of $y_{i,j}^t$ as:
\begin{eqnarray}
\label{eq:volume_invariant}
 \sum_{i=1}^d \sum_{j=1}^k  y_{i,j}^t &= &\sum_{i=1}^d \sum_{j=0}^k (k-j) x_{i,j}^t
   =  k \sum_{i=1}^d \sum_{j=0}^k x_{i,j}^t  -  \sum_{i=1}^d \sum_{j=0}^k j x_{i,j}^t \nonumber \\
   & = & kd -  \left(\sum_{i=1}^d \sum_{j=0}^k j x_{i,j}^t\right)
    \geq  kd - \kappa(t).
 \end{eqnarray}
Let us now focus on a particular cost vector
$h^t=(h^t(0),h^t(1),\ldots, h^t(k))$ corresponding to time step
$t$. Recall that $h^t(j)$ is the hit cost incurred when serving
the request using {\em exactly} $j$ servers.
We can express $h^t$ as
$$\lambda^t_j = \left\{ \begin{tabular}{ll}$h^t(j-1)-h^t(j)$ & $j=1,2,\ldots, k$ \\
$h^t(k)$ & $j= k+1$ \end{tabular} \right.$$
The variables $\lambda^t_j$ are non-negative as the hit costs are non-increasing in $j$, i.e., $h^t(0)\geq h^t(1)\geq \ldots \geq h^t(k)$.
Intuitively, $\lambda_j^t$ captures the marginal cost of serving the request with strictly less than $j$ servers.\footnote{We note that we can assume that $\lambda_{k+1}^t$ is always $0$. Otherwise, as any valid algorithm (including the optimal one) always has at most $k$ servers at a given location, any competitive analysis established for the case  $\lambda_{k+1}^t=h^t(k)=0$ carries over to the general case. Thus, from now on we remove $\lambda_{k+1}^t$ and also $y_{i,k+1}$ (that is always $1$) from our considerations and notation.}
The hit cost incurred by a configuration $\oy^t=\{y_{i,j}^t\}_{i,j}$ now has a simple formulation. Let $\ooi$ denote the location on which the hit cost vector $h^t$ appears, then the hit cost $ \sum_{j=0}^{k-1} h^t(j) x^t_{\ooi,j}$
can be expressed as
$$  \sum_{j=1}^{k} \lambda^t_j \cdot \oy^t_{\ooi,j}.$$
Similarly, expression \eqref{eq:move_cost} for the movement cost
from a configuration $\oy^{t-1}$ to a configuration $\oy^{t}$ becomes
$$ \sum_{i=1}^d w_i \left( \sum_{j=1}^k  \left|y^{t}_{i,j}-y^{t-1}_{i,j}\right| \right). $$

\subsection{Description of the Algorithm}

In light of the above discussion it suffices to specify how state
$\{y_{i,j}^{t-1}\}_{i,j}$ evolves to $\{y_{i,j}^{t}\}_{i,j}$ at
time $t$ upon arrival of cost vector $h^t$ and server quota
$\kappa(t)$. Our algorithm performs this evolution in two stages.
First, it executes a {\em fix} stage in which the number of servers is
decreased so as to comply with a decrease of the quota
$\kappa(t)$. Then, it proceeds with a {\em hit} stage, during which the
(fractional) configuration of the servers is modified to react to
the cost vector $h^t$. We describe the dynamics of both stages as
a continuous process governed by a set of differential equations.
As it turns out, viewing the evolution of the server configuration
this way allows us to both simplify the description and the
analysis of the algorithm. The evolution of the fractional
solution during the fix stage is parametrized by a time index
$\tau$ that starts at 0 and grows until the number of servers is
no more than $\kappa(t)$. The hit stage is parametrized by a time
index $\eta$ that starts initially at 0 and ends at 1.

For the sake of simplicity, let us drop the index $t$ from our
notation since it does not play any role in our analysis. We
denote the configuration at time $t-1$ by $y^0$ and the
configuration at time $t$ by $y^1$. Let $\lambda$ denote the hit
cost vector $\lambda^t$ and let $\oi$ denote the location
$\ooi$ that $\lambda^t$ penalizes. The intermediate states of the
algorithm are denoted by $y^{\tau}$,  $\tau \geq 0$, during the
fix stage, and by $y^{\eta}$, $\eta \in [0,1]$, during the hit
stage. At each time $\eta \in [0,1]$ (respectively, $\tau \geq
0$), the algorithm specifies the derivative
$\frac{dy_{i,j}^\eta}{d \eta}$ of each variable $y_{i,j}^\eta$
(respectively, $\frac{dy_{i,j}^\tau}{d \tau}$ of each
$y_{i,j}^\tau$). Denote by $\tau_e$ the final value that $\tau$
reaches during the fix stage.
Eventually, each $y_{i,j}^t$ is defined as follows.

\begin{equation}
y_{i,j}^{t} = y_{i,j}^{t-1}+ \int_{\tau=0}^{\tau_e}\frac{dy_{i,j}^\tau}{d \tau}d\tau + \int_{\eta=0}^{1}\frac{dy_{i,j}^\eta}{d \eta}d\eta. \label{eq_update}
\end{equation}

An important issue that needs to be addressed is proving
that the differential equations specifying the derivatives at
each step have a (unique) solution and thus the algorithm is
well-defined. This proof turns out to be non-trivial in the case
of the hit stage, since the derivative during this stage might
change in a non-continuous manner. Nevertheless, as we will show,
the process is still well-defined.

Another technical detail is that during the hit stage, in
intermediate times $\eta \in [0,1]$, we will not work with the hit
cost vector $\lambda$ directly, but rather with a modified cost
vector $\lambda^\eta$ that can vary with $\eta$. (During the first
reading, the reader may assume that $\lambda^\eta = \lambda$ and
skip the part below about blocks and go directly to the description of the fractional algorithm.)

 We initialize $\lambda^0=\lambda$.
To define $\lambda^{\eta}$ for $\eta>0$, we need the notion of blocks.

\vspace{-0.15in}

\paragraph{Blocks:} During the hit stage, for each $\eta\in[0,1]$, we maintain a partition
of the index set $\{1,\ldots,k+1\}$ (for location $\bar{i}$) into
{\em blocks} $B^\eta_1,B^\eta_2\ldots, B^\eta_\ell$. The
collection of blocks is denoted by $\cB^\eta$ and it satisfies the
following properties.

\begin{enumerate}
\item $y^\eta_{\bar{i},j}$ is identical for all indices $j$ within any block $B \in \cB^{\eta}$.
For future reference, let us denote by $y^\eta(B)$ this common value for all $j \in B$.
\item
For any block $B =\{j,\ldots,j+s-1\}$ of length $s$ in $\cB^\eta$, it holds that for every $1 \leq r \leq s$,
\begin{equation}
\label{p:majorize}
\sum_{j'=j}^{j+r-1} \frac{1}{r} \lambda_{\oi,j'} \leq \sum_{j'=j}^{j+s-1}  \frac{1}{s} \lambda_{\oi,j'}.
\end{equation}
That is, the average value of the $\lambda$'s in any prefix of a
block is no more than the average of the entire block.
\end{enumerate}

We define $\lambda^\eta$ to be the cost vector obtained by averaging
$\lambda$ over the blocks in $\cB^{\eta}$. That is, for each  $B
\in \cB^{\eta}$, we set $\lambda(B) = (\sum_{j\in B} \lambda_{\oi,j})/|B|$, and then
\begin{eqnarray*}
\lambda^{\eta}_{i,j} & = &
\left\{\begin{array}{ll}
\lambda(B) & \mbox{if } i=\oi \mbox{ and } j\in B, \mbox{ for } B\in \cB^{\eta}, \\
0 & \mbox{otherwise.}\\
\end{array}\right.\label{eqn:def_block}
\end{eqnarray*}

Now, in our algorithm, the initial partitioning $\cB^0$ of blocks
is the trivial one, i.e., one in which each index $j$ forms its
own block. (Note that in this case we indeed have
$\lambda^0=\lambda$.) Next, blocks are updated as $\eta$
increases. For any $\eta\geq 0$, if two consecutive blocks $B_p,
B_{p+1} \in \cB^{\eta}$ satisfy:
\begin{equation}
\label{eq:mergeblocks}
y^{\eta}(B_p) = y^{\eta}(B_{p+1}) \qquad \textrm{ and } \qquad   \lambda(B_p) < \lambda(B_{p+1}),
\end{equation}
then $B_{p}$ and $B_{p+1}$ are merged and $\cB^{\eta}$ is modified
accordingly. Note that the condition  $\lambda(B_p) \leq
\lambda(B_{p+1})$ guarantees that \eqref{p:majorize} is satisfied
in the new block created by merging $B_p$ and $B_{p+1}$. As we
shall see later (Lemma \ref{obs:nosplit}), a crucial property of
the evolution of $\cB^\eta$ during the hit stage is that
$y_{\oi,j}^{\eta}$s are updated in a way that guarantees that a
block never splits once it is formed.

\paragraph{The algorithm.} We are now ready to state our algorithm. It is parameterized by a parameter $\eps>0$ that will be fixed later.

\vspace{0.35cm}
\noindent \framebox[1.05\width][l]{
\begin{minipage}{0.94\linewidth}
{\bf Fractional Allocation Algorithm:} \\
Set $\beta=\frac{\eps}{1+k}, \alpha=\ln(1+\frac{1}{\beta})= \ln(1+\frac{1+k}{\eps})$.\\
{\bf Fix stage:} Set $y^0=y^{t-1}$.\\
For any $\tau \in [0,\infty)$, while $\sum_{i,j}y^\tau_{i,j} < kd- \kappa(t)$ (i.e., while the total volume of servers exceeds the quota) we increase each variable $y_{i,j}^{\tau}$ at a rate:
\begin{eqnarray*}
\frac{d y_{i,j}^\tau}{d \tau} & = & \left\{\begin{array}{ll}\frac{1}{w_i}\left(y^\tau_{i,j}+\beta\right) & y^{\tau}_{i,j} <1\\
0 & y_{i,j}^{\tau}=1\end{array} \right.
\end{eqnarray*}
Denote by $\tau_e$ the termination time of the fix stage.

{\bf Hit stage:} Set $y^0$ to be the state obtained at the end of
the fix stage\footnote{Note that upon termination of the fix
stage, $\sum_{i,j}y^{\tau_e}_{i,j} \geq kd- \kappa(t)$.}. Define
the following update rule for any $\eta \in [0,1]$:
\begin{itemize}
\item If $\sum_{i,j}y^\eta_{i,j} = kd- \kappa(t)$, choose $N(\eta)\geq 0$ such that\footnote{As we show in Lemma \ref{lem:wd}, there is always a way of choosing $N(\eta)$ such that the desired conditions are satisfied.}:
\begin{eqnarray}
\frac{dy^{\eta}_{i,j}}{d \eta} & = &
\left\{\begin{array}{ll}
0 & \mbox{if either }\left(N(\eta)-\alpha \lambda^\eta_{i,j}\right)> 0 \mbox{ and } y^{\eta}_{i,j}=1, \\
& \mbox{or } \left(N(\eta)-\alpha \lambda^\eta_{i,j}\right)\leq 0 \mbox{ and } y^{\eta}_{i,j}=0\\
\frac{1}{w_i}\left(y^{\eta}_{i,j}+\beta\right)\cdot\left(N(\eta)-\alpha \lambda^\eta_{i,j}\right) & \mbox{otherwise}\\
\end{array}\right.\label{main-eqn}
\end{eqnarray}
and
$$\sum_{i,j}\frac{dy_{i,j}^{\eta}}{d \eta}=0.$$
\item Otherwise (i.e., if $\sum_{i,j}y^\eta_{i,j} > kd- \kappa(t)$), set $N(\eta)=0$, and define the derivatives of the variables as above.\\
\end{itemize}
{\bf Output:} For each $(i,j)$, return $y_{i,j}^{t} \triangleq
y_{i,j}^{t-1} + \int_{\tau=0}^{\tau_e}\frac{dy_{i,j}^\tau}{d
\tau}d\tau + \int_{\eta=0}^{1}\frac{dy^{\eta}_{i,j}}{d \eta} d
\eta$.
\end{minipage}}
\vspace{0.50cm}

\paragraph{High-level intuition.} Before proving correctness and analyzing the performance
of the above algorithm, we provide some intuition on the dynamics
underlying it.

{\em Dynamics of the fix stage:}
This is fairly straightforward. The algorithm simply increases all the variables
$y^{\tau}_{i,j}$ that are strictly less than 1 (which decreases
the total number of servers), until the quota $\kappa(t)$ on the
number of servers is met. We note that it may also be the case that
the total number of servers used is already strictly smaller than the quota $\kappa(t)$ to begin with,
e.g.,  if the server quota increases at time $t$. In this case, nothing is done during the fix stage. Notice that the
rate of change of a variable $y^{\tau}_{i,j}$ is proportional to
its value, which means that the change is governed by an
exponential function. This kind of update rule is in line with
previous algorithms for weighted paging \cite{BBN07,BBN10a}.

{\em Dynamics of the hit stage:} For simplicity, let us assume
that during this stage we have that $0<y_{i,j}^\eta<1$, for all $(i,j)$, and each $y_{i,j}^\eta$ is a
strictly increasing function of $j$. That is,
\begin{equation}\label{eq:simp_as}
0<y_{i,1}^\eta < y_{i,2}^{\eta} < \ldots < y_{i,k}^{\eta} <1,
\end{equation} for all locations $i$ and $\eta\in [0,1]$.

Note that under this assumption condition \eqref{eq:mergeblocks}
will never trigger. As a result, no blocks are merged and we have
$\lambda^{\eta}=\lambda$ for all $\eta\in[0,1]$. Furthermore, as
in this case each variable $y_{i,j}^\eta$ is strictly between 0
and 1, its rate of change during the hit stage simplifies to:
\begin{equation}\label{eq:simple_rate} \frac{dy_{i,j}^{\eta}}{d
\eta}
=\frac{1}{w_i}\left(y^{\eta}_{i,j}+\beta\right)\cdot\left(N(\eta)-\alpha
\lambda_{i,j}\right),
\end{equation}
with
\begin{eqnarray}
N(\eta)  &=& \left\{\begin{array}{l} 0 \qquad \mbox{if
}\sum_{i,j}y^\eta_{i,j} > kd- \kappa(t), \\
\\
\frac{\sum_{i,j}\frac{1}{w_i}\left(y^{\eta}_{i,j}+\beta\right)\cdot
\alpha
\lambda_{i,j}}{\sum_{i,j}\frac{1}{w_i}\left(y^{\eta}_{i,j}+\beta\right)}
\qquad \mbox{otherwise (i.e. if } \sum_{i,j}y^\eta_{i,j} = kd- \kappa(t)).\\
\end{array}\right.\label{eq:n_norm}
\end{eqnarray}
(Note that the value of $N(\eta)$ in the second case of
\eqref{eq:n_norm} is determined by the fact that
$\sum_{i,j}\frac{dy_{i,j}^{\eta}}{d \eta}$ has to be zero.)

Let us study the dynamics given by
\eqref{eq:simple_rate} more carefully. Recall that $\lambda_{i,j}=0$ for $i\neq
\oi$.
First, if the number of servers used is below the quota, i.e.~if $\sum_{i,j}y^\eta_{i,j} > kd- \kappa(t)$,
the algorithm responds to the cost vector $\lambda$ by simply
increasing the number of servers at location $\bar{i}$ by decreasing each
$y_{\bar{i},j}^\eta$ at a rate of $\frac{1}{w_{\bar{i}}}
(y^{\eta}_{\bar{i},j}+\beta)\cdot \alpha
\lambda^\eta_{\bar{i},j}$. To understand this better, it is instructive to consider the special case when
$\lambda^{\eta}_{\oi,\oj}=1$ for some particular index $\oj$ and is $0$ otherwise (this corresponds to the hit cost vector $h^\eta$ that incurs
cost $1$ if there are strictly fewer than $\oj$ servers at $\oi$ and cost $0$ otherwise).
In this case, the algorithm reduces $y^{\eta}_{\oi,\oj}$ and keeps other $y^{\eta}_{i,j}$'s unchanged (in particular $y^{\eta}_{\oi,\oj+1}$ and $y^{\eta}_{\oi,\oj-1}$ remain unchanged).
As $y^{\eta}_{\oi,\oj+1}$ and $y^{\eta}_{\oi,\oj-1}$ do not change while $y_{\oi,\oj}$ decreases, this has the effect of increasing the probability mass $x^\eta_{\oi,\oj}=y^\eta_{\oi,\oj+1}-y^\eta_{\oi,\oj}$ on $(\oi,\oj)$, and decreasing the probability mass $x^\eta_{\oi,\oj-1}=y^\eta_{\oi,\oj}-y^\eta_{\oi,\oj-1}$ on $(\oi,\oj-1)$\footnote{In the simplified discussion here we are implicitly assuming that $x^\eta_{\oi,\oj-1}>0$ by assuming that $y^\eta_{\oi,\oj}> y^\eta_{\oi,\oj-1}$.} Moreover, note that the decrease in $x^\eta_{\oi,\oj-1}$ is exactly equal to the increase in $x^\eta_{\oi,\oj}$.

Now, let us consider the case when the number of servers used is exactly equal to the quota. Here, we also need to ensure that the quota is maintained.
This is done by offsetting the increase in the number of servers at location $\oi$ (as described by the dynamics in the previous paragraph),
by decreasing the number of servers at all locations (including $\oi$). This is precisely the purpose of the term $N(\eta)$ in \eqref{eq:simple_rate}.
It increases $y^{\eta}_{i,j}$ (and hence decreases the number of servers) at a
rate proportional to $\frac{1}{w_i}(y^\eta_{i,j}+\beta)$ (as in the fix stage). Note that as $\lambda_{i,j}=0$ for
$i\neq \bar{i}$, this update can only decrease the
number of servers at locations $i \neq \oi$. The overall number of servers at location $\oi$ can only increase, but of course due to the redistribution of probability
mass at $\oi$, it may happen
that the probability mass at some $(\oi,j)$ goes down.

Unfortunately, when assumption \eqref{eq:simp_as} does not hold,
the simple dynamics described above may produce infeasible configurations.
First, increasing or decreasing variables according to
\eqref{eq:simple_rate} does not take into account that the variables need to stay in the range $[0,1]$,
and hence this may be violated.
This happens if (i) a variable is equal to 0 and has a negative
derivative, or (ii) when it is equal to 1 and has a positive derivative. To
avoid this problem we need to deactivate such variables (by
setting their derivative to be 0) when either one of these two
cases occurs. Moreover, the above dynamics may also violate the
monotonicity condition \eqref{eq:y_2_consistency}. To avoid
this issue, we need to merge blocks and modify
$\lambda^\eta$ accordingly, as was previously described.

Now, the resulting algorithm does not produce infeasible
configurations anymore. However, its dynamics is somewhat more
involved. Before we discuss it, let us first provide a formal definition of an {\em inactive} coordinate or
variable.
\begin{defn}\label{def-active}
During the
fix stage, a coordinate $(i,j)$ for which $y^{\tau}_{i,j}<1$ is said
to be {\em active} at time $\tau$. Otherwise it is said to be {\em inactive}.
During the hit stage, coordinate $(i,j)$ (or variable $y^\eta_{i,j}$) is said
to be {\em inactive} at time $\eta \in [0,1]$ if either
$\left(N(\eta)-\alpha \lambda^\eta_{i,j}\right)> 0 \mbox{ and }
y^{\eta}_{i,j}=1$, or $\left(N(\eta)-\alpha
\lambda^\eta_{i,j}\right)\leq 0 \mbox{ and } y^{\eta}_{i,j}=0$.
Otherwise, coordinate $(i,j)$ is said to be {\em active}.
Denote by $A^{\eta}$ (respectively
$A^{\tau}$) the set of active coordinates at time $\eta$
(respectively $\tau$).
\end{defn}

Now,  by definition, during the hit stage at time $\eta$ only the active variables might change. So, we can compactly rewrite the evolution of the variables during the hit stage given in \eqref{main-eqn} as
\begin{eqnarray}
\frac{dy^{\eta}_{i,j}}{d \eta} & = &
\left\{\begin{array}{ll}
0 & \mbox{if } (i,j)\notin A^{\eta}, \\
\frac{1}{w_i}\left(y^{\eta}_{i,j}+\beta\right)\cdot\left(N(\eta)-\alpha \lambda^\eta_{i,j}\right) & \mbox{otherwise}\\
\end{array}\right.\label{eq:y_evol_general}
\end{eqnarray}
Furthermore, as we still require that $\sum_{i,j}\frac{dy_{i,j}^{\eta}}{d \eta}=\sum_{(i,j)\in A^{\eta}}\frac{dy_{i,j}^{\eta}}{d \eta}=0$, if $\sum_{i,j}y^\eta_{i,j} > kd- \kappa(t)$, we have that $N(\eta)$ can be expressed as
\begin{eqnarray}
N(\eta)  &=& \left\{\begin{array}{l} 0 \qquad \mbox{if
}\sum_{i,j}y^\eta_{i,j} > kd- \kappa(t), \\
\frac{\sum_{(i,j)\in A^{\eta}}\frac{1}{w_i}\left(y^{\eta}_{i,j}+\beta\right)\cdot
\alpha
\lambda_{i,j}^{\eta}}{\sum_{(i,j)\in A^{\eta}}\frac{1}{w_i}\left(y^{\eta}_{i,j}+\beta\right)}
\qquad \mbox{otherwise (i.e. if } \sum_{i,j}y^\eta_{i,j} = kd- \kappa(t)).\\
\end{array}\right.\label{eq:n_general}
\end{eqnarray}

In light of the above, one can see that the simple evolution of the variables in the special case of \eqref{eq:simp_as}, as described by \eqref{eq:simple_rate}, is a special case of the general evolution in which all coordinates are being active and $\lambda^\eta=\lambda$ for all $\eta\in[0,1]$. The reason why the analysis of the general process is more complicated is that the set of active coordinates (and the hit cost $\lambda^{\eta}$) can, in principle, change very abruptly between two values of $\eta$. Moreover, as stated, equation \eqref{eq:n_general} and Definition \ref{def-active}, have a circular dependency. In particular, the value
of $N(\eta)$ depends on the set $A^\eta$, but in turn the definition of $A^{\eta}$ also depends on the value of $N(\eta)$.
As a result, a priori it is not even clear that our algorithm is well defined. That is, a unique trajectory consistent with our local evolutionary rules indeed exists. We proceed to proving this now.

\paragraph{Well-definiteness of the algorithm.} We start by addressing the above-mentioned issue of the circular dependency between the value of $N(\eta)$ and the set $A^{\eta}$. Note that it is not clear any more that there always exists a non-negative normalization factor $N(\eta)$ as required by our
algorithm. As we prove in the next lemma, however, one can use a simple continuity argument to prove the existence of the desired normalization factor.
\begin{lem}\label{lem:wd}
There
exists a $N(\eta)\geq 0$ for which
$\sum_{i,j}\frac{dy_{i,j}^{\eta}}{d \eta}= 0$, where
the derivatives $\frac{dy_{i,j}^{\eta}}{d \eta}$ are as defined in the algorithm.
Moreover, the set $A^\eta$ of active coordinates is never empty.
\end{lem}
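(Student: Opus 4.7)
The plan is to lift the circular dependency between $N(\eta)$ and $A^\eta$ into a one-dimensional fixed-point problem and solve it by the intermediate value theorem. First, note that when $\sum_{i,j} y^\eta_{i,j} > kd - \kappa(t)$, the algorithm sets $N(\eta)=0$ by fiat, so existence is trivial; I will therefore focus on the harder case $\sum_{i,j} y^\eta_{i,j} = kd - \kappa(t)$, where $N(\eta)$ must be chosen so that $\sum_{i,j} \frac{dy^\eta_{i,j}}{d\eta} = 0$.

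For each candidate value $N \geq 0$, let $A(N)$ be the set of coordinates that would be active under this choice of $N$ according to Definition~\ref{def-active}, and define
$$F(N) \;=\; \sum_{(i,j)\in A(N)} \frac{1}{w_i}\bigl(y^\eta_{i,j}+\beta\bigr)\bigl(N - \alpha\lambda^\eta_{i,j}\bigr).$$
Finding the desired normalization factor becomes equivalent to finding $N \in [0,\infty)$ with $F(N) = 0$. I will partition the coordinates by whether $y^\eta_{i,j}$ equals $0$, $1$, or lies in $(0,1)$, and track how their contribution to $F$ depends on $N$: interior coordinates are always active and contribute a continuous linear function of $N$; a $y=1$ coordinate transitions from active to inactive exactly at $N = \alpha\lambda^\eta_{i,j}$, while a $y=0$ coordinate transitions in the opposite direction at the same threshold. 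The key observation is that at each such threshold the contribution $\frac{y+\beta}{w_i}(N-\alpha\lambda)$ equals zero, so no discontinuity is introduced; hence $F$ is continuous on $[0,\infty)$.

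Next, I will argue that $F$ is non-decreasing: between consecutive thresholds it is a sum of linear functions with slopes $\frac{1}{w_i}(y^\eta_{i,j}+\beta) > 0$, and as $N$ crosses a threshold the slope of $F$ either drops (a $y=1$ coordinate exits) or rises (a $y=0$ coordinate enters) without introducing any jump in value. For the boundary behaviour, since $\lambda^\eta_{i,j}\geq 0$ every active contribution at $N=0$ is non-positive, giving $F(0) \leq 0$; and for large $N$ all coordinates with $y^\eta_{i,j}<1$ are active and contribute $\Theta(N)$, forcing $F(N)\to +\infty$ unless $y^\eta_{i,j}=1$ for every $(i,j)$ (a corner case in which $F\equiv 0$ beyond the largest threshold and any such $N$ works). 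The intermediate value theorem then yields the desired $N(\eta)\in[0,\infty)$.

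For the second claim, that $A^\eta$ is never empty, I will split into cases. If some coordinate has $y^\eta_{i,j}\in(0,1)$ then it belongs to $A(N)$ for every choice of $N$, so $A^\eta \neq \emptyset$ automatically. Otherwise all $y^\eta_{i,j} \in \{0,1\}$, and the volume identity $\sum_{i,j} y^\eta_{i,j} = kd-\kappa(t)$ together with $0\leq \kappa(t)\leq k$ pins down which coordinates are of which type; in each regime, the $N(\eta)$ produced above places at least the coordinate realising the relevant extremal threshold into $A^\eta$. The main obstacle throughout is the interplay between the discontinuous set-valued changes in $A(N)$ and the required continuity of $F$; this is exactly what is resolved by the threshold observation, that every transition happens precisely where the transitioning coordinate's contribution is zero.
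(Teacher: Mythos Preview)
Your approach for the existence of $N(\eta)$ is essentially identical to the paper's: define the total-derivative function $F(N)$ (the paper writes $f(s)$), establish continuity via the threshold analysis, observe $F(0)\le 0$ and $F(N)\ge 0$ for $N$ large, and invoke the intermediate value theorem. Your additional remark that $F$ is non-decreasing is correct but not needed.

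The gap is in the non-emptiness of $A^\eta$ when every $y^\eta_{i,j}\in\{0,1\}$. The intermediate value theorem yields only \emph{some} zero of $F$, and at a generic zero the active set may well be empty: indeed $F$ vanishes trivially on any interval where $A(N)=\emptyset$, and your own corner case (all $y^\eta_{i,j}=1$, take $N$ beyond the largest threshold) is precisely an instance of this. Your phrase ``the $N(\eta)$ produced above places at least the coordinate realising the relevant extremal threshold into $A^\eta$'' specifies neither which zero you are choosing nor which coordinate you mean, and the volume identity alone does not force any particular coordinate to be active. The paper closes this by selecting the \emph{minimal} zero $s^*$ of $F$ and arguing by contradiction from minimality: if $A(s^*)=\emptyset$ then every $y=1$ coordinate satisfies $s^*>\alpha\lambda^\eta_{i,j}$ strictly and every $y=0$ coordinate satisfies $s^*\le\alpha\lambda^\eta_{i,j}$, so any $s'$ with $\max_{y=1}\alpha\lambda^\eta_{i,j}<s'<s^*$ also has $A(s')=\emptyset$ and hence $F(s')=0$, contradicting the minimality of $s^*$. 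You need either this minimality device or an equivalent explicit check (for instance, that the $y=1$ coordinate with largest $\lambda^\eta_{i,j}$ is still active at the minimal zero); the sketch as written does not supply one.
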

\begin{proof}
Fix any  $\eta \in [0,1]$. Let us consider the function
$f(s) = \sum_{i,j}({dy_{i,j}^{\eta}}/{d \eta})|_{N(\eta)=s}$, i.e., $f(s)$ is the sum of all derivatives given by equation \eqref{main-eqn} for the case when $N(\eta)$ is equal to $s$.

Clearly, if $s=N(\eta)=0$, then $({dy_{i,j}^{\eta}}/{d \eta})|_{N(\eta)=s}\leq0$ for each $(i,j)$ and hence $f(s)\leq0$.
If $f(0)=0$ then $N(\eta)$ satisfies the requirements. Note that in this case all coordinates that are non-zero are active. This set is non-empty as at the beginning of the hit stage the sum over all coordinates is at least $kd-\kappa(t)>0$.

Thus, suppose that $f(0)<0$. Let $\lambda_{\max} = \max_{i,j} \lambda^\eta_{i,j} $ be the largest entry in $\lambda^{\eta}$. Then, at $s=N(\eta)=\alpha \lambda_{\max}$ we have $({dy_{i,j}^{\eta}}/{d \eta})|_{N(\eta)=s} \geq 0$ for each $(i,j)$ and hence $f(s)\geq 0$.

Next, we claim that each derivative $({dy_{i,j}^{\eta}}/{d \eta})|_{N(\eta)=s}$ is a continuous function of $s$.
To this end, note that if $0 < y_{i,j}^{\eta} <1$,
then the function $({dy_{i,j}^{\eta}}/{d \eta})|_{N(\eta)=s}$ is a linear function of $s$ (and thus is continuous).
For $y_{i,j}^{\eta}=0$, the
function $({dy_{i,j}^{\eta}}/{d \eta})|_{N(\eta)=s}$ is zero for $s \leq
\alpha \lambda^{\eta}_{i,j}$, and then increases linearly for $s\geq \alpha
\lambda^{\eta}_{i,j}$ -- so, again, it is continuous. Similarly, for $y^{\eta}_{i,j}=1$,
$(\frac{dy_{i,j}^{\eta}}{d \eta})|_{N(\eta)=s}$ is negative if $s \leq
\alpha \lambda^{\eta}_{i,j}$, and increases linearly until
$s=\alpha \cdot \lambda^{\eta}_{i,j}$, and then remains zero.

Now, as each derivative is continuous, so is $f$. Thus, we know that by the
intermediate value theorem the preimage $f^{-1}(0)$ in the interval $[0, \alpha \lambda_{\max}]$ is non-empty. Furthermore, as $f$ is continuous and $f(s)<0$ for $s=0$, there exists a minimal $s^*$ such that $f(s^*)=0$ and $0<s^* \leq \alpha \lambda_{\max}$.

We take $N(\eta)=s^*$ and claim that the corresponding set $A^{\eta}$ is non-empty (which would prove the lemma). To see why it is the case, note that if there exists a coordinate $0 < y_{i,j}^{\eta} < 1$ then it is always active and we are done. Otherwise, let us consider $A_0$ ($A_1$) to be the set of $(i,j)$ with $y_{i,j}^{\eta}=0$ (resp. with $y_{i,j}^{\eta}=1$). As the sum over all coordinates is at least $kd-\kappa(t)>0$, the set $A_1$ is non-empty.
Suppose for $s^*>0$ all the coordinates in $A_0, A_1$ are inactive, then by definition for all $(i,j)\in A_1$, $s^*>\alpha \lambda_{i,j}^{\eta}$ (note the strict inequality), and for all coordinates in $A_0$, $s^* \leq \alpha \lambda_{i,j}^{\eta}$. This is a contradiction to the minimality of $s^*$ as we could find a $0< s'<s^*$, such that $f(s')=0$.
\end{proof}

Now, as our algorithm is defined via a set of differential equations
indexed by $\tau$ and $\eta$, to prove that it is well-defined we need to show that there exists a
unique solution to this set, and furthermore  this solution is
feasible for the allocation problem.  To this end, we prove the following
lemma, whose proof is in Appendix \ref{app:nosplit}.

\begin{lemma}\label{obs:nosplit}
There exists a unique solution $y^{\tau}$ and $y^{\eta}$, defined
on the intervals $\tau \geq 0$, $\eta \in [0,1]$, to the set of
differential equations defined by the algorithm. Furthermore, the
solution satisfies the following properties:
\begin{itemize}
\item {\bf Boundary:} For each  $(i,j)$, and for all $0 \leq
\tau$, and $0 \leq \eta \leq 1$: $0 \leq
y^{\eta}_{i,j},y^{\tau}_{i,j} \leq 1$. \item {\bf Monotonicity:}
For each $(i,j)$, ($j\leq k$), $y^{\eta}_{i,j} \leq
y^{\eta}_{i,j+1}$ and $y^{\tau}_{i,j} \leq y^{\tau}_{i,j+1}$.
\item {\bf Quota:} The expected number (volume) of servers
at the end of the fix stage and at any $\eta \in [0,1]$ does not
exceed $\kappa(t)$. That is, $\sum_{i,j}y^{\eta}_{i,j} \geq kd-
\kappa(t)$ for all $\eta\in[0,1]$.
\item {\bf Blocks:} During the
hit stage, Blocks can only merge (and they never split). \item
{\bf Discontinuity:} The total number of times $\eta \in [0,1]$
that each location $(i,j)$ changes its status from active to
inactive, as well as the number of discontinuity points of
$N(\eta)$ as a function of $\eta$, is finite (in fact, polynomial in $k$ and $d$).
\end{itemize}
\end{lemma}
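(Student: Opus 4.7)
The plan is to treat the algorithm as a piecewise-smooth dynamical system and argue that only finitely many non-smoothness events occur, so that a unique global trajectory can be assembled out of standard ODE solutions between events. First, I would partition each stage into maximal sub-intervals on which the active set ($A^\tau$ or $A^\eta$), the block partition $\cB^\eta$, and the status of the quota (strict inequality vs.\ equality in \eqref{eq:volume_invariant}) are all constant. Within such a sub-interval, the right-hand side of the update rule (either $\frac{1}{w_i}(y+\beta)$ during the fix stage, or \eqref{eq:y_evol_general} with $N(\eta)$ given in closed form by \eqref{eq:n_general} during the hit stage) is a Lipschitz function of the active variables, so Picard--Lindel\"of gives a unique smooth solution. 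The global solution is then obtained by concatenation, provided that the events separating sub-intervals form a finite (in fact polynomial in $k,d$) set.

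Next I would verify the four structural invariants. For the \textbf{boundary} claim, note that by definition $\frac{dy_{i,j}}{d\eta}$ (and $\frac{dy_{i,j}}{d\tau}$) is set to $0$ whenever $y_{i,j}=0$ with a non-positive drift or $y_{i,j}=1$ with a non-negative drift, so the variables never leave $[0,1]$. For the \textbf{quota}, during the fix stage termination happens precisely when equality in \eqref{eq:volume_invariant} is reached; during the hit stage, when $\sum_{i,j} y_{i,j}^\eta > kd-\kappa(t)$ we have $N(\eta)=0$ and every active derivative is non-positive, so $\sum y$ monotonically decreases toward $kd-\kappa(t)$, while when equality holds Lemma \ref{lem:wd} supplies an $N(\eta)\geq 0$ making $\sum \frac{dy}{d\eta}=0$. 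For \textbf{monotonicity} I would argue pairwise: for $i\neq\oi$ we have $\lambda_{i,j}^\eta=\lambda_{i,j+1}^\eta=0$, and whenever both $(i,j)$ and $(i,j+1)$ are simultaneously active they obey an identical linear ODE $\frac{d}{d\eta}(y_{i,j+1}-y_{i,j})=\frac{N(\eta)}{w_i}(y_{i,j+1}-y_{i,j})$, so their gap cannot change sign; for $i=\oi$, within a block the variables coincide by construction, and across blocks the gap is strictly positive and can only vanish by triggering the merge rule \eqref{eq:mergeblocks}, rather than by violating monotonicity.

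The heart of the argument, and the main obstacle, is the \textbf{blocks never split} and \textbf{finite discontinuity} claims. For the block claim I would observe that for any block $B\in\cB^\eta$ all indices $j\in B$ share the same value $y^\eta(B)$ and the same averaged cost $\lambda(B)$, hence the same derivative in \eqref{eq:y_evol_general}; moreover the activity predicate $(N(\eta)-\alpha\lambda(B))$ vs.\ the value $y^\eta(B)$ is identical for every $j\in B$, so the whole block is either active or inactive simultaneously. Consequently the variables in $B$ move in lockstep and the block cannot split; as merges can happen at most $k-1$ times at each of the $d$ locations, block events are bounded by $O(kd)$. Bounding activity-status flips is subtler: once a variable is pinned at $0$ or $1$ it can only reactivate when $N(\eta)$ crosses the threshold $\alpha\lambda^\eta_{i,j}$, and between block/activity events $N(\eta)$ is an analytic function of the finite-dimensional linear flow, so it crosses each threshold only finitely often; combined with the block bound this yields a polynomial-in-$kd$ overall event count and hence finitely many discontinuities of $N(\eta)$, as required. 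Gluing the local Picard--Lindel\"of solutions across these finitely many events produces the desired unique global solution satisfying all the stated properties.
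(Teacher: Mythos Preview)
Your overall architecture---piece together Picard--Lindel\"of solutions on maximal sub-intervals where the active set, block partition, and quota status are frozen---is exactly the paper's approach, and your treatment of the boundary, quota, block-never-split, and monotonicity invariants is essentially correct (modulo the small caveat that when two adjacent blocks at $\oi$ reach the same value with $\lambda(B_p)\geq\lambda(B_{p+1})$ they do \emph{not} merge, but then the derivative ordering $\frac{dy(B_p)}{d\eta}\leq\frac{dy(B_{p+1})}{d\eta}$ still preserves monotonicity; you should say this explicitly).

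The real gap is in your \textbf{finite discontinuity} argument. You write that ``between block/activity events $N(\eta)$ is an analytic function of the finite-dimensional linear flow, so it crosses each threshold only finitely often; combined with the block bound this yields a polynomial-in-$kd$ overall event count.'' This is circular: the number of analytic pieces \emph{is} the number of activity events, which is precisely what you are trying to bound. Analyticity only tells you that zeros are isolated on each piece, not that the total number of pieces is finite, let alone polynomial. Nothing prevents, a priori, an infinite cascade where an inactivation changes $N(\eta)$ discontinuously, which triggers a reactivation, which triggers another inactivation, and so on, with the event times accumulating.

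The paper breaks this circularity with a structural fact you are missing: within each piece the normalizer $N(\eta)$ is \emph{monotone non-increasing} (their Claim~\ref{cl:n_nonincr}), and moreover $N(\eta)$ can jump upward at a horizon only when a block merge occurs, the quota is first hit, or a coordinate $1$-inactivates from a strictly-below-$1$ value (their Claim~\ref{cl:n}). Since block merges and quota hits are already bounded by $O(k)$, this reduces the problem to bounding $1$-inactivations, which the paper then does by a downward induction on the index $j$ (their Claim~\ref{cl:1-inactive}): between two consecutive $1$-inactivations of $(\oi,j)$, $N(\eta)$ must rise above $\alpha\lambda_{\oi,j}$, which forces one of the already-bounded events or a $1$-inactivation at some $j'>j$. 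This chain of monotonicity arguments is the missing idea; without it your event-count claim does not go through.
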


\subsection{Cost Accounting}

In this section we prove some helpful properties that allow us to
charge the algorithm and the optimal solution in a continuous
fashion. This will simplify the potential function based analysis
that we later perform. First, we deal with the charging of the hit
cost, and then with the accounting of the movement cost.

\paragraph{Charging the hit cost.}
The issue we want to address here is that at a given time $t$ the hit costs of the optimal solution and our algorithm  depend only on the final states of both solutions. More precisely, if $y^{*}$ is the optimal solution at time $t$, and $y=y^{1}$ is the final state of the algorithm at time $t$, then the hit cost of the optimal solution (respectively, of the algorithm) at time $t$ is equal to $\lambda \cdot y^*$ (respectively, $\lambda\cdot y$). However, as our algorithm is described in a continuous fashion, it would be simpler to also have a way of accounting for the hit costs in a continuous and local fashion.

In particular, we would like to account for the hit cost of the optimal solution as:
\begin{equation}
\label{eq:opt_hit_cost_account}
\int_{\eta=0}^{1} \lambda^\eta \cdot y^* \cdot d\eta,
\end{equation}
and for the hit cost of the algorithm as:
\begin{equation}\label{eq:hit_cost_account}
 \int_{\eta=0}^1 \lambda^\eta \cdot y^{\eta} d\eta.
\end{equation}
Note that the above expressions can be interpreted as charging locally at every time $\eta\in [0,1]$ an infinitesimally small hit cost of $\lambda^\eta \cdot y^* d\eta$ (respectively, $\lambda^\eta \cdot y^\eta d\eta$) to the optimal solution (respectively, to the algorithm). Now, to make this accounting valid, we need to show that the above expressions can only overestimate the  hit cost of our algorithm and underestimate the  hit cost of the optimal solution. We prove that this is indeed the case in the following lemma.

\begin{lem}\label{lem:charge1}
The following inequalities hold:
\begin{eqnarray}\int_{\eta=0}^{1} \lambda^{\eta} \cdot y^* \cdot d\eta & \leq & \lambda \cdot y^*, \label{charge-opt}
\end{eqnarray}

\begin{eqnarray}
\int_{\eta=0}^1 \lambda^\eta \cdot y^{\eta} d\eta & \geq & \lambda \cdot y.\label{charge-on}
\end{eqnarray}
\end{lem}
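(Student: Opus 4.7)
The plan is to prove the two inequalities of Lemma \ref{lem:charge1} separately, since they require different tools. The first is a pointwise correlation inequality that integrates trivially; the second needs integration by parts combined with a Cauchy--Schwarz estimate involving the normalization factor $N(\eta)$.

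For \eqref{charge-opt}, I would establish the stronger pointwise bound $\lambda^\eta \cdot y^* \leq \lambda \cdot y^*$ for every $\eta \in [0,1]$. Both vectors vanish off location $\oi$, and within each block $B = \{j_0, \ldots, j_0 + s - 1\} \in \cB^\eta$ the entry $\lambda^\eta_{\oi,j}$ equals the constant $\lambda(B)$, with $\sum_{j \in B}\lambda^\eta_{\oi,j} = |B|\lambda(B) = \sum_{j \in B}\lambda_{\oi,j}$. It therefore suffices to prove, for each block, that $\sum_{j \in B}(\lambda_{\oi,j} - \lambda(B))\,y^*_{\oi,j} \geq 0$. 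Setting $c_r := \lambda_{\oi,j_0 + r - 1} - \lambda(B)$ and $b_r := y^*_{\oi,j_0 + r - 1}$, the defining inequality \eqref{p:majorize} translates exactly into the partial sums $S_r := \sum_{r' \leq r} c_{r'}$ being non-positive for $r < s$ and vanishing at $r = s$. The monotonicity \eqref{eq:y_2_consistency} of $y^*$ makes $b_r$ non-decreasing in $r$, so Abel's summation by parts gives
\[ \sum_{r=1}^{s} c_r b_r \;=\; -\sum_{r=1}^{s-1} S_r\,(b_{r+1} - b_r) \;\geq\; 0, \]
and integrating the pointwise bound in $\eta$ yields \eqref{charge-opt}.

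For \eqref{charge-on}, write $y := y^1$. Since $\eta \mapsto y^\eta$ is absolutely continuous on $[0,1]$ (its derivative has only finitely many discontinuities, by Lemma \ref{obs:nosplit}), integration by parts gives
\[ \lambda \cdot y \;-\; \int_0^1 \lambda \cdot y^\eta\,d\eta \;=\; \int_0^1 \eta\,\Bigl(\sum_j \lambda_{\oi,j}\,\frac{dy^\eta_{\oi,j}}{d\eta}\Bigr)\,d\eta. \]
Because $y^\eta_{\oi,j}$ is common to all $j$ in a block, so is its $\eta$-derivative, and the block-averaging identity simultaneously gives $\lambda \cdot y^\eta = \lambda^\eta \cdot y^\eta$ and $\sum_j \lambda_{\oi,j}\,(dy^\eta_{\oi,j}/d\eta) = \sum_{(i,j) \in A^\eta} \lambda^\eta_{i,j}\,(dy^\eta_{i,j}/d\eta)$ (the restriction to $A^\eta$ uses that inactive coordinates have zero derivative). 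It thus remains to show $F(\eta) := \sum_{(i,j) \in A^\eta} \lambda^\eta_{i,j}\,(dy^\eta_{i,j}/d\eta) \leq 0$ for every $\eta$. If $\sum_{i,j} y^\eta_{i,j} > kd - \kappa(t)$ then $N(\eta) = 0$ and each summand is $-\tfrac{\alpha}{w_i}(y^\eta_{i,j} + \beta)(\lambda^\eta_{i,j})^2 \leq 0$. Otherwise $\sum_{i,j} y^\eta_{i,j} = kd - \kappa(t)$, and substituting the formula \eqref{eq:n_general} for $N(\eta)$ produces
\[ F(\eta) \;=\; \alpha\cdot\frac{X^2 - YZ}{Z}, \]
with $Z = \sum_{A^\eta}\tfrac{1}{w_i}(y^\eta_{i,j} + \beta)$, $X = \sum_{A^\eta}\tfrac{1}{w_i}(y^\eta_{i,j} + \beta)\,\lambda^\eta_{i,j}$, and $Y = \sum_{A^\eta}\tfrac{1}{w_i}(y^\eta_{i,j} + \beta)\,(\lambda^\eta_{i,j})^2$. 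Cauchy--Schwarz applied with the non-negative weights $\tfrac{1}{w_i}(y^\eta_{i,j} + \beta)$ gives $X^2 \leq YZ$, so $F(\eta) \leq 0$. Combining this with $\eta \geq 0$ and the integration-by-parts identity yields $\lambda \cdot y \leq \int_0^1 \lambda \cdot y^\eta\,d\eta = \int_0^1 \lambda^\eta \cdot y^\eta\,d\eta$, which is \eqref{charge-on}.

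The main obstacle I expect is the second inequality: one has to trade the final state $y^1$ for derivatives via integration by parts, keep track of the fact that blocks evolve with $\eta$, and then invoke Cauchy--Schwarz in precisely the right form by substituting the explicit formula for $N(\eta)$. The first inequality, by contrast, is essentially Chebyshev's sum inequality once one notices that property \eqref{p:majorize} is exactly the ``increasing-on-average'' hypothesis it needs.
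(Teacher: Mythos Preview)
Your proof is correct and rests on the same two core observations as the paper: for \eqref{charge-opt}, the pointwise inequality $\lambda^\eta\cdot y^*\le\lambda\cdot y^*$ for every $\eta$; for \eqref{charge-on}, the Cauchy--Schwarz estimate showing $\sum_{(i,j)\in A^\eta}\lambda^\eta_{i,j}\,\frac{dy^\eta_{i,j}}{d\eta}\le 0$.

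The packaging differs slightly in both parts. For \eqref{charge-opt} the paper argues inductively over block merges (showing $\lambda^{\eta_1}\cdot v\ge\lambda^{\eta_2}\cdot v$ whenever $\eta_1<\eta_2$, via a two-block calculation), whereas you go directly from property \eqref{p:majorize} to the block-wise inequality via Abel summation; your route is arguably cleaner since it avoids tracking merge history. For \eqref{charge-on} the paper simply observes that $F(\eta)\le 0$ means $\lambda^\eta\cdot y^\eta$ (equivalently $\lambda\cdot y^\eta$) is non-increasing in $\eta$, hence $\ge\lambda\cdot y^1$ for all $\eta$, and integrates; your integration-by-parts identity $\lambda\cdot y-\int_0^1\lambda\cdot y^\eta\,d\eta=\int_0^1\eta\,F(\eta)\,d\eta$ reaches the same conclusion but is a slightly longer detour, since once you know $F(\eta)\le 0$ monotonicity already gives the result without the extra factor of $\eta$.
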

\begin{proof}
We first prove inequality (\ref{charge-opt}).
To this end, we show that for any non-decreasing vector
$v=(v_1,v_2,\ldots)$, and any $0 \leq \eta_1 < \eta_2 \leq 1$, it
holds that

\begin{equation}\label{ineq:montone_hit_cost}
\lambda^{\eta_1} \cdot v \geq \lambda^{\eta_2} \cdot v.
\end{equation}

Note that as $\lambda=\lambda^0$ and $y^*$ is feasible (and thus
satisfies property \eqref{eq:y_2_consistency}), taking $v$ equal
to $y^*$ immediately gives Inequality (\ref{charge-opt}).

By Lemma \ref{obs:nosplit}, we know that the only difference between $\lambda^{\eta_1}$ and $\lambda^{\eta_2}$ is that some of the blocks in $\cB^{\eta_1}$ can be merged in $\cB^{\eta_2}$.
Therefore, it suffices to show that whenever two consecutive blocks $B_1$ and $B_2$ merge to form another block $B$, it holds
 that $\lambda(B_1) \sum_{i \in B_1} v_i + \lambda(B_2) \sum_{i \in B_2} v_i \geq \lambda(B) \sum_{i\in B} v_i$ for any hit cost vector $\lambda$.

Let $\ell_1 = |B_1|, \ell_2 = |B_2|$, and let $a_1 = (\sum_{i \in
B_1} v_i)/\ell_1$, $a_2 = (\sum_{i \in B_2} v_i)/\ell_2$. Then, by
the definition of $\lambda(B)$, the inequality above is equivalent
to showing that
\begin{equation}
\label{eq:phew}
 \lambda(B_1) \ell_1 a_1 +   \lambda(B_2) \ell_2 a_2  \geq  \left(\frac{\lambda(B_1) \ell_1 + \lambda(B_2) \ell_2}{\ell_1 + \ell_2}\right) (a_1 \ell_1 + a_2 \ell_2).
 \end{equation}

As $v$ is increasing we have $a_1 \leq a_2$,  and since $B_1$ and $B_2$ were merged, by \eqref{eq:mergeblocks} it must be that $\lambda(B_1) < \lambda(B_2)$.
A direct calculation shows that \eqref{eq:phew} holds under these conditions.

 Now, to prove that inequality (\ref{charge-on}) also holds, we prove that whenever the derivative of $\lambda^{\eta} \cdot y^{\eta}$ is defined, i.e., whenever neither $A^\eta$ nor $\lambda^\eta$ change (which is the case except for possibly finitely many points, cf. Lemma \ref{obs:nosplit}) we have that

\begin{equation}\label{eq:decreas_loc_hit_cost}
\frac{d\left(\lambda^\eta\cdot y^\eta\right)}{d \eta} \leq 0.
\end{equation}
That is, the state $y^{\eta}$ evolves in a way that reduces the hit cost of the algorithm with respect to the corresponding hit cost vector.

To see how \eqref{charge-on} follows from \eqref{eq:decreas_loc_hit_cost} we first note that \eqref{eq:decreas_loc_hit_cost} implies that
$$
\lambda^\eta \cdot y^\eta \geq  \lambda^1 \cdot y^1,
$$
for any $\eta\in[0,1]$. Now, we have for any block $B\in \cB^{1}$,
$y^1_{\oi,j}=y^1(B)$ for all $j\in B$, and thus
$$
\lambda^0 \cdot y^1 = \sum_{B\in \cB^{1}} \sum_{j\in B} \lambda_{\oi,j}^0 y^{1}_{\oi,j} = \sum_{B\in \cB^{1}} y^1(B) \sum_{j\in B} \lambda_{\oi,j}^0 = \sum_{B\in \cB^{1}} y^{1}(B)\lambda^1(B) |B| = \lambda^1 \cdot y^1,
$$
where we recall that $\lambda^1(B) = (\sum_{j\in B} \lambda^\eta_{\oi,j})/|B| = (\sum_{j\in B} \lambda^0_{\oi,j})/|B|$.

So, we can conclude that
$$
\int_{\eta=0}^1 \lambda^\eta \cdot y^{\eta} d\eta \geq  \int_{\eta=0}^1 \lambda^1 \cdot y^1 d\eta =  \int_{\eta=0}^1 \lambda^0 \cdot y^1 d\eta= \lambda^0 \cdot y^1 = \lambda \cdot y.
$$
In light of the above, it remains to prove
\eqref{eq:decreas_loc_hit_cost}. To this end, recall that when
$A^\eta$ and $\lambda^\eta$ are fixed, the evolution of
$y^{\eta}_{i,j}$s is described by Equation
\eqref{eq:y_evol_general}, thus the statement we need to prove is
\begin{eqnarray}\label{eq:mid_hit_cost}
\sum_{(i,j) \in A^\eta}\frac{\lambda^\eta_{i,j}}{w_i} \left(y^\eta_{i,j}+\beta\right)\cdot \left(N(\eta) - \alpha  \lambda^\eta_{i,j}\right)  &\leq & 0.
\end{eqnarray}
Plugging in the expression for $N(\eta)$ given by
\eqref{eq:n_general} and canceling $\alpha$, we need to show that
\begin{eqnarray*}
 \left(\sum_{(i,j)\in A^\eta}\frac{\lambda^\eta_{i,j}}{w_i} \left(y^\eta_{i,j}+\beta\right)\right)^2 & \leq & \left(\sum_{(i,j) \in A^\eta}\frac{
(\lambda^\eta_{i,j})^2}{w_i}
\left(y^\eta_{i,j}+\beta\right)\right) \cdot \left(\sum_{(i,j) \in
A^\eta}\frac{1}{w_i}\left(y_{i,j}^{\eta}+\beta\right)\right).
\end{eqnarray*}
Now, this inequality follows from the Cauchy-Schwarz inequality
$(a \cdot b)^2 \leq |a|_2^2 |b|_2^2$, by taking $a$ to be the
vector with entries $a_{i,j}
=\sqrt{\frac{(\lambda^\eta_{i,j})^2}{w_i}
\left(y^\eta_{i,j}+\beta\right)}$, and $b$ to be the vector with
entries $b_{i,j}
=\sqrt{\frac{1}{w_i}\left(y_{i,j}^{\eta}+\beta\right)}$.
\end{proof}

\paragraph{Charging the movement cost.}

We turn our attention to the accounting of the movement cost of
our algorithm. Recall that the movement cost at time $t$ is
defined as

\begin{equation}\label{eq:expr_move_cost_org}
\sum_{i=1}^d w_i \left( \sum_{j=1}^k  |y^{t}_{i,j}-y^{t-1}_{i,j}| \right).
\end{equation}
We would like to approximate this expression by one which is
simpler and more convenient to work with. First, instead of
keeping track of both increases and decreases of the variables
$y^t_{i,j}$ as in the above, we will account for the movement cost
only through the increases of  the variables $y^t_{i,j}$. That is,
our bound for the movement cost is

\begin{equation}\label{eq:expr_move_cost_new}
\sum_{i=1}^d w_i \left( \sum_{j=1}^k  \max\{y^{t}_{i,j}-y^{t-1}_{i,j},0\} \right).
\end{equation}
Note that we have for any coordinate $(i,j)$ and $t\geq 1$

$$|y^{t}_{i,j}-y^{t-1}_{i,j}| \leq 2 \cdot \max\{y^{t}_{i,j}-y^{t-1}_{i,j},0\} + y^{t-1}_{i,j}-y^{t}_{i,j},$$
and thus on any input sequence consisting of $T$ requests, it is
the case that

$$
\sum_{t=1}^T \sum_{i=1}^d w_i \left( \sum_{j=1}^k  |y^{t}_{i,j}-y^{t-1}_{i,j}| \right) \leq 2\cdot \sum_{t=1}^T \sum_{i=1}^d w_i \left( \sum_{j=1}^k  \max\{y^{t}_{i,j}-y^{t-1}_{i,j},0\} \right) + \sum_{i=1}^d w_i \left( \sum_{j=1}^k  y^{0}_{i,j}-y^{T}_{i,j} \right).
$$
Thus, accounting for the movement cost via expression
\eqref{eq:expr_move_cost_new} approximates the true movement cost
(corresponding to expression \eqref{eq:expr_move_cost_org}) up to
a multiplicative factor of two and an additive factor of at most
$\sum_{i=1}^d kw_i$, which depends only on the starting and final
configuration, and is zero if the two configurations coincide. As
we will see, for the sake of our competitive analysis, such an
approximation suffices. %

Next, similarly to the way we accounted for the hit cost described
above, we wish to further simplify the charging of the movement
cost and perform it in a continuous and local fashion. Namely, it
is easy to see that the following quantity
$$
 \sum_{i=1}^d w_i \left( \sum_{j=1}^k  \int_{\tau=0}^1 \frac{d y_{i,j}^\tau}{d\tau} d\tau + \int_{\eta=0}^1 \max\{\frac{d y_{i,j}^\eta}{d\eta},0\} d\eta \right)
 $$
 can only overestimate the movement cost given by \eqref{eq:expr_move_cost_new}.
 (Note that $\frac{d y_{i,j}^\tau}{d\tau}$ is always non-negative.) Furthermore, as the derivatives $\frac{d y_{i,j}^\eta}{d\eta}$
 can only be positive if $(i,j)\in A^\eta$, we can write

 $$
\int_{\eta=0}^1 \max\left\{\frac{d y_{i,j}^\eta}{d\eta},0\right\}
d\eta \leq \int_{\eta=0}^1
\frac{1}{w_i}\left(y^{\eta}_{i,j}+\beta\right) N(\eta) \cdot
\mathbf{1}_{(i,j) \in A^{\eta}} d\eta,
 $$
 where we used \eqref{eq:y_evol_general} and the fact that, trivially, $N(\eta)-\lambda_{i,j}^\eta \leq N(\eta)$.
 Thus, we can use the above in our final version of the
estimate of the movement cost. The following claim summarizes the
discussion.
\begin{claim}
\label{clm:mvcost}
For any sequence of requests of length $T\geq 1$,
\begin{eqnarray*}
&& \sum_{t=1}^T \sum_{i=1}^d w_i \left( \sum_{j=1}^k
|y^{t}_{i,j}-y^{t-1}_{i,j}| \right)  \leq \\ && 2 \cdot
\sum_{t=1}^T \sum_{i=1}^d w_i \left( \sum_{j=1}^k  \int_{\tau=0}^1
\frac{d y_{i,j}^\tau}{d\tau} d\tau + \int_{\eta=0}^1
\frac{1}{w_i}\left(y^{\eta}_{i,j}+\beta\right) N(\eta) \cdot
\mathbf{1}_{(i,j) \in A^{\eta}} d\eta \right) +  C',
\end{eqnarray*}
where $C' \leq \sum_i k w_i$ depends only on the starting and
final configuration of the algorithm, and $C'=0$ if the two
configurations coincide.
\end{claim}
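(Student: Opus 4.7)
The plan is to assemble the pieces already sketched in the discussion preceding the claim into a single chain of inequalities, applying three simplifications in order: (i) replacing the two-sided absolute difference by its positive part via a telescoping identity, (ii) bounding each one-sided difference by an integral of the derivative over the fix and hit stages, and (iii) bounding the positive part of the hit-stage derivative using the evolution rule \eqref{eq:y_evol_general}.

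First, for each coordinate $(i,j)$ and time step $t$, apply the pointwise identity $|a-b| = 2\max\{a-b,0\} - (a-b)$ with $a = y^t_{i,j}$ and $b = y^{t-1}_{i,j}$. Summing over $t \in \{1,\ldots,T\}$, the $-(y^t_{i,j} - y^{t-1}_{i,j})$ contributions telescope to $y^0_{i,j} - y^T_{i,j}$. Multiplying by $w_i$ and summing over $(i,j)$ yields an additive correction $C' = \sum_i w_i \sum_{j=1}^k (y^0_{i,j} - y^T_{i,j})$, which is at most $\sum_i k w_i$ (since each $y^0_{i,j} - y^T_{i,j} \leq 1$) and equals $0$ when the starting and final configurations coincide. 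This matches the $C'$ promised in the statement.

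Next, for fixed $t$ and $(i,j)$, decompose the increment according to the two stages of the algorithm as $y^t_{i,j} - y^{t-1}_{i,j} = \int_0^{\tau_e} \frac{dy^\tau_{i,j}}{d\tau}\,d\tau + \int_0^1 \frac{dy^\eta_{i,j}}{d\eta}\,d\eta$. Using the elementary inequalities $\max\{A+B,0\} \leq \max\{A,0\} + \max\{B,0\}$ and $\max\{\int f,0\} \leq \int \max\{f,0\}$, together with the fact that the fix-stage derivative is always non-negative (so its integral coincides with its positive part), I obtain
\[
\max\{y^t_{i,j} - y^{t-1}_{i,j},\,0\} \;\leq\; \int_0^{\tau_e} \frac{dy^\tau_{i,j}}{d\tau}\,d\tau \;+\; \int_0^1 \max\!\left\{\frac{dy^\eta_{i,j}}{d\eta},\,0\right\} d\eta.
\]

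Finally, bound the hit-stage integrand using Equation \eqref{eq:y_evol_general}. When $(i,j)\notin A^\eta$ the derivative is $0$; when $(i,j)\in A^\eta$ it equals $\frac{1}{w_i}(y^\eta_{i,j}+\beta)(N(\eta) - \alpha\lambda^\eta_{i,j})$, which, since $\alpha \lambda^\eta_{i,j} \geq 0$, is at most $\frac{1}{w_i}(y^\eta_{i,j}+\beta) N(\eta)$. In both cases the derivative is dominated by the non-negative quantity $\frac{1}{w_i}(y^\eta_{i,j}+\beta) N(\eta)\cdot \mathbf{1}_{(i,j)\in A^\eta}$, which therefore also bounds its positive part. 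Plugging this into the inequality above, multiplying by $w_i$, summing over $(i,j)$ and $t$, incorporating the factor of $2$ from the first step, and adding the additive $C'$ term produces exactly the bound asserted in Claim~\ref{clm:mvcost}. I do not anticipate any real obstacle: the claim is essentially a clean restatement of the three reductions already described in the text, and the only minor delicacies are the standard interchange $\max\{\int f,0\} \leq \int \max\{f,0\}$ and the mild abuse of notation writing $\int_{\tau=0}^1$ in place of $\int_{\tau=0}^{\tau_e}$ for the fix stage.
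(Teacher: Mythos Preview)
Your proof is correct and follows essentially the same three-step reduction the paper gives in the discussion immediately preceding the claim: replace the absolute value by twice the positive part plus a telescoping remainder, bound the positive part by the integral of the positive part of the derivative over the two stages, and then dominate the hit-stage derivative using \eqref{eq:y_evol_general} and $N(\eta)-\alpha\lambda^\eta_{i,j}\leq N(\eta)$. Your explicit identification of $C'=\sum_i w_i\sum_j(y^0_{i,j}-y^T_{i,j})$ and your remark on the $\int_0^1$ versus $\int_0^{\tau_e}$ notational slip are both accurate.
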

Thus, at time $t$, the movement cost of our algorithm is given by:
\begin{equation}\label{eq:movecost_final}
\sum_{i=1}^d w_i \left( \sum_{j=1}^k  \int_{\tau=0}^1 \frac{d y_{i,j}^\tau}{d\tau} d\tau + \int_{\eta=0}^1 \frac{1}{w_i}\left(y^{\eta}_{i,j}+\beta\right) N(\eta) \cdot
\mathbf{1}_{(i,j) \in A^{\eta}} d\eta \right).
\end{equation}

\subsection{Competitive Analysis}

We are finally ready to bound the competitiveness of our algorithm. To this end, we prove the following theorem.

\begin{thm}\label{thm:frac_alloc_variable}
Consider an arbitrary instance of the allocation problem with cost
vectors $h^1, h^2,\ldots$, a starting configuration $\oy^0$ and a
quota pattern $\kappa=(\kappa(1),\kappa(2),\ldots)$. For any
$0\leq \eps\leq 1$, we have the following bounds:
\begin{eqnarray*}
  \HitAlg & \leq &  (1+\eps)\left(\opt+ w_{\max} \cdot g(k)\right) + C,\\
 \MovAlg  & \leq & O(\log (k/\eps))\cdot \left(\opt + w_{\max} \cdot g(\kappa)\right).
 \end{eqnarray*}
Here, $\HitAlg$ and $\MovAlg$ denote the hit and movement costs of
our fractional algorithm, and $\opt$ denotes the sum of the total
hit and movement costs of a fixed {\em integral} optimum solution
to the allocation problem instance. Let $g(\kappa):=\sum_{t}
|\kappa(t)-\kappa(t-1)|$, and denote by $w_{\max}=\max_i w_i$  the
diameter of our metric space. Let $C$ be a quantity that depends
only on the start and final configurations of the online
algorithm, and $C=0$ if the two configurations coincide.
\end{thm}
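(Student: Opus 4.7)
My plan is to prove the two bounds via a single potential function argument that amortizes the algorithm's costs against the cost $\opt$ of a fixed integral offline solution $y^*$. Motivated by the exponential shape of the update rule (rates proportional to $y_{i,j}+\beta$) and by the primal-dual analyses of weighted paging in \cite{BBN07,BBN10a}, I would choose
$$
\Phi \;=\; \alpha \sum_{i,j} w_i\, y^*_{i,j}\, \ln\!\left(\frac{y_{i,j}+\beta}{\beta}\right).
$$
Since $y^*_{i,j},y_{i,j}\in[0,1]$ and $\alpha\ln(1+1/\beta)=\alpha^2=O(\log^2(k/\eps))$, this potential is bounded by $O(w_{\max}\, k\, \alpha)$ throughout the run; any residual contribution between the initial and final configurations is absorbed into the additive $C$.

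I would then partition time into events and establish local bounds. First, when OPT moves one unit of mass over distance $w_i$, at most one coordinate $y^*_{i,j}$ changes by $1$, so the induced $\Delta\Phi$ is at most $\alpha\cdot w_i$; summing over OPT's trajectory contributes $O(\log(k/\eps))\cdot\opt$ to the accounting. Forced adjustments caused by changes of $\kappa(t)$ account for the $w_{\max}\, g(\cdot)$ slack terms in the theorem. Second, during the fix stage the identity $\int dy/(y+\beta) = \ln((y+\beta)/\beta)\leq \alpha$ converts a quota decrease of $\Delta\kappa$ into fix-stage movement cost at most $\alpha\cdot w_{\max}\,\Delta\kappa$, while $\Phi$ is non-increasing because each $y_{i,j}$ only increases and $y^*$ is static. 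Summing across time, fix-stage cost is $O(\log(k/\eps))\cdot w_{\max}\, g(\kappa)$, as needed.

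The heart of the argument is the hit stage. Using Lemma \ref{lem:charge1} and Claim \ref{clm:mvcost} to pass to continuous accounting, one is left to bound, for every $\eta\in[0,1]$, the hit density $\lambda^\eta\cdot y^\eta$ and the movement density $N(\eta)\sum_{(i,j)\in A^\eta}(y^\eta_{i,j}+\beta)$. A direct computation using \eqref{eq:y_evol_general} gives
$$
\frac{d\Phi}{d\eta} \;=\; \alpha\!\!\sum_{(i,j)\in A^\eta}\!\! y^*_{i,j}\bigl(N(\eta)-\alpha\lambda^\eta_{i,j}\bigr).
$$
For the \emph{movement} bound, I would rewrite the defining identity for $N(\eta)$ from Lemma \ref{lem:wd}, namely $\sum_{A^\eta}(y+\beta)(N(\eta)-\alpha\lambda^\eta)/w_i=0$, and match its terms against $d\Phi/d\eta$ to obtain, after integration over $\eta$, a bound of $O(\log(k/\eps))\cdot \lambda^\eta\cdot y^*$ on the hit-stage movement cost (i.e., $\alpha$ times OPT's hit charge at this request) modulo a telescoping contribution to $\Phi$. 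For the \emph{hit} bound I would show $\lambda^\eta\cdot y^\eta \leq (1+\eps)\,\lambda^\eta\cdot y^*$ plus a $\Phi$-absorbable correction; here the $(1+\eps)$ slack comes precisely from the additive $\beta=\eps/(1+k)$ in $y+\beta$ summed over the at most $k+1$ non-trivial coordinates, combined with the exponential decrease of $y^\eta_{\bar i,j}$ at coordinates where the algorithm overpays, which drives $\Phi$ down at a rate sufficient to cover the excess.

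The main obstacle I anticipate is bookkeeping around discontinuities: coordinates entering or leaving $A^\eta$, blocks merging (so $\lambda^\eta$ jumps), and transitions between the tight-quota ($\sum y^\eta_{i,j}=kd-\kappa(t)$) and slack-quota regimes of \eqref{eq:n_general}. By Lemma \ref{obs:nosplit} these points are finite in number, so I would partition $[0,1]$ into smooth intervals and verify continuity of $\Phi$ at each boundary: at block merges the identity $y^\eta(B_p)=y^\eta(B_{p+1})$ keeps $\Phi$ continuous and the change in $\lambda^\eta$ only helps by the monotonicity \eqref{ineq:montone_hit_cost}; at activation/deactivation events the relevant $dy/d\eta$ vanishes on one side so no cost is charged; and in the slack-quota regime $N(\eta)=0$ zeroes out the movement density. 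Telescoping $\Phi$ over all time steps, using $\Phi\geq 0$, and collecting constants then yields $\HitAlg \leq (1+\eps)(\opt+w_{\max}\, g(k))+C$ and $\MovAlg \leq O(\log(k/\eps))\cdot(\opt+w_{\max}\, g(\kappa))$.
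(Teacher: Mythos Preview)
Your overall architecture is right---a logarithmic potential of the shape $y^*\ln(\text{const}\cdot(y+\beta))$ is exactly what the paper uses---but three specifics diverge from what actually makes the argument close, and at least two of them are genuine gaps.

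\medskip
\textbf{Sign and prefactor of the potential.} The paper's movement potential is
\[
\Phi^m = (1+\eps)\sum_{i,j} w_i\, y^{*}_{i,j}\,\ln\!\Bigl(\tfrac{1+\beta}{y_{i,j}+\beta}\Bigr),
\]
which \emph{decreases} as $y_{i,j}$ increases. Your $\Phi=\alpha\sum w_i y^*\ln((y+\beta)/\beta)$ has the opposite monotonicity, so your fix-stage claim ``$\Phi$ is non-increasing because each $y_{i,j}$ only increases'' is false as written; $\Phi$ \emph{increases} during the fix stage and thus cannot be dropped. More consequentially, the prefactor must be $(1+\eps)$, not $\alpha$: the crucial step in the hit stage (the paper's Claim~\ref{clm:mov_cost_inequality}) says $\sum_{A}(y+\beta)\leq (1+\eps)\sum_A y^{*}+\big(\sum_A y-\sum_A y^{*}\big)$, and the $(1+\eps)$ there is what makes $N(\eta)\bigl[\sum_{A^\eta}(y^\eta+\beta)-(1+\eps)\sum_{A^\eta}y^{*}\bigr]\le 0$. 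With your factor $\alpha$ the bracket does not close, and the change in $\Phi$ due to an OPT move becomes $O(\alpha^2)\cdot w_i$ rather than $O(\alpha)\cdot w_i$, spoiling the $O(\log(k/\eps))$ movement bound.

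\medskip
\textbf{Missing second potential for the hit bound.} For $\HitAlg$ the paper needs, in addition to $\Phi^m$, the linear potential $\Phi^h=\tfrac{1}{\alpha}\sum_{i,j}w_i y_{i,j}$. The identity
\[
\frac{dH_t}{d\eta}+\frac{d\Phi^h}{d\eta}
=\sum_{(i,j)\notin A^\eta}\lambda^\eta_{i,j}y^\eta_{i,j}
+\frac{1}{\alpha}\,\frac{dM_t}{d\eta}-\sum_{(i,j)\in A^\eta}\beta\lambda^\eta_{i,j}
\]
is what converts the raw hit density $\lambda^\eta\cdot y^\eta$ into (roughly) $\tfrac{1}{\alpha}\,dM_t/d\eta$, after which $\tfrac{1}{\alpha}\Phi^m$ absorbs the movement piece and yields $(1+\eps)\,\lambda^\eta\cdot y^{*}$. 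Your sketch tries to go directly from $\lambda^\eta\cdot y^\eta$ to $(1+\eps)\lambda^\eta\cdot y^{*}$ ``plus a $\Phi$-absorbable correction'', but there is no pointwise reason $\lambda^\eta\cdot y^\eta\le(1+\eps)\lambda^\eta\cdot y^{*}$ should hold (take $y^\eta_{\bar i,j}=1$, $y^{*}_{\bar i,j}=0$), and a single logarithmic potential cannot by itself supply the missing $\tfrac{1}{\alpha}$ conversion. Without $\Phi^h$ (or an equivalent device), the hit-cost inequality does not close.

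\medskip
A smaller point: your ``matching'' of the normalization identity $\sum_{A^\eta}\tfrac{1}{w_i}(y+\beta)(N-\alpha\lambda^\eta)=0$ against $d\Phi/d\eta$ cannot directly bound $dM_t/d\eta=\sum_{A^\eta}N(y+\beta)$, because the former carries a $1/w_i$ that the latter does not; the paper bridges this via Claim~\ref{clm:mov_cost_inequality} and the inactive-coordinate bookkeeping around the set $A'\supseteq A^\eta$, not via the normalization identity alone.
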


It is easy to see that Theorem \ref{thm:frac_ov} immediately follows from the above theorem.

To prove Theorem \ref{thm:frac_alloc_variable} we employ a
potential function approach. Namely, we define potentials
$\Phi^h(\oy,t)$ and $\Phi^m(\oy,t)$ that depend on the state $\oy$
of the online algorithm and on the state of some arbitrary fixed
integral optimum solution at time $t$. Then we show that the
following inequalities are satisfied at each time step $t$.

\begin{eqnarray}\label{eq:conservation}
\MovAlg_t + \DelPhi^m_t & \leq & (1+ \eps) \cdot \alpha
\cdot\left(w_{\max} \cdot |\kappa(t)-\kappa(t-1)|+ \MovOPT_t+
\HitOPT_t\right),
\\
\label{eq:hit_conservation} \HitAlg_t  + \DelPhi^h_t
+\frac{1}{\alpha}\DelPhi^m_t & \leq & (1+\eps)\left(w_{\max}
\cdot |\kappa(t)-\kappa(t-1)|+ \MovOPT_t+  \HitOPT_t\right).
\end{eqnarray}

Here, $\HitAlg_t$ (respectively $\HitOPT_t$) and $\MovAlg_t$
(respectively $\MovOPT_t$) denote the hit and movement costs
incurred by the algorithm (respectively optimum) at time $t$. The
quantities
$$\DelPhi^h_t := \Phi^h(\oy^t, t)-\Phi^h(\oy^{t-1},t-1),%
\qquad \DelPhi^m_t := \Phi^m(\oy^t, t)-\Phi^m(\oy^{t-1},t-1)$$
respectively denote the change in the potentials $\Phi^h$  and
$\Phi^m$ at time step $t$.

As we shall see, it will be the case that $\Phi^m(\oy,t)\geq 0$
and $\Phi^m(\oy^0,0)=0$. Moreover, both $\Phi^m(\oy,t)$ and
$\Phi^h(\oy,t)$ will be bounded by some universal constant $C$,
independent of the length of the request sequence. Thus, Theorem
\ref{thm:frac_alloc_variable} will follow by summing up
\eqref{eq:conservation} and \eqref{eq:hit_conservation} over all
times $t$. (Note that in our algorithm $\alpha=\log (1+1/\beta)= O(\log (k/\eps))$.)

\paragraph{The potential functions.} The  potential function $\Phi^m$ is defined as follows.
\begin{eqnarray*}
\Phi^m(\oy, t) & := & (1+\eps)\cdot \sum_{i} w_i \left(
\sum_{j} y^{*t}_{i,j} \cdot \log\left(
\frac{1+\beta}{y^{\eta}_{i,j}+\beta}\right) \right).
\end{eqnarray*}
Here, $\oy^{*t}$ denotes the configuration of the optimum solution at time $t$. Note that if the configuration $\oy^{*t}$
is integral and the optimum has $k^*_i$ servers at location $i$ at time $t$, then the contribution of location $i$ to $\Phi^m(\oy^t,t)$ is
$$w_i \left( \sum_{j > k_i^*} \log \left(\frac{1+\beta}{y^t_{i,j}+\beta}\right) \right).$$

So, roughly speaking, $\Phi^m(\oy^t,t)$ accounts for the excess
servers in the online configuration $\oy^t$ at location $i$
compared to the optimum solution. For example, suppose that
$\oy^{t}$ has $k_i$ servers at location $i$, i.e., $y_{i,j}^t=0$
for $j\leq k_i$, for some $k_i>k_i^*$, and $y_{i,j}^t=1$
otherwise.  Then the contribution of location $i$ to
$\Phi^m(\oy^t,t)$ is $O( w_i (k_i-k_i^*) \log k)$. Intuitively,
the offline adversary can penalize the online algorithm for
``wasting" $k_i-k_i^*$ servers at $i$, by giving cost vectors at
the other locations (where the optimum has more servers), and
making it pay a larger hit cost. The potential $\Phi^m(\oy^t,t)$
will be used to offset this additional hit cost in such
situations.

Next, we define the potential $\Phi^h$ to be
\begin{eqnarray*}
\Phi^h(\oy, t) & := &  \frac{1}{\alpha} \left(\sum_{i,j} w_i
\cdot y^{t}_{i,j} \right).
\end{eqnarray*}
It is easily verified that both potentials are bounded. Moreover
$\Phi^m(\oy^0,0)=0$, as both offline and online are assumed to
start with the same initial configuration.

\paragraph{The proof plan.}
Our goal now is to show that Inequalities \eqref{eq:conservation}
and \eqref{eq:hit_conservation} always hold. For ease of analysis,
we will consider the events at time $t$ in three steps, and show
that Inequalities \eqref{eq:conservation} and
\eqref{eq:hit_conservation} hold at each of these steps. The steps
are the following:
\begin{enumerate}
\item The quota $\kappa(t)$ either increases, decreases, or stays
unchanged, compared to $\kappa(t-1)$, and the optimal solution
removes/adds servers accordingly.
\item The optimal solution moves some
servers and its state changes from $\oy^{*t-1}$ to $\oy^{*t}$.
\item The online algorithm changes its state from $\oy^{t-1}$ to $\oy^{t}$: first, we analyze the fix stage then we analyze the hit stage. Also, while analyzing the hit stage, the hit costs of both the online algorithm and the optimal solution are accounted for.
\end{enumerate}

\paragraph{The server quota increases/decreases and the optimum removes/adds servers.}

Note that the only quantities that can change in this step are the
movement cost of the optimum and the potential $\Phi^m$, thus it
suffices to prove that Inequality \eqref{eq:conservation} is
preserved (in this case, \eqref{eq:hit_conservation} is identical
to \eqref{eq:conservation} scaled by $1/\alpha$).
Now, there are two cases to consider, depending on how $\kappa(t)$ changes.
\begin{itemize}
\item Suppose $\kappa(t) <  \kappa(t-1)$, then the optimum has to
withdraw $|\kappa(t)-\kappa(t-1)|$ servers from some locations.
That is, for $|\kappa(t)-\kappa(t-1)|$ locations $(i,j)$, the
corresponding variables $y_{i,j}^{*(t-1)}$ are set to 1, and as a
result these locations start contributing to $\Phi^m$. Clearly,
each such location $(i,j)$ increases $\Phi^m$ by at most
$$ (1+\eps) w_{i} \cdot \log \left(\frac{1+\beta}{y_{i,j}^{t-1}+\beta}\right)\leq  (1+\eps)\ln\left(1+\frac{1}{\beta}\right)\cdot w_{\max}
=(1+\eps)\alpha \cdot w_{\max}.$$

As a result, the total increase of $\Phi^m$ is at most $w_{\max}
\cdot (1+\eps) \alpha  \cdot |\kappa(t)-\kappa(t-1)|$, and
hence \eqref{eq:conservation} holds even without accounting for
the movement cost of the optimum.

\item  Suppose $\kappa(t) \geq \kappa(t-1)$, then the optimum
can bring in $|\kappa(t)-\kappa(t-1)|$ servers, and as a result
$\Phi^m$ can only decrease, as some terms $y^{*(t-1)}_{i,j}$ may
change from $1$ to $0$, thus not contributing anymore to $\Phi^m$.
Hence, \eqref{eq:conservation} also holds in this case.
\end{itemize}

\paragraph{The optimum moves its servers.}

Without loss of generality, it suffices to analyze the case in
which the optimum moves exactly one server from  location $i$ to
$i'$. (If multiple servers are moved, we can consider these moves
one by one.) Also, as before, only $\Phi^m$ and the offline
movement cost change, and hence it suffices to just show that
\eqref{eq:conservation} holds, and in particular that
$$ \DelPhi^m_t \leq (1+\eps) \alpha \MovOPT_t.$$
Suppose that location $i$ had $j$ servers prior to the movementand
this number is reduced to $j-1$ (recall that by our convention we
account only for the movement cost corresponding to withdrawal of
servers). Then, the contribution of location $i$ to $\Phi^m$
increases by precisely
$$ w_i (1+\eps) \cdot  \log \left( \frac{1+\beta}{y^{t-1}_{i,j}+\beta}\right)  \leq  w_i (1+\eps) \cdot  \alpha =  (1+\eps) \alpha \MovOPT_t.$$

In contrast, increasing the number of servers at $i'$ can only
decrease $\Phi^m$. Thus, we get that the desired inequalities hold
in this case.

\paragraph{The online algorithm is executed.}

The case in which the online algorithm changes its distribution
$\oy^t$ is the most interesting, and we analyze Inequalities
\eqref{eq:conservation} and \eqref{eq:hit_conservation}
separately. Recall that our online algorithm works in two steps:
the fix stage and the hit stage, and hence we consider these
separately. Moreover, as the evolution of the algorithm is
described in a continuous manner, we will analyze Inequalities \eqref{eq:conservation} and
\eqref{eq:hit_conservation} in such a manner too.

\paragraph{The fix stage: proof of Inequality \eqref{eq:conservation}.}
To show that Inequality \eqref{eq:conservation} holds during the fix stage it suffices to prove that for any $\tau$,
\begin{equation}
\label{eq:fixstage4}
\frac{d M_t}{d \tau} + \frac{d \Phi^m}{d \tau} \leq 0.
\end{equation}
By definition of  $\frac{d y_{i,j}^\tau}{d\tau}$ during the fix
stage and our way of accounting for the movementcost (cf. Claim
\ref{clm:mvcost}) we have
$$
\frac{d M_t}{d \tau} =  \sum_{(i,j) \in A^{\tau}} w_i \cdot \frac{1}{w_i} \left(y^\tau_{i,j}+\beta\right) =
\sum_{(i,j) \in A^{\tau}}\left(y^{\tau}_{i,j}+\beta\right).$$
(Recall that $A^{\tau}$ is the set of active coordinates $(i,j)$,
i.e. those for which $y_{i,j} < 1$.)
Also, it is easy to see that the change in the potential function $\Phi^m$ is

$$
\frac{d \Phi^m}{d \tau} = -(1+\eps)\sum_{(i,j) \in A^{\tau}} y^{*t}_{i,j}.
$$
Next we need to prove the following claim

\begin{claim}\label{clm:mov_cost_inequality}
Consider a subset $A$ of the coordinates and two configurations $\oy$ and $\oy'$ with $\sum_{(i,j)\in A} y_{i,j}'\geq 1$, and $\sum_{i,j} y_{i,j}'\geq kd - k$, we have then that
$$
\sum_{(i,j)\in A} (y_{i,j}+\beta) - (1+\eps) \sum_{(i,j)\in A} y_{i,j}' \leq \sum_{(i,j)\in A} y_{i,j}-\sum_{(i,j)\in A} y_{i,j}'.
$$
\end{claim}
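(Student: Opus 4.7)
\textbf{Proof plan for Claim \ref{clm:mov_cost_inequality}.}

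The plan is to reduce the stated inequality to a purely combinatorial bound on $|A|$ and then extract that bound from the two hypotheses using only the boundary condition $y'_{i,j}\in[0,1]$. Writing $S := \sum_{(i,j)\in A} y_{i,j}$, $S' := \sum_{(i,j)\in A} y'_{i,j}$, and subtracting $S$ from both sides, the asserted inequality becomes
\begin{equation*}
\beta |A| - (1+\eps) S' \leq -S',
\end{equation*}
i.e., $\beta |A| \leq \eps S'$. Since $\beta = \eps/(1+k)$, this is equivalent to the purely combinatorial statement
\begin{equation*}
|A| \leq (1+k)\, S'.
\end{equation*}
So the whole proof reduces to establishing this last inequality from the two hypotheses.

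Next I would extract two lower bounds on $S'$. The first is given: $S' \geq 1$. The second uses that $y'_{i,j}\in[0,1]$, and in particular $\sum_{(i,j)\notin A} y'_{i,j} \leq kd - |A|$. Combined with $\sum_{i,j} y'_{i,j} \geq kd - k$, this yields
\begin{equation*}
S' = \sum_{i,j} y'_{i,j} - \sum_{(i,j)\notin A} y'_{i,j} \geq (kd-k) - (kd - |A|) = |A| - k.
\end{equation*}
Thus $S' \geq \max(1,\,|A|-k)$.

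Finally, I would finish by a short case split. If $|A|\leq k+1$, then using $S'\geq 1$ gives $(1+k) S' \geq 1+k \geq |A|$. If $|A|\geq k+2$, then using $S'\geq |A|-k$ gives $(1+k) S' \geq (1+k)(|A|-k)$, and the inequality $(1+k)(|A|-k)\geq |A|$ rearranges to $k(|A|-k-1)\geq 0$, which holds since $|A|\geq k+1$. Either way $(1+k) S' \geq |A|$, completing the proof.

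There is no real obstacle here; the only non-mechanical step is noticing that the quota hypothesis $\sum_{i,j}y'_{i,j}\geq kd-k$ together with the boundary constraint $y'_{i,j}\leq 1$ forces a lower bound on $S'$ that scales linearly in $|A|$, exactly matching the scaling needed on the right-hand side once $\beta$ is expanded as $\eps/(1+k)$. The choice $\beta=\eps/(1+k)$ in the algorithm is tuned precisely so that this reduction goes through.
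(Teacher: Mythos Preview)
Your proof is correct and follows essentially the same approach as the paper's. Both arguments reduce the inequality to $|A|/(k+1)\le S'$ by expanding $\beta=\eps/(k+1)$, then combine the hypothesis $S'\ge 1$ with the bound $S'\ge |A|-k$ (obtained from $\sum_{i,j}y'_{i,j}\ge kd-k$ and $y'_{i,j}\le 1$) via the same case split on $|A|$ versus $k+1$; the only cosmetic difference is that the paper splits at $|A|\le k$ versus $|A|\ge k+1$ whereas you split at $k+1$ versus $k+2$.
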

Before we prove this claim, let us describe how \eqref{eq:fixstage4} follows from it. If we set $\oy=\oy^\tau$, $\oy'=\oy^{*t}$, and $A=A^\tau$, then, clearly, $\sum_{i,j} y_{i,j}^{*t}=kd-\kappa(t)\geq kd - k$. Furthermore, since $(i,j)\notin A^\tau$ only if $y_{i,j}^\tau=1$, and as we apply the fix stage only if $\sum_{i,j} y_{i,j}^{\tau}<kd - \kappa(t)$, we need to have $|A^\tau|>\kappa(t)$, and thus
$$
\sum_{(i,j)\in A} y_{i,j}^{*t}\geq kd - \kappa(t) - |\overline{A^\tau}|\geq 1.
$$
So, both requirements of the claim are satisfied and it follows that
\begin{equation}\label{eq:fix_stage}
\frac{d M_t}{d \tau} + \frac{d \Phi^m}{d \tau} = \sum_{(i,j)\in A^\tau} (y_{i,j}^\tau+\beta) - (1+\eps) \sum_{(i,j)\in A^\tau} y_{i,j}^{*t} \leq \sum_{(i,j)\in A^\tau} y_{i,j}^\tau-\sum_{(i,j)\in A ^\tau} y_{i,j}^{*t}\leq 0,
\end{equation}
where the last inequality follows, as $\sum_{i,j} y_{i,j}^{\tau}<kd - \kappa(t)=\sum_{i,j}y_{i,j}^{*t}$, and  $(i,j)\notin A^\tau$ only if $y_{i,j}^\tau=1$ implies that $\sum_{(i,j)\in A^{\tau}} y_{i,j}^{\tau}\leq \sum_{(i,j)\in A^{\tau}} y_{i,j}^{*t}$.

\begin{proof}[Proof of Claim \ref{clm:mov_cost_inequality}]
As $\sum_{i,j} y_{i,j}' \geq kd - k$ and $\sum_{(i,j)\in A} y_{i,j}' \geq 1$, we have that
\begin{equation}\label{eq:claim_mov_cost_2}
\frac{|A|}{k+1} - \sum_{(i,j)\in A} y_{i,j}' \leq \frac{|A|}{k+1} - \max\{|A|-k,1\} \leq 0,
\end{equation}
where the last inequality follows as $\frac{|A|}{k+1}<1$ when $|A|\leq k$, and $\frac{|A|}{k+1} \leq |A|-k$ when $|A|\geq k+1$.

Now, the claim is proved by noticing that
\begin{eqnarray*}
\sum_{(i,j)\in A} (y_{i,j}+\beta) - (1+\eps) \sum_{(i,j)\in A} y_{i,j}' & = & \eps \left( \frac{|A|}{k+1} - \sum_{(i,j)\in A} y_{i,j}'\right) ~+ ~\left( \sum_{(i,j)\in A} y_{i,j}-\sum_{(i,j)\in A} y_{i,j}'\right) \\
&\leq& \sum_{(i,j)\in A} y_{i,j}-\sum_{(i,j)\in A} y_{i,j}',
\end{eqnarray*}
where the inequality follows by \eqref{eq:claim_mov_cost_2}.
\end{proof}

\paragraph{The fix stage, proof of Inequality \eqref{eq:hit_conservation}.}
To prove that Inequality \eqref{eq:hit_conservation} holds during the fix stage as well, we note that as we are accounting for
the hit cost during the hit stage, currently there is no hit cost incurred, and hence we just need to show that
\begin{equation}
\label{fixstage3}
 \frac{d \Phi^h}{d \tau } + \frac{1}{\alpha} \frac{ d \Phi^m}{d\tau}=0.
 \end{equation}
Observe that
$$\frac{d\Phi^h}{d \tau} = \frac{1}{\alpha} \sum_{(i,j) \in A^\tau} w_i \frac{d y_{i,j}}{d\tau}$$
which is exactly $1/\alpha$ times our accounting for the movement
cost in the fix stage (cf. Claim \ref{clm:mvcost}). Therefore, \eqref{fixstage3} is identical
to \eqref{eq:fix_stage} (up to a scaling by $1/\alpha$), and hence follows from
the above proof.

\paragraph{The hit stage, proof of Inequality \eqref{eq:hit_conservation}.}
Recall that both the optimal solution and the online algorithm incur a hit cost in this stage.
We start by proving that Inequality \eqref{eq:hit_conservation} holds during this stage -- later, we will analyze Inequality \eqref{eq:conservation}. We need to show that
 \begin{equation}
 \label{hitstage2}
 \frac{d H_t}{d \eta} +  \frac{d \Phi^h_t}{d \eta} + \frac{1}{\alpha}\frac{d \Phi^m_t}{d \eta} \leq (1+\eps) \frac{d H^*_t}{d \eta}.
 \end{equation}
First, we note that by our way of accounting for the hit cost of the algorithm (cf. \eqref{eq:hit_cost_account})
$$
\frac{d H_t}{d \eta}  =  \sum_{i,j} \lambda^\eta_{i,j} y^\eta_{i,j}
\qquad \textrm{and} \qquad \frac{d \Phi^h_t}{d \eta}  =
\frac{1}{\alpha}\sum_{(i,j) \in
A^\eta}\left(y^\eta_{i,j}+\beta\right)\cdot \left(N(\eta) - \alpha
\lambda^\eta_{i,j}\right).
$$
Now, we have that
\begin{eqnarray*}
\frac{d H_t}{d \eta} +  \frac{d \Phi^h_t}{d \eta} & =& \sum_{i,j} \lambda^\eta_{i,j} y^\eta_{i,j} + \frac{1}{\alpha}\sum_{(i,j) \in
A^\eta}\left(y^\eta_{i,j}+\beta\right)\cdot \left(N(\eta) - \alpha
\lambda^\eta_{i,j}\right)\\
& = &\sum_{(i,j)\notin A^{\eta}} \lambda^\eta_{i,j} y^\eta_{i,j}
+ \frac{1}{\alpha}\sum_{(i,j) \in A^\eta} (y^\eta_{i,j}+\beta) \cdot N(\eta) -  \sum_{(i,j) \in A^\eta} \beta  \lambda^{\eta}_{i,j}\\
& \leq & \sum_{(i,j)\notin A^{\eta}} \lambda^\eta_{i,j} y^\eta_{i,j}
+ \frac{1}{\alpha}\sum_{(i,j) \in A^\eta} (y^\eta_{i,j}+\beta) \cdot N(\eta)=  \sum_{(i,j)\notin A^{\eta}} \lambda^\eta_{i,j} y^\eta_{i,j}
+ \frac{1}{\alpha} \frac{d M_t}{d\eta},
\end{eqnarray*}
where in the last step we used the expression for our movement cost
accounting (cf. Claim \ref{clm:mvcost}),
\begin{eqnarray}
\frac{d M_t}{d \eta} &= & \sum_{(i,j) \in
A^{\eta}}\left(y^{\eta}_{i,j}+\beta\right)\cdot N(\eta).\label{eq:move_cost_frac}
\end{eqnarray}
Thus, to establish \eqref{hitstage2}, it suffices to show that
\begin{equation}
\label{hitstage5}
 \sum_{(i,j)\notin A^{\eta}} \lambda^\eta_{i,j} y^\eta_{i,j}
+ \frac{1}{\alpha} \frac{d M_t}{d\eta}   + \frac{1}{\alpha}\frac{d \Phi^m_t}{d \eta} \leq (1+\eps) \frac{d H^*_t}{d \eta}.
\end{equation}
By our update rule for $y^{\eta}_{i,j}$, the derivative of $\Phi^m_t$ is
\begin{eqnarray*}
 \frac{d \Phi^m_t}{d \eta} & = &
-(1+\eps)\sum_{(i,j) \in A^{\eta}} y^{*t}_{i,j} \cdot
\left(N(\eta)- \alpha  \lambda^\eta_{i,j}\right).
\end{eqnarray*}
Finally, by our way of accounting for the hit cost of the optimal solution (cf. \eqref{eq:opt_hit_cost_account}) we have
$$
\frac{d H^{*t}}{d \eta} = \sum_{i,j}  \lambda^\eta_{i,j} y^{*t}_{i,j}.
$$
Thus, after multiplying Inequality \eqref{hitstage5} by $\alpha$ and plugging in the above equalities, we need to prove that
\begin{equation}\label{alt:expression}
\alpha \sum_{(i,j)\notin A^{\eta}} \lambda^\eta_{i,j} y^\eta_{i,j} + \sum_{(i,j) \in
A^{\eta}}\left(y^{\eta}_{i,j}+\beta\right)\cdot N(\eta)-(1+\eps)\sum_{(i,j) \in A^{\eta}} y^{*t}_{i,j} \cdot
\left(N(\eta)- \alpha  \lambda^\eta_{i,j}\right)- (1+\eps) \alpha \sum_{i,j}  \lambda^\eta_{i,j} y^{*t}_{i,j} \leq 0.
\end{equation}
Let $A'$ be the set of all coordinates that are either active or have $y_{i,j}^\eta=0$.
As any inactive coordinate has $y_{i,j}^\eta\in \{0,1\}$, we observe that:

\begin{itemize}
\item If $(i,j)\in A'\setminus A^\eta$, then it must be that $y^\eta_{i,j}=0$ and $\alpha \lambda_{i,j}^\eta \geq N(\eta)$. This holds, as a coordinate for which $y_{i,j}^\eta=0$ can be inactive only if $\frac{d y_{i,j}^\eta}{d\eta}=(y_{i,j}^\eta + \beta)(N(\eta)-\alpha \lambda)\leq 0$.
\item If $(i,j)\notin A'$ it must be that $y_{i,j}^{\eta}=1$ and $\alpha \lambda_{i,j}^\eta < N(\eta)$. This holds, as a coordinate for which $y_{i,j}^\eta=1$ can be inactive only if $\frac{d y_{i,j}^\eta}{d\eta}=(y_{i,j}^\eta + \beta)(N(\eta)-\alpha \lambda) > 0$.
\end{itemize}
Thus, using the above observations, we may rewrite \eqref{alt:expression} and get that

\begin{eqnarray} \lefteqn{N(\eta)\cdot \left(\sum_{(i,j) \in
A^{\eta}}\left(y^{\eta}_{i,j}+\beta\right)-(1+\eps)\sum_{(i,j) \in A^{\eta}} y^{*t}_{i,j}\right)- (1+\eps) \alpha \sum_{(i,j)\notin A^\eta}  \lambda^\eta_{i,j} y^{*t}_{i,j} + \alpha \sum_{(i,j)\notin A^{\eta}} \lambda^\eta_{i,j} y^\eta_{i,j}} \nonumber \\
& \leq & N(\eta)\cdot \left(\sum_{(i,j) \in
A'}\left(y^{\eta}_{i,j}+\beta\right)-(1+\eps)\sum_{(i,j) \in A'} y^{*t}_{i,j}\right)- (1+\eps) \alpha \sum_{(i,j)\notin A'}  \lambda^\eta_{i,j} y^{*t}_{i,j} + \alpha \sum_{(i,j)\notin A^{\eta}} \lambda^\eta_{i,j} y^\eta_{i,j} \nonumber \\
& \leq & N(\eta)\cdot \left(\sum_{(i,j) \in
A'}\left(y^{\eta}_{i,j}+\beta\right)-(1+\eps)\sum_{(i,j) \in A'} y^{*t}_{i,j}\right) + \alpha \sum_{(i,j) \notin A'}\lambda^\eta_{i,j} \left(y^\eta_{i,j} - y^{*t}_{i,j}\right) \nonumber \\
& \leq & N(\eta)\cdot \left(\sum_{(i,j) \in
A'}\left(y^{\eta}_{i,j}+\beta\right)-(1+\eps)\sum_{(i,j) \in A'} y^{*t}_{i,j} + \sum_{(i,j) \notin A'}\left(y^\eta_{i,j} - y^{*t}_{i,j}\right)\right). \nonumber%
\end{eqnarray}
The first inequality
follows as $(i,j)\in A'\setminus A^\eta$, then it must be that $\alpha \lambda_{i,j}^\eta \geq N(\eta)$, and because $A^{\eta} \subseteq A'$.
The second inequality %
follows since for all $(i,j)\in A'\setminus A^\eta$ it must be that $y^\eta_{i,j}=0$.
The third inequality %
follows as $\alpha \lambda_{i,j}^\eta < N(\eta)$ for $(i,j)\notin A'$, and by the observation that for each $(i,j) \notin A'$, $y^\eta_{i,j}=1 \geq  y^{*t}_{i,j}$.

If $N(\eta)=0$, then the above expression equals 0, and we are done. Otherwise, if $N(\eta)>0$, we get that

\begin{eqnarray}
\lefteqn{\sum_{(i,j) \in
A'}\left(y^{\eta}_{i,j}+\beta\right)-(1+\eps)\sum_{(i,j) \in A'} y^{*t}_{i,j} + \sum_{(i,j) \notin A'}\left(y^\eta_{i,j} - y^{*t}_{i,j}\right)} \nonumber \\
& \leq & \sum_{(i,j) \in
A'}y^{\eta}_{i,j}-\sum_{(i,j) \in A'} y^{*t}_{i,j} + \sum_{(i,j) \notin A'}\left(y^\eta_{i,j} - y^{*t}_{i,j}\right) \label{ineq:hit3} \\
& = & \sum_{(i,j)}y^{\eta}_{i,j}-\sum_{(i,j)} y^{*t}_{i,j} =0.
\end{eqnarray}
Inequality \eqref{ineq:hit3} follows from Claim \ref{clm:mov_cost_inequality} by the following arguments.
>From Lemma \ref{lem:wd} we know that the set $A^{\eta}$ is non-empty. We claim that this implies that $\sum_{(i,j)\in A'} y_{i,j}^\eta$ has to be positive. Otherwise, all the active coordinates would have $y_{i,j}^\eta=0$, and thus could only increase, contradicting the fact that when $N(\eta)>0$, $\sum_{i,j} \frac{d y_{i,j}^\eta}{d\eta} = \sum_{(i,j)\in A^\eta} \frac{d y_{i,j}^\eta}{d\eta} = 0$.
Moreover, as $N(\eta)>0$, $\sum_{i,j} y_{i,j}^\eta$ is equal to $\sum_{i,j} y^{*t} = kd - \kappa(t)$ (and thus is integral). Now, as $y_{i,j}^{\eta}=1$ if $(i,j)\notin A'$, we have that $\sum_{(i,j)\in A'} y_{i,j}^{*t}\geq \sum_{(i,j)\in A'} y_{i,j}^{\eta}$ and that $\sum_{(i,j)\in A'} y_{i,j}^{\eta}$ is integral as well. As this quantity is positive by the argument above, we get that
\begin{equation}
\label{eq:Aprime1}
\sum_{(i,j)\in A'} y_{i,j}^{*t}\geq \sum_{(i,j)\in A'} y_{i,j}^{\eta}\geq 1.
\end{equation}
Thus, we can use Claim \ref{clm:mov_cost_inequality} with $\oy=\oy^\eta$, $\oy'=\oy^{*t}$ and $A=A'$ (as all the requirements of this claim are satisfied). The last equality follows, since for $N(\eta)>0$, we get from the algorithm, $\sum_{i,j} y_{i,j}^\eta = kd - \kappa(t) = \sum_{i,j} y^{*t}_{i,j}$.

\paragraph{The hit stage, proof of Inequality \eqref{eq:conservation}.}
To show that  \eqref{eq:conservation} holds,  we need to show that
\begin{equation}
\label{hitcase}
 \frac{d M_t}{d \eta} + \frac{d \Phi^m_t}{d \eta} \leq (1+\eps) \alpha \cdot \frac{d H^{*t}}{d \eta}.
 \end{equation}
However, this follows directly by noting that the above is simply Inequality \eqref{hitstage5} after removing the first (non-negative) term and scaling by $\alpha$.

\section{Fractional $k$-server on Weighted HSTs}
\label{s:alloc_to_kserver}
In this section, we show how the fractional allocation algorithm on a weighted star can be used as a building block to obtain a fractional $k$-server
solution on a weighted HST. In particular, we prove the following.

{
\renewcommand{\thethm}{\ref{thm:alloc_to_kserver}}
\begin{thm}
Let $T$ be a weighted $\sigma$-HST with depth $\ell$. If, for any
$0\leq \eps \leq 1$, there exists a $(1+\eps,\log
(k/\eps))$-competitive algorithm for the fractional allocation
problem on a weighted star, then there is an $O(\ell \log
(k\ell))$-competitive  algorithm for the fractional $k$-server
problem on $T$, provided $\sigma = \Omega( \ell \log(k \ell))$.
\end{thm}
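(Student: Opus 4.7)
The plan is to adapt the recursive reduction of Cot\'{e} et al.\ (captured in Theorem~\ref{thm:mey}) so that (i) the allocation instances run on the \emph{weighted} stars formed by each internal node's children, rather than uniform ones, and (ii) the outputs are consumed as fractional $k$-server configurations rather than integral randomized ones. Concretely, at every internal node $p$ of the weighted $\sigma$-HST $T$, I install an instance of the fractional allocation algorithm of Theorem~\ref{thm:frac_ov} on the star whose leaves are $p$'s children $c_1,\ldots,c_d$, with star weights $w_i$ equal to the edge lengths from $p$ to $c_i$. The root's instance is given the constant quota pattern $\kappa^{\mathrm{root}}(t)=k$, and for any internal non-root node $c$ with parent $p$, the quota $\kappa^{c}(t)$ at time $t$ is the expected number of servers that $p$'s allocation assigns to $c$, namely $\sum_{j} j\cdot x^{t}_{c,j}$ coming out of $p$'s instance. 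The cost vector $h^{t}(\cdot)$ delivered to $c$ at time $t$ has $h^{t}(j)$ equal to the minimum cost (hit plus internal movement) that the optimal $k$-server solution, restricted to $T(c)$ and endowed with exactly $j$ servers, would pay up to time $t$; as in~\cite{CMP08}, this can be computed online and satisfies monotonicity. The fractional $k$-server distribution on the leaves is read off level by level: the marginal $k^{t}(c)=\sum_{i\in T(c)} p^{t}_{i}$ at each subtree is precisely $\sum_{j} j\cdot x^{t}_{c,j}$.

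Next I bound the cost. The fractional $k$-server movement cost restricted to the edge $(p,c)$ is $W(c)\cdot |k^{t}(c)-k^{t-1}(c)|$, which is exactly what the allocation instance at $p$ accounts for as the part of its movement cost associated with child $c$. Hence the total fractional $k$-server movement cost equals the sum of the movement costs of all the allocation instances on all levels. The hit cost of the fractional $k$-server solution at the leaf $\bar{i}$ of a request is, by construction of the cost vectors, obtained by cascading the hit cost guarantees from the root down to $\bar{i}$, picking up one $(1+\eps)$ factor at each of the $\ell$ levels. Using the guarantee from Theorem~\ref{thm:frac_ov}, each allocation instance at node $p$ satisfies
\begin{equation*}
\HitAlg_p \le (1+\eps)\bigl(\opt_p+\Delta_p\cdot g(\kappa^{p})\bigr),\qquad \MovAlg_p \le \beta\bigl(\opt_p+\Delta_p\cdot g(\kappa^{p})\bigr),
\end{equation*}
with $\beta=O(\log(k/\eps))$ and $\Delta_p=2\max_i w_i$ the diameter of the local star.

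The key structural observation, which is what forces the weighted-HST assumption, is that for any non-root, non-leaf node $c$ with parent $p$, the variation $g(\kappa^{c})$ is nothing but the number of server units that cross the edge $(p,c)$ over time, so the movement cost at $p$'s level attributable to $c$ equals $W(c)\cdot g(\kappa^{c})$. Since $T$ is a weighted $\sigma$-HST, $\Delta_c\le (2/\sigma)\cdot W(c)$, so the slack term $\Delta_c\cdot g(\kappa^{c})$ at level of $c$ is at most $(2/\sigma)$ times the movement cost booked at $p$'s level for $c$. Summing this recursion over the $\ell$ levels yields a geometric series in $\beta/\sigma$: the overall hit cost blowup is $(1+\eps)^{\ell}\cdot (1+O(\beta/\sigma))^{\ell}$ times $\OPT$, and the total movement cost is at most $O(\ell\beta)$ times $\OPT$ plus a boundary term dominated by the hit cost. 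Setting $\eps=1/\ell$ makes $(1+\eps)^{\ell}=O(1)$, and the assumption $\sigma=\Omega(\ell\log(k\ell))=\Omega(\ell\beta)$ makes the $(1+O(\beta/\sigma))^{\ell}$ factor also $O(1)$. This delivers the claimed $O(\ell\log(k\ell))$ competitive ratio.

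The main obstacle I expect is handling the fractional, non-integer quotas $\kappa^{c}(t)$ passed between levels: Theorem~\ref{thm:frac_ov} is stated for integer quotas bounded by $k$. Two routes are available: either verify directly that the fractional allocation algorithm of Section~\ref{s:frac} extends, essentially unchanged, to continuous-valued quota patterns (its dynamics are already described as continuous flows, so only the fix stage's termination condition needs re-interpreting), or discretize $\kappa^{c}(t)$ by rounding to the nearest integer and absorb the $O(1)$-per-change rounding error into the $\Delta_p\cdot g(\kappa^{p})$ slack, which gets geometrically damped by $\sigma$ in the same way. A secondary subtlety is defining the cost vector $h^{t}(\cdot)$ for $c$ consistently with an \emph{online} optimum (one cannot look at the global offline optimum); this is handled exactly as in~\cite{CMP08} by charging against the optimal fractional solution's cost inside $T(c)$, noting that the fractional $k$-server optimum is at most the integral one, so the overall competitive bound still refers back to integral $\OPT$.
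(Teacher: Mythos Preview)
Your reduction passes down only the \emph{expectation} $\kappa^{c}(t)=\sum_{j} j\,x^{t}_{c,j}$ of the server count at each child, and this is where the argument breaks. The crucial recursive link is that the hit cost paid at $p$'s allocation instance for child $c$---namely $\sum_{t}\sum_{j} x^{t}_{c,j}\, h^{t}_{c}(j)$---must upper bound (up to the $g(\kappa)$ slack) the quantity $\optcost(c,\kappa^{c})$ that appears in the inductive hypothesis for $T(c)$. With your single scalar quota this inequality simply fails: take $x^{t}_{c,0}=x^{t}_{c,2}=\tfrac12$, so $\kappa^{c}(t)=1$; the hit cost you pay is $\tfrac12\bigl(h^{t}_{c}(0)+h^{t}_{c}(2)\bigr)$, but $h^{t}_{c}(1)$ can be arbitrarily larger than this average (monotonicity gives no convexity). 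Lemma~\ref{l:optcost_to_hc} then offers no way to dominate $\optcost(c,\kappa^{c})$ by the hit cost you actually paid at $p$, so the cascade of $(1+\eps)$ factors you describe never gets off the ground. Neither of your proposed fixes addresses this: extending the allocation algorithm to continuous quotas is orthogonal to the accounting gap, and rounding $\kappa^{c}$ to an integer $\hat\kappa^{c}$ still leaves $\sum_{t} h^{t}_{c}(\hat\kappa^{c}(t))$ potentially much larger than the hit cost paid, since the paid cost is an average of $h^{t}_{c}$ over the whole distribution $x^{t}_{c,\cdot}$ rather than its value at the mean.

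The paper avoids this by \emph{not} collapsing to the mean. At each child $c$ it maintains a convex combination $\Lambda^{t}_{c}=\{(\lambda_{s},\kappa_{s})\}_{s}$ of allocation instances, one for each \emph{integer} quota path $\kappa_{s}$, with weights satisfying the consistency relation $\sum_{s:\kappa_{s}(t)=j}\lambda_{s}=x^{t}_{c,j}$ (equation~\eqref{eq:lambda_consist}). The hit cost at $p$ for $c$ is then exactly $\sum_{s}\lambda_{s}\,\hc(c,\kappa_{s})$ (see \eqref{eq:delta_to_hit}), and each term $\hc(c,\kappa_{s})$ is within $\tfrac{2W(c)}{\sigma-1}g(\kappa_{s})$ of $\optcost(c,\kappa_{s})$ by Lemma~\ref{l:optcost_to_hc}, which now applies because each $\kappa_{s}$ is integer-valued. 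The induction is invoked on each instance separately and the convex combination closes the recursion (Theorem~\ref{th:rec}, equations~\eqref{rec:cost2}--\eqref{eq:tot}). In short, the missing idea is that you must propagate the full distribution over integer quotas down the tree, not just its first moment.
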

\addtocounter{thm}{-1}
}

To this end, we focus on a particular weighted $\sigma$-HST $T$
and show how to construct a fractional $k$-server algorithm on it.
Roughly speaking, the construction works as follows. Each internal
node $p$ of $T$ will run a number of instances of the allocation
problem which differ with respect to their quota patterns, but
have the same hit cost vectors. These instances are maintained as
a convex combination. The fractional solutions to the different
instances, which we compute online using the fractional allocation
algorithm, determine in a recursive manner how the servers
available at each node are distributed among its children.

While this approach is similar to the approach of Cot\'{e} et
al.~\cite{CMP08}, the main difference here is that we can use the
(much weaker) fractional allocation problem instead of using a
randomized (integral) algorithm for the allocation problem.

Let us denote by $r$ the root of our $\sigma$-HST $T$. Recall that
for a node $p$ of $T$, $T(p)$ denotes the subtree
rooted at $p$, $W(p)$ is the length of the edge connecting
$p$ to its parent, and $w(p,i)$ denotes the length of the edge
connecting $p$ to its child $p_i$. By the definition of a weighted
$\sigma$-HST, we have  $W(p) \geq \sigma w(p,i)$ for all children
$i$ of $p$, unless $p$ is either a leaf or the root.

Recall that the input to the fractional allocation problem running at node
$p$ consists of
the quota pattern $\kappa=(\kappa(1),\kappa(2),\ldots)$ specifying
the number of servers $\kappa(t)$ available at each time $t$, and
the hit cost vectors $h^t$ that arrive at each time $t$ at
location $\ooi$.
The output of an algorithm for the fractional allocation problem
specifies a fractional solution $\ox^t$ that provides a
distribution on the number of servers at each location $p_i$,
subject to the aggregate bound of $\kappa(t)$ on the (expected)
number of servers.

Now, let us fix some instance $\rho$ of the $k$-server problem on
the  leaves of $T$. Let $\rho = (\rho(1),\rho(2),\ldots)$ be the
request sequence, where $\rho(t)$ denotes the leaf requested at
time $t$.

\begin{defn}
For a node $p$, integer $j$ and time $t$, let $\optcost(p,j \cdot
\vec{1},t)$ be the optimum cost for serving the $k$-server
instance $\{\rho(1),\ldots,\rho(t)\} \cap T(p)$, i.e. the request
sequence $\rho$ restricted to the leaves of $T(p)$ until time $t$,
subject to the constraint that exactly $j$ servers are available.
\end{defn}

{\em Remark:} $\optcost$ is well defined only with respect to an
initial configuration, which we will always assume to be the
initial starting positions of the servers at $t=0$. Also, we use
the notation  $\optcost(p,j \cdot \vec{1},t)$, instead of just
$\optcost(p,j,t)$, as we will later extend the definition of
$\optcost$ to the case in which $j$ can vary with time. For now,
we only consider fixed $j$.

\subsection{The algorithm}
\label{s:alloctokserver} In this section we define the ensemble of
allocation problems that will be running at each of the  internal
nodes of $T$. To do this, we have to define how the hit-cost
vectors and the quota patterns are generated. Consider some
internal node $p$. As mentioned earlier, each internal node $p$
will run several instances of the allocation problem that are
different with respect to their quota pattern. It will also hold a
convex combination over these instances. All instances will have
the same hit cost vector that will be defined later. We denote the
convex combination over allocation instances on node $p$ at time
$t$ by $\Lambda_p^t$.
$\Lambda_p^t$ is specified via the collection
$$  \Lambda_p^t = \{(\lambda_{p,s}^t,\kappa_{p,s}^t,H_{p}^t)\}_s, \qquad \forall t,p, \sum_{s} \lambda_{p,s}^t=1$$
Here $\lambda_{p,s}^t$ determines the fraction at time $t$ given to the instance with quota pattern $\kappa_{p,s}^t$ (until time $t$), and
$H_{p}^t=\{h^1_{p},h^2_{p},\ldots,h^t_{p}\}$ is the sequence of hit cost vectors that have appeared until time $t$. As we will see shortly,
the hit cost vector will be the same for all instances $s \in \Lambda_p^t$ and therefore there is no subscript $s$ to the hit cost.
We will use the notation $s\in \Lambda_p^t$ to index the triples $(\lambda_{p,s}^t,\kappa_{p,s}^t,H_{p}^t)$ in $\Lambda_p^t$.
As we shall see later, the convex combination $\Lambda_p^t$ will be a refinement of the convex combination $\Lambda_p^{t-1}$, for every $t$.

To complete the description of the fractional $k$-server algorithm
we need to define $\Lambda_p^t$ for each node $p$ and time $t$,
and show how the fractional number of servers at the leaves of $T$
is computed. We begin with defining how the hit costs $h_{p}^t$
are generated for each node $p$.

\paragraph{Hit costs:} Consider any internal node $p$. Let $p_1,\ldots, p_d$ be the children of $p$. For the allocation problems running at $p$, at time $t$ we give the hit cost vector
$$h^t_{p_i}(j) = \optcost(p_i,j \cdot \vec{1},t) - \optcost(p_i,j \cdot \vec{1},t-1).$$
As Cot\'{e} et al.~\cite{CMP08} prove, the cost vectors $h^t_{p_i}$ have the desired monotonicity property, i.e., $h^t_{p_i}(0)\geq h^t_{p_i}(1)\geq \ldots \geq h^t_{p_i}(k)$ for each $i$ and time $t$.
The following crucial observation follows directly from the definition of the $k$-server problem.

\begin{obs}\label{obs:infcost}
Consider subtree $T(p)$ and request $\rho(t)$. If $\rho(t) \in T(p_i)$, then
\begin{enumerate}
\item $h_p^t(i,0)=\infty$. (This follows since any $0$-server solution is infeasible for any instance with one or more requests, or equivalently incurs infinite cost.).
\item $h^t_p(i',j)=0$  for all $i'\neq i$ and for all $j$. (This follows simply since the request is not in the sub-tree of $p_{i'}$ for $i' \neq i$.).
\end{enumerate}
\end{obs}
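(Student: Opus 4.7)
The plan is to read off both parts directly from the definitions of $\optcost$ and $h^t_{p_i}$. Recall that
\[
h^t_p(i,j) \;=\; h^t_{p_i}(j) \;=\; \optcost(p_i, j\cdot\vec{1}, t) \;-\; \optcost(p_i, j\cdot\vec{1}, t-1),
\]
where $\optcost(p_i, j\cdot\vec{1}, t)$ is the optimal cost of serving the request sequence $\{\rho(1),\ldots,\rho(t)\}\cap T(p_i)$ using exactly $j$ servers, starting from the fixed initial configuration. So each part reduces to comparing what $\optcost(p_i, j\cdot\vec{1}, \cdot)$ is at times $t-1$ and $t$ under the stated hypothesis.

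For part 2, I would argue as follows. Since $p_{i'}$ and $p_i$ are distinct children of $p$, the subtrees $T(p_{i'})$ and $T(p_i)$ are disjoint. Hence the hypothesis $\rho(t)\in T(p_i)$ implies $\rho(t)\notin T(p_{i'})$, so the restricted sequences satisfy
\[
\{\rho(1),\ldots,\rho(t)\}\cap T(p_{i'}) \;=\; \{\rho(1),\ldots,\rho(t-1)\}\cap T(p_{i'}).
\]
Since $\optcost$ depends only on this restricted sequence (together with the fixed initial configuration and the number $j$ of servers), we get $\optcost(p_{i'}, j\cdot\vec{1}, t) = \optcost(p_{i'}, j\cdot\vec{1}, t-1)$, and subtracting yields $h^t_p(i',j)=0$. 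This holds for every $j$, as claimed.

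For part 1, the point is that the restricted sequence at $p_i$ up to time $t$ contains the request $\rho(t)\in T(p_i)$, which is a request that must be served by some server in $T(p_i)$. If $j=0$, no server is available in $T(p_i)$, so the problem is infeasible and by the standard convention $\optcost(p_i, 0\cdot\vec{1}, t)=\infty$. Therefore $h^t_p(i,0)=\infty$ regardless of the value of $\optcost(p_i, 0\cdot\vec{1}, t-1)$ (which is either finite, in which case the difference is $\infty$, or was already $\infty$ at some earlier time, in which case the hit cost was already declared infinite and this does not affect any algorithm that respects the infinite hit cost).

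There is essentially no obstacle here: both facts are immediate from disjointness of subtrees of distinct children of $p$ and from infeasibility of serving an in-subtree request with zero in-subtree servers. The only minor care required is to fix a convention for $\optcost$ on infeasible inputs, which is standard.
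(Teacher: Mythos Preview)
Your proposal is correct and follows exactly the reasoning the paper sketches in the parenthetical remarks: part 2 via disjointness of the subtrees $T(p_i)$ and $T(p_{i'})$ so the restricted request sequence at $p_{i'}$ is unchanged, and part 1 via infeasibility of a $0$-server instance once a request lands in $T(p_i)$. There is nothing to add.
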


This completes the description of the cost vectors of node $p$. We next define the quota patterns $\kappa_{p,s}(t)$ for the various allocation instances running at node $p$.

\paragraph{Quota patterns:}

The quota patterns are determined recursively in a top down manner over the tree $T$ (and inductively over time) by the fractional solutions of the allocation instances that are generated at each node. To specify how these patterns evolve, we describe below a procedure for updating both the quota patterns $\kappa^t_{p,s}$ and the convex combination $\lambda^t_{p,s}$, associated with the allocation instances maintained.

\paragraph{Base case:}

\begin{enumerate}
\item At the root $r$ of the tree $T$ there is a single allocation instance running  with a quota of $k$ at all times.
That is, $\Lambda_r^t$ consists of a single allocation instance (with fraction 1), hit costs as described above, and $\kappa = k \cdot \vec{1}.$
\item  For any internal node $p \in T$ and time $t=0$, $\Lambda^0_p$ consists of a single allocation instance (with fraction 1). The quota pattern $\kappa_{p,s}$ for this single instance $s$, until time $t=0$, is simply the number of servers present initially at the leaves of subtree $T(p)$. Moreover, there is no hit cost thus far.
\end{enumerate}

\paragraph{The inductive step:}
Consider time $t$. We describe the procedure to obtain $\Lambda_p^{t}$ from $\Lambda_p^{t-1}$ in a top
down manner on the tree as follows.
As the base case, recall that  $\Lambda_r^t$ has already been determined for all $t$.
Arguing inductively top down on the tree, suppose that $\Lambda_p^t$ has already been determined. Then, for the children $p_1,\ldots,p_d$ of $p$,
we determine $\Lambda_{p_i}^t$ as follows.

Consider the allocation instances that are executed at node $p$. Let $\{x^t_{i,j,s}\}_{i,j,s}$ be the fractional solutions generated (by the allocation instances) at time $t$. The algorithm will maintain the following consistency between the quota for servers available at $p_i$  and what the allocation problems running at the parent $p$ determine. In particular,

\begin{equation}\label{eq:lambda_consist}
\mbox{(Consistency)} \qquad  \sum_{s\in \Lambda_{p_i}^t  \ | \ \kappa^t_s(t)=j} \lambda^t_s = \sum_{s \in \Lambda^t_p}  \lambda^t_s x_{i,j,s} \triangleq x_{i,j}^t,
\end{equation}

Also, for each child $p_i$ it should maintain

\begin{equation}\label{eq:conv_lambda}
\mbox{(Convex combination)} \qquad \sum_{s \in \Lambda_{p_i}^t} \lambda^t_s =1.
\end{equation}

Suppose that $x_{i,j}^{t-1}$ changes to $x_{i,j}^{t}$ due to the execution of the allocation instances $s \in \Lambda_p^t$ at time $t$. We show how to update
$\Lambda_{p_i}^t$ from $\Lambda_{p_i}^{t-1}$ such that it remains consistent with (\ref{eq:lambda_consist}) and (\ref{eq:conv_lambda}). This update will be done in a natural (and cost-efficient) way.

Consider first the cost paid by the convex combination of the allocation instances running at node $p$.
The cost is
\begin{eqnarray}
 \sum_s \sum_i w(p,i) \lambda^t_s \sum_{j=1}^{k} \left|\sum_{\ell<j}\left(x_{i,\ell,s}^t - x_{i,\ell,s}^{t-1}\right)\right| \geq
  \sum_i w(p,i) \sum_{j=1}^{k} \left|\sum_s \lambda^t_s\sum_{\ell<j}\left(x_{i,\ell,s}^t - x_{i,\ell,s}^{t-1}\right)\right|,
\end{eqnarray}
where the inequality follows, since for any non-negative numbers
$p_i$, $\sum_{i}p_i|a_i| \geq |\sum_{i}p_i a_i|$.

We note that the change from  $x^{t-1}_{i,j}$ to $x^{t}_{i,j}$ can
be decomposed into a collection of elementary moves in which $\pm
\delta(i,j)$ units of mass are removed from $x_{i,j}$ and put on
$x_{i,j\pm 1}$, such that the total fractional movement cost
remains the same. Thus, we can assume without loss of generality
that $x^t_{i,j}$ and $x^{t-1}_{i,j}$ differ by an elementary move.

Consider an elementary move where $x^{t}_{i,j} = x^{t-1}_{i,j}-
\delta$ and $x^t_{i,j-1} = x^t_{i,j-1}+ \delta$ (all other types
of elementary moves are handled analogously). To implement this,
we choose an arbitrary $\delta$ measure of allocation problems
$s\in \Lambda_{p_i}^{t-1}$ with $\kappa_s(t-1) =j$ and set
$\kappa_s(t)$ to $j-1$. For all other $\kappa_s$, we set
$\kappa_s(t)=\kappa_s(t-1)$. After all  entries are updated by
applying the elementary moves, $\kappa_s(t)$ is determined.

It is clear that this update rule maintains both (\ref{eq:lambda_consist}) and  (\ref{eq:conv_lambda}).
This completes the procedure for obtaining $\Lambda_i^t$ from $\Lambda_i^{t-1}$.

\paragraph{Obtaining the fractional $k$-server solution:}

To complete the description of our algorithm we should describe how to determine the fractional number of servers at each leaf $q$ at each time $t$. This is determined in a natural way using the following observation.
Consider a leaf $q$ and let $p$ be its parent. Then,

$$z(q,t):=\sum_{s\in \Lambda_{p}^t} \lambda_s^t \sum_j  j \cdot x^t_{q,j,s}$$
is the number of servers at $q$ at time $t$. Here, $x^t_{q,j,s}$ is the probability of having $j$ servers at $q$  at time $t$, when the fractional allocation algorithm is applied to the allocation instance $s \in \Lambda_{p}^t$.

\subsection{Feasibility}

We first note that our fractional $k$-server solution is feasible since it satisfies the following.

\begin{lemma}
\label{l:feasibility}
Whenever there is a $k$-server request $\rho(t)$,
then there is at least one server unit at the location $\rho(t)$, i.e. $z_{\rho(t),t} \geq 1$. This holds provided the  total cost incurred by the allocation problems is finite.
\end{lemma}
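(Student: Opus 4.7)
The plan is to show feasibility by propagating, along the root-to-leaf path ending at $\rho(t)$, the constraint that the probability of having zero servers in the requested subtree must vanish. Let $r = p^{(0)}, p^{(1)}, \ldots, p^{(\ell)} = \rho(t)$ be the path from the root to the requested leaf, and let $p^{(j+1)}$ be the child of $p^{(j)}$ that lies on this path. The key tool is Observation \ref{obs:infcost}, which says that at each internal node $p^{(j)}$, the hit-cost vector supplied to the allocation problems running at $p^{(j)}$ satisfies $h^t_{p^{(j+1)}}(0) = \infty$.

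The first step is to prove, by induction on $j$ from $0$ to $\ell-1$, the following invariant: every allocation instance $s \in \Lambda_{p^{(j)}}^t$ with $\lambda_s^t > 0$ has $\kappa_s^t(t) \geq 1$. The base case $j=0$ is immediate, since the unique allocation instance at the root has quota $k \geq 1$. For the inductive step, assume the invariant at $p^{(j)}$. Since each such instance has quota at least $1$, and since the hit cost at location $p^{(j+1)}$ with $0$ servers is infinite, the finiteness of the total cost incurred by the allocation problem running at $p^{(j)}$ forces
\[
x^t_{(j+1),0,s} = 0 \qquad \text{for every } s \in \Lambda_{p^{(j)}}^t \text{ with } \lambda_s^t > 0.
\]
Applying the consistency relation \eqref{eq:lambda_consist} with $i$ corresponding to the child $p^{(j+1)}$ and $j=0$ gives
\[
\sum_{s' \in \Lambda_{p^{(j+1)}}^t \,|\, \kappa_{s'}^t(t) = 0} \lambda_{s'}^t
\;=\; \sum_{s \in \Lambda_{p^{(j)}}^t} \lambda_s^t \, x^t_{(j+1),0,s} \;=\; 0,
\]
so no instance at $p^{(j+1)}$ with positive weight has a quota of $0$ at time $t$, completing the induction.

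To finish, apply the same reasoning one last time at the parent $p = p^{(\ell-1)}$ of the requested leaf. By the invariant, every instance $s \in \Lambda_p^t$ with $\lambda_s^t > 0$ has $\kappa_s^t(t) \geq 1$, and by Observation \ref{obs:infcost} combined with finiteness of cost we obtain $x^t_{\rho(t),0,s} = 0$ for all such $s$. Hence $\sum_{j \geq 1} x^t_{\rho(t),j,s} = 1$ and therefore $\sum_j j \cdot x^t_{\rho(t),j,s} \geq 1$ for every such $s$. Plugging into the definition of $z(\rho(t),t)$ and using $\sum_s \lambda_s^t = 1$ yields $z(\rho(t),t) \geq 1$, as desired.

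The main (and essentially only) obstacle is making the two appeals to finite cost rigorous: first at each internal node on the path, where finite cost of the allocation problem together with $h^t(0) = \infty$ rules out any mass on the $0$-server coordinate, and then at the parent of $\rho(t)$ to conclude feasibility. Everything else is bookkeeping with \eqref{eq:lambda_consist} and \eqref{eq:conv_lambda}.
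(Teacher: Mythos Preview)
Your proof is correct and uses the same key observation as the paper (Observation~\ref{obs:infcost} plus finiteness of cost), but you do more work than necessary. The paper's proof skips the root-to-leaf induction entirely and argues directly at the parent $p$ of the requested leaf: since $h^t_{\rho(t)}(0)=\infty$, finiteness of cost alone forces $x^t_{\rho(t),0,s}=0$ for every instance $s\in\Lambda_p^t$ with $\lambda_s^t>0$, and then $\sum_j x^t_{\rho(t),j,s}=1$ gives $\sum_j j\cdot x^t_{\rho(t),j,s}\ge 1$, hence $z(\rho(t),t)\ge 1$.

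Your inductive invariant (that every positively-weighted instance along the path has quota $\ge 1$) is true, but you never actually need it: even if some instance had quota $0$, it would be forced to put all mass on $x^t_{\cdot,0,s}=1$, incur infinite hit cost, and contradict the finiteness hypothesis. So the premise ``quota $\ge 1$'' in your inductive step is redundant---conditions (b) infinite hit cost at $0$ servers and (c) finite total cost already imply $x^t_{(j+1),0,s}=0$ without (a). What your induction really buys is a proof that the online algorithm \emph{can} achieve finite cost (i.e., that positive quota propagates down so the algorithm is never forced into an infeasible corner), which is a stronger but different statement than the conditional one in the lemma.
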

\begin{proof}
The lemma follows by the way the hit costs are generated. Suppose leaf $q$ is requested at time $t$, and $q$ is the $i$-th child of its parent $p$.
Then, by observation \ref{obs:infcost} (part 1), the hit cost entry $h^t(i,0)$ for every allocation instance running at $p$ is $\infty$.
Thus, if the total cost of the allocation problems is finite, it must be that for each $s \in \Lambda_p^t$, the algorithm ensures that
$x^t_{q,0,s}=0$. Since, $\sum_j x^t_{q,j,s}=1$ for all $s$, and $\sum_{s \in \Lambda_p^t} \lambda_s =1$, it follows  that
$z(q,t) = \sum_{s\in \Lambda_{p}^t} \lambda_s^t \sum_j  j \cdot x^t_{q,j,s} \geq 1.$
\end{proof}

{\em Remark:} Lemma \ref{l:feasibility} assumes that the total cost of the allocation problems is finite. Later on we  show that the cost is in fact bounded by at most a polylogarithmic factor from the optimal $k$-server cost, and hence finite.

\subsection{Performance analysis}

We first show that the cost of the fractional $k$-server solution we generate (at the leaves of the tree) is at most the total convex combination cost of the allocation instances running on $T$.
For a node $p$ (not necessarily a leaf) in $T$, let
$z(p,t)$ denote the total (fractional) number of servers at time $t$ at the leaves of the subtree $T(p)$.
The cost of the $k$-server solution is
$$ \sum_t \sum_p W(p)  |z(p,t) - z(p,t-1)|. $$

\begin{lemma}
\label{l:easypart}
The movement cost incurred by the fractional $k$-server solution is no more than the total movement cost incurred by the convex combination of the allocation instances running on internal nodes of $T$.
\end{lemma}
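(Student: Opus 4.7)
The plan is to compare the cost edge-by-edge in $T$. Every non-root, non-leaf edge of $T$ runs from some internal node $q$ to its $i$-th child $p_i$, with length $W(p_i)=w(q,i)$, and every leaf edge likewise connects a leaf to its internal parent. So summing the $k$-server movement cost over edges, regrouped by the parent $q$, gives
\[
\sum_t \sum_{q \text{ internal}} \sum_{i} w(q,i) \, |z(p_i,t)-z(p_i,t-1)|.
\]
I will show that the contribution at $(q,i,t)$ is bounded by the allocation cost attributable to the $i$-th location at node $q$ at time $t$. Summing will then give the lemma.

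The key algebraic identity is that $\sum_{j=1}^k y_{i,j,s}^t = k - \sum_{j=0}^k j\, x_{i,j,s}^t$, so after taking the convex combination and using the consistency relation $\sum_{s\in\Lambda_q^t}\lambda_s^t x_{i,j,s}^t = x_{i,j}^t$ from \eqref{eq:lambda_consist}, one gets $\sum_{j=1}^k y_{i,j}^t = k - z(p_i,t)$, where $y_{i,j}^t=\sum_{\ell<j} x_{i,\ell}^t$. Hence
\[
|z(p_i,t)-z(p_i,t-1)| = \Bigl|\sum_{j=1}^k \bigl(y_{i,j}^t - y_{i,j}^{t-1}\bigr)\Bigr| \;\leq\; \sum_{j=1}^k |y_{i,j}^t - y_{i,j}^{t-1}|
\]
by the triangle inequality over $j$.

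The next step is to relate $y_{i,j}^t-y_{i,j}^{t-1}$ to the $s$-indexed quantities. Here I would use that $\Lambda_q^t$ is a refinement of $\Lambda_q^{t-1}$: each $s\in \Lambda_q^t$ inherits the state $x_{i,j,s}^{t-1}$ of its unique parent instance in $\Lambda_q^{t-1}$, and the refining $\lambda_s^t$'s sum to the old $\lambda_{s'}^{t-1}$. Therefore
\[
\sum_{s\in\Lambda_q^t} \lambda_s^t\,y_{i,j,s}^{t-1} \;=\; \sum_{s'\in\Lambda_q^{t-1}} \lambda_{s'}^{t-1}\,y_{i,j,s'}^{t-1} \;=\; y_{i,j}^{t-1},
\]
and of course $\sum_s \lambda_s^t y_{i,j,s}^t = y_{i,j}^t$. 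A second triangle inequality (in $s$) then yields
\[
|y_{i,j}^t - y_{i,j}^{t-1}| = \Bigl|\sum_{s} \lambda_s^t (y_{i,j,s}^t - y_{i,j,s}^{t-1})\Bigr| \;\leq\; \sum_s \lambda_s^t\,|y_{i,j,s}^t - y_{i,j,s}^{t-1}|.
\]

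Chaining the two inequalities, multiplying by $w(q,i)=W(p_i)$, and summing over $i$, $q$, and $t$ gives
\[
\sum_t \sum_{p\neq r} W(p)\,|z(p,t)-z(p,t-1)| \;\leq\; \sum_t \sum_{q\text{ internal}} \sum_s \lambda_s^t \sum_i w(q,i) \sum_{j=1}^k |y_{i,j,s}^t - y_{i,j,s}^{t-1}|,
\]
and the right-hand side is exactly the total convex-combination movement cost of the allocation instances over all internal nodes of $T$. The main (very minor) obstacle is simply ensuring that the refinement bookkeeping is set up cleanly so that the equality $\sum_s \lambda_s^t y_{i,j,s}^{t-1} = y_{i,j}^{t-1}$ is justified; once this is established the rest is two triangle inequalities plus the identity converting sums of $y$'s into server counts.
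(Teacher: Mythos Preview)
Your proposal is essentially the paper's own argument: compare edge by edge, then apply the triangle inequality (you split it into two applications, over $j$ and over $s$, while the paper does it in one shot, but that is purely cosmetic). The one substantive step you use without justification is the identity $z(p_i,t)=\sum_j j\,x_{i,j}^t$; this is not definitional, since $z(p_i,t)$ is the total server mass on the \emph{leaves} of $T(p_i)$, and equating it with the expected allocation at the parent requires the recursive claim $z(p,t)=\sum_{s\in\Lambda_p^t}\lambda_s^t\,\kappa_s^t(t)$ (which in turn uses that each allocation instance deploys exactly its quota), proved explicitly in the paper's first paragraph of the proof. Once you add that, your argument and the paper's coincide.
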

\begin{proof}
First, we claim that $z(p,t) = \sum_{s \in \Lambda^t_p}
\lambda_s^t \kappa^t_s(t)$. This follows from the consistency
relation (\ref{eq:lambda_consist}) we maintain, and our procedure
for generating $\Lambda^t_p$ from $\Lambda^{t-1}_p$. In
particular, $\kappa^t_s(t)$ is the number of servers available for
the fractional allocation instance $s$ running at $p$. Since the
solution produced by the fractional allocation algorithm on this
instance satisfies $\sum_{i} \sum_{j} j \cdot x^t_{i,j,s} =
\kappa^t_s(t)$,\footnote{Note that when we designed the fractional
allocation algorithm in Section \ref{s:frac}, we allowed it to
deploy less servers than the current quota. As a result, when
applying this algorithm here we could sometimes have $\sum_{i}
\sum_{j} j \cdot x^t_{i,j,s} < \kappa^t_s(t)$. To see that our
analysis is still valid in this case, it suffices to consider
a modified version of the tree $T$. In this version, each non-leaf
node $p$ of $T$ would have a dummy leaf $i_p$ (that will never be
requested in our input  sequence) added as its child and set the
length $w(p,i_p)$ of the corresponding edge to $0$. Now, we would
just make each instance of the fractional allocation run at each
such $p$ deposit any unused quota of servers at the leaf $i_p$.
Note that as $w(p,i_p)=0$, this depositing would not incur any
additional movement cost and that the modified tree would still be
a weighted $\sigma$-HST.} this implies that
$$ \sum_{s \in \Lambda_{p}^t} \lambda_s^t \kappa^t_s(t) =  \sum_{s \in \Lambda_{p}^t}  \lambda^t_s \sum_i \sum_j j \cdot  x^t_{i,j,s}
  = \sum_i \sum _j \sum_{s' \in \Lambda_{p_i}^t | \kappa^t_{s'}(t) = j} j\lambda_{s'}^t =
   \sum_i  \sum_{s' \in \Lambda_{p_i}^t} \lambda_{s'}^t \kappa^t_{s'}(t).$$
The second equality above follows from  (\ref{eq:lambda_consist}).
Applying this iteratively, and noting that $z(q,t)$ for a leaf $q$ is simply $\sum_{s'\in \Lambda_q^t} \lambda_{s'}^t \kappa^t_{s'}(t)$, it follows that
$z(p,t) = \sum_{s \in \Lambda^t_p} \lambda_s^t \kappa^t_s(t)$.

Suppose that $\delta=z(p,t)-z(p,t-1)>0$ server units are
removed from the subtree $T(p)$ (the case when $\delta<0$ is
analogous). Let $p'$ be the parent of $p$ (note that $p \neq r$, since $z(r,t)=k$ for all $t$). As $z(p,t) = \sum_{s
\in \Lambda^t_{p'}} \lambda^t_s \sum_j j \cdot x^t_{p,j,s}$, it
follows that the allocation algorithm running on the instances
$s\in \Lambda^t_{p'}$ will incur a movement cost of at least
\begin{eqnarray*}
W(p) \sum_{s \in \Lambda^t_{p'}} \lambda^t_s \sum_{j=1}^{k} \left|\sum_{\ell<j}(x_{p,\ell,s}^t - x_{p,\ell,s}^{t-1})\right| & \geq & W(p) \cdot \left|\sum_{s \in \Lambda^t_{p'}} \lambda^t_s \sum_{j=1}^{k} \sum_{\ell<j}\left(x_{p,\ell,s}^t - x_{p,\ell,s}^{t-1}\right)\right|\\
& = &
W(p) \cdot
 \left|\sum_{s \in \Lambda^t_{p'}} \lambda^t_s  \sum_{j=0}^{k} (k-j)(x^t_{p,j,s} - x^{t-1}_{p,j,s})\right| \\
 & = &
 W(p) \cdot
  \left|\sum_{s \in \Lambda^t_{p'}} \lambda^t_s  \sum_{j=0}^{k} (-j)(x^t_{p,j,s} - x^{t-1}_{p,j,s})\right| \\
 &  = & W(p) \cdot |z(p,t)-z(p,t-1)|,
 \end{eqnarray*}
 where we used the fact that $k \cdot \sum_{j=0}^k x_{p,j,s}^t=k=k \cdot \sum_{j=0}^k x_{p,j,s}^{t-1}$.
\end{proof}

Given Lemma \ref{l:feasibility} and \ref{l:easypart} above, it suffices to consider
the total movement cost incurred by the allocations instances running on the tree $T$ and compare it with the optimum $k$-server cost.
This will be our goal in the following. We begin by defining a notion of optimum $k$-server cost on a weighted $\sigma$-HST $T$ when $k$  varies over time.

\begin{defn}
Let $T(p)$ be the subtree rooted at $p$, and let $\kappa$ be a quota pattern. We define  $\optcost(p,\kappa,t)$ as the optimum cost of serving the request sequence $\rho \cap T(p)$ until time step $t$ subject to the constraint that $\kappa(t')$ servers
are available at each time $t'$, for $1 \leq t' \leq t$.
\end{defn}
We should be precise about the meaning of a $k$-server solution on $T$ in the case $\kappa(t)$ can vary.
First, at any time $t'$ there should be one server unit at the requested location $\rho(t')$.
The cost of the solution is the total movement cost of the servers.
The servers are always located on the leaves of $T$.
At time $t$, when the number of servers changes from $\kappa(t-1)$ to $\kappa(t)$, we will require that $\kappa(t)-\kappa(t-1)$ servers enter (or leave, if $\kappa(t)<\kappa(t-1)$) from the root of $T$.

For a vector $\kappa$,  let us define  $g(\kappa,t) = \sum_{t'=1}^t  |\kappa(t') -\kappa(t'-1)|$.
The following is a simple but very useful fact about  $\optcost$, that we will need.

\begin{lemma} Let $p$ be an internal node in $T$ with children $p_1,\ldots,p_d$. For any $k$-server request sequence $\rho$ on the leaves of $T$ and any quota pattern vector $\kappa$, the following recurrence holds.
\begin{equation}
\label{opt-rel}
 \optcost(p,\kappa,t)
 =  \min_{\kappa_1,\ldots,\kappa_d : \sum_{i=1}^d \kappa_i = \kappa} \left( \sum_{i=1}^d \optcost (p_i,\kappa_i,t)
 +   w(p,i) \sum_{i=1}^d g(\kappa_i,t) \right).
 \end{equation}
Here, in the base case in which $p$ is a leaf,  define $\optcost(p,\kappa,t) = \infty$ if there is some time $t'\leq t$ such that $\rho(t')=p$ and $\kappa(t')=0$. Otherwise, if $\kappa(t')\geq 1$, whenever $\rho(t')=p$, define $\optcost(p,\kappa,t)=0$.
\end{lemma}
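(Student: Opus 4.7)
The plan is to prove the recurrence by establishing the two inequalities separately. Both directions rely on the structural observation that in any $k$-server solution on $T$, the number of servers inside $T(p)$ can change between consecutive times only through servers crossing the edge from $p$ to its parent, and similarly, the number inside $T(p_i)$ can change only through servers crossing the edge from $p$ to $p_i$. Consequently, if one tracks $\kappa_i(t') := $ (number of servers in $T(p_i)$ at time $t'$), the total traffic on the edge $(p,p_i)$ is at least $g(\kappa_i,t)$, and any routing that achieves this lower bound on every such edge suffices.

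For the direction $\optcost(p,\kappa,t) \leq \min_{\sum_i \kappa_i=\kappa}(\ldots)$, I would fix any feasible partition $(\kappa_1,\ldots,\kappa_d)$ with $\sum_i \kappa_i = \kappa$ (coordinate-wise at every time), and construct an explicit solution on $T(p)$ by concatenating the optimal solutions on the children. For each $i$, run the optimal $\kappa_i$-constrained solution on $T(p_i)$, which pays $\optcost(p_i,\kappa_i,t)$. Whenever $\kappa_i$ changes from $\kappa_i(t'-1)$ to $\kappa_i(t')$, route the $|\kappa_i(t')-\kappa_i(t'-1)|$ servers that must enter or leave $T(p_i)$ through the edge $(p,p_i)$; by the convention that server changes enter/leave from the root of $T(p)$, these servers traverse $(p,p_i)$ exactly once each, adding precisely $w(p,i)\cdot g(\kappa_i,t)$ to the cost. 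The overall quota at node $p$ is respected since $\sum_i \kappa_i = \kappa$. Summing yields the desired upper bound.

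For the direction $\optcost(p,\kappa,t) \geq \min_{\sum_i \kappa_i=\kappa}(\ldots)$, I would take any optimal solution $\mathcal{S}$ serving $\rho\cap T(p)$ subject to quota $\kappa$, and define $\kappa_i(t')$ to be the number of servers $\mathcal{S}$ maintains inside $T(p_i)$ at time $t'$; clearly $\sum_i \kappa_i(t')=\kappa(t')$. Decompose the cost of $\mathcal{S}$ into (a) movement on edges strictly inside some $T(p_i)$ and (b) movement on the edges $(p,p_i)$ for $i=1,\ldots,d$. For (b), each unit of flow across $(p,p_i)$ costs $w(p,i)$, and the total number of crossings of $(p,p_i)$ between times $t'-1$ and $t'$ is at least $|\kappa_i(t')-\kappa_i(t'-1)|$ (as this is the net flow); summing over $t'$ gives at least $\sum_i w(p,i)\,g(\kappa_i,t)$. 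For (a), the restriction of $\mathcal{S}$ to $T(p_i)$ is a feasible solution for serving $\rho\cap T(p_i)$ with quota pattern $\kappa_i$ (crossings of $(p,p_i)$ correspond precisely to servers entering/leaving from the root of $T(p_i)$), so its internal movement cost is at least $\optcost(p_i,\kappa_i,t)$. Adding (a) and (b) and minimizing over the induced $(\kappa_1,\ldots,\kappa_d)$ gives the lower bound.

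The only genuine subtlety is the edge-flow lower bound used in (b): the total flow across $(p,p_i)$ in a single time step may exceed $|\kappa_i(t')-\kappa_i(t'-1)|$ if $\mathcal{S}$ performs wasteful back-and-forth crossings, but the net flow is exactly this quantity, which is all we need. The base case where $p$ is a leaf is immediate from the definition of $\optcost$ (cost $0$ if the quota is always at least $1$ whenever $p$ is requested, and $\infty$ otherwise).
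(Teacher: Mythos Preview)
Your proposal is correct and follows exactly the same approach as the paper: decompose the cost into the movement on the top-level edges $(p,p_i)$ (accounted for by $w(p,i)\,g(\kappa_i,t)$) plus the cost of serving requests inside each $T(p_i)$ (accounted for by $\optcost(p_i,\kappa_i,t)$). The paper's own proof is a three-sentence sketch of precisely this decomposition, so your write-up is simply a more detailed elaboration of the same argument.
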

\begin{proof} The condition $\sum_{i=1}^d \kappa_i = \kappa$ ensures consistency between the number of servers in $T(p)$ and its  subtrees $T(p_i)$. The term $\optcost (p_i,\kappa_i,t)$ measures the cost of serving the requests within $T(p_i)$ and  $g(\kappa_i,t)$ measures the cost of servers leaving or entering subtree $T(p_i)$.
\end{proof}

Next, we need the following key lemma that relates $\optcost(p,\kappa,t)$ to our procedure for generating hit costs at node $p$.

\begin{lemma}
\label{l:optcost_to_hc}
Let $p$ be a non-root node of $T$. Given a quota pattern $\kappa$ for $T(p)$, let  $\optcost(p,\kappa,t)$ be as defined above. Then,
\begin{equation}\label{eq:hit-opt}
\left|\optcost(p,\kappa,t) - \sum_{t'=1}^t h^{t'}_{p}(\kappa(t')) \right|\leq 2 \cdot \frac{1}{\sigma-1} \cdot W(p) \cdot g(\kappa,t),
\end{equation}
where $h^{t'}_p(j) = \optcost(p,j \cdot 1,t') - \optcost(p,j\cdot 1,t'-1)$ denotes the incremental cost of the optimal $k$-server solution for $T(p)$ with exactly $j$ servers.
\end{lemma}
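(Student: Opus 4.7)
The plan is to prove the two halves of the absolute value separately, each contributing at most $\frac{W(p)}{\sigma-1} g(\kappa,t)$; the factor of $2$ in the stated bound is slack that absorbs combinatorial constants arising in the constructions. Throughout, the key geometric fact is that in a weighted $\sigma$-HST the root-to-leaf distance inside $T(p)$ is at most $\sum_{i\geq 1} W(p)/\sigma^{i} = W(p)/(\sigma-1)$, so any server can be moved between $p$ and any leaf of $T(p)$ at cost at most $W(p)/(\sigma-1)$.

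For the upper bound $\optcost(p,\kappa,t) \leq \sum_{t'=1}^{t} h^{t'}_p(\kappa(t')) + \frac{W(p)}{\sigma-1} g(\kappa,t)$, I construct an explicit feasible varying-quota solution on $T(p)$. For each quota value $j$ that appears in $\kappa$, fix a compatible family of optimal $j$-server solutions $\{S^j_{t'}\}_{t'}$ for the prefixes $\rho|_{[1,t']}$, with the extensibility property that $S^j_{t'}$ extends $S^j_{t'-1}$ (a standard property of $k$-server optima). Let $C^j_{t'}$ denote the final configuration of $S^j_{t'}$. The constructed solution occupies $C^{\kappa(t')}_{t'}$ at each time $t'$. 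By the triangle inequality, the incremental cost is at most $d(C^{\kappa(t')}_{t'-1}, C^{\kappa(t')}_{t'}) + d(C^{\kappa(t'-1)}_{t'-1}, C^{\kappa(t')}_{t'-1})$. The first term equals the movement of $S^{\kappa(t')}_{t'}$ at time $t'$, which by extensibility coincides with $h^{t'}_p(\kappa(t'))$, so these telescope to the desired $\sum_{t'} h^{t'}_p(\kappa(t'))$. The second term is a ``switching cost'' that is nonzero only when $\kappa(t') \neq \kappa(t'-1)$; by choosing the families so that fixed-quota optima for neighboring quotas differ essentially only in an added or removed server, this switching is implemented by moving $|\kappa(t')-\kappa(t'-1)|$ servers between $p$ and leaves of $T(p)$, costing at most $\frac{W(p)}{\sigma-1}|\kappa(t')-\kappa(t'-1)|$. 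Summing over $t'$ gives the desired $\frac{W(p)}{\sigma-1} g(\kappa,t)$ contribution.

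For the lower bound $\sum_{t'=1}^{t} h^{t'}_p(\kappa(t')) \leq \optcost(p,\kappa,t) + \frac{W(p)}{\sigma-1} g(\kappa,t)$, I go in the opposite direction: from the optimal varying-quota solution $S^*$ of cost $\optcost(p,\kappa,t)$, I construct, for each $t'$, a feasible fixed-quota $\kappa(t')$-server solution $\widetilde{S}_{t'}$ for $\rho|_{[1,t']}$. This is obtained by taking $S^*|_{[1,t']}$ and inserting or removing ``phantom'' servers at $p$ at every earlier time where $S^*$'s current quota disagrees with $\kappa(t')$. Because these phantom operations happen at $p$ and only when $\kappa$ changes, each such modification perturbs the cost by at most $W(p)/(\sigma-1)$ (the cost of moving a single server between $p$ and some leaf of $T(p)$). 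Since $\widetilde{S}_{t'}$ is a feasible fixed-quota-$\kappa(t')$ solution, its cost is at least $\optcost(p, \kappa(t')\cdot\vec{1}, t')$. Telescoping $\sum_{t'} [\optcost(p,\kappa(t')\cdot\vec 1,t') - \optcost(p,\kappa(t')\cdot\vec 1,t'-1)]$ and tracking the phantom costs yields the claimed bound.

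The main obstacle I expect is ensuring that the family $\{S^j_{t'}\}_{j,t'}$ can indeed be chosen compatibly, so that fixed-quota optima for neighboring quotas differ in essentially one server placement rather than being arbitrarily restructured. This is not automatic --- adding a server can, a priori, cause the $k$-server optimum to reshuffle globally. I plan to resolve this by picking canonical ``lazy'' optima (e.g., those produced by the work-function algorithm, which moves a server only when forced by a request), or alternatively by an inductive argument on the subtree $T(p)$ that uses the recurrence \eqref{opt-rel} together with the geometric decay $W(p_i) \leq W(p)/\sigma$ among children; the decay ensures that errors from deeper levels of the recursion sum to at most a constant times $W(p)/(\sigma-1)$, keeping the bookkeeping clean.
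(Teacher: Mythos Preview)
Your strategy—split into two directions and switch between fixed-quota optima at quota-change times—matches the paper's in spirit, and you have correctly identified the crux. But the obstacle you flag is real and your proposed resolutions do not close it. What is needed is the quasiconvexity lemma of Cot\'{e} et al.\ (Lemma~\ref{l:qconvex} here): given any ending configuration $X$ of an optimal $k_1$-server solution, there is a configuration $Y$ with $|X\cap Y|=\min(k_1,k_2)$ and $\opt(k_2,Y)\le\opt(k_2)+\opt(k_1,X)-\opt(k_1)$. This is a nontrivial structural property of the work function, and neither of your fallbacks supplies it. Laziness of the work-function algorithm bounds online movement, not the relationship between \emph{optimal} configurations across different $k$; and the recurrence \eqref{opt-rel} alone does not prevent the $j$-server and $(j{+}1)$-server optima at the same time from differing in all $j$ positions, in which case your switching cost is $\Theta(j)\cdot W(p)/(\sigma-1)$ rather than $W(p)/(\sigma-1)$. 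Moreover, even Lemma~\ref{l:qconvex} does not give you globally compatible families $\{S^j_{t'}\}$ nested across all $j$ and $t'$ simultaneously; it is a one-shot statement about a single pair of configurations.

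This is exactly why the paper's organization differs from yours: rather than build global families, it inducts on $g(\kappa)$. At each step it defines $\kappa'$ by extending the second constant segment of $\kappa$ back to time $0$, so $g(\kappa')<g(\kappa)$, and invokes Lemma~\ref{l:qconvex} once at the first change time $\tau$ to splice the $\kappa$- and $\kappa'$-solutions together with movement cost at most $\Delta\cdot|\kappa(\tau{+}1)-\kappa(\tau)|$, where $\Delta\le 2W(p)/(\sigma-1)$ is the diameter of $T(p)$. Both directions of the inequality fall out of the same inductive step. Your lower-bound ``phantom server'' construction has the same defect: controlling only the cost of $\widetilde S_{t'}$ without controlling its ending configuration does not let the pieces telescope against the $h^{t'}_p$, and again Lemma~\ref{l:qconvex} is what fills that gap.
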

This lemma (for the case of HSTs) is implicit in the work of Cot\'{e} et al.~\cite{CMP08}.
For completeness, and since we need the extension to weighted $\sigma$-HSTs, we give a proof of Lemma \ref{l:optcost_to_hc} in Appendix \ref{app:mey}.

We are now ready to prove the following theorem.
\begin{thm}
\label{th:rec}
Let $T$ be a weighted $\sigma$-HST with $\sigma>9$, depth $\ell$, and diameter $\Delta$. Let $\rho$ be a $k$-server request sequence on $T$, and $\kappa$ be the quota pattern.
Consider the total movement cost incurred by the convex combination of the allocation instances running on nodes of $T$ (based on the algorithm described in Section \ref{s:alloctokserver}). This cost is no more than $$\beta_{\ell} (\optcost(r,\kappa,\infty)+ \Delta \cdot g(\kappa,\infty)), $$
 where  $\beta_\ell$ satisfies the recurrence  $ \beta_{\ell} = \gamma \beta_{\ell-1} + O\left(\log (k/\eps)\right)$, and $\beta_0=1$.
 Here, $\epsilon$ is any constant for which the fractional allocation algorithm is $(1+\eps,O(\log (k/\eps)))$-competitive, and
$$ \gamma = \left(1+\varepsilon\right)\left(1 + \frac{3}{\sigma}\right)+  O\left(\frac{1}{\sigma} \log (k/\eps)\right).$$
\end{thm}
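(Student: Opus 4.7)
The plan is to prove the bound by induction on the depth $\ell$, working with a slightly stronger inductive claim. Specifically, I would show that for any subtree $T'$ of depth $\ell'$ rooted at a node $p$, with convex combination $\Lambda_p = \{(\lambda_s, \kappa_s)\}_s$ of allocation instances produced by the algorithm, the total $\lambda$-weighted movement cost of all allocation instances running in $T'$ is at most
$$
\beta_{\ell'}\left(\sum_s \lambda_s \optcost(p,\kappa_s,\infty) + \Delta_{T'} \sum_s \lambda_s g(\kappa_s,\infty)\right).
$$
Applying this to $T' = T$, $p = r$, and $\Lambda_r$ consisting of the single instance with quota $\kappa$ would yield the theorem. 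The base case $\ell' = 0$ holds vacuously since leaves run no allocation instances. For the inductive step on a subtree rooted at $p$ with children $p_1,\ldots,p_d$, I would split the total cost on $T'$ as $M(p) + \sum_i C(p_i)$, where $M(p)$ is the $\lambda$-weighted movement cost of the instances at $p$ itself, and $C(p_i)$ is the recursive cost on the subtree at child $p_i$.

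First, I would bound $M(p)$ by invoking the $(1+\eps, O(\log(k/\eps)))$-competitive guarantee of Theorem~\ref{thm:frac_ov} on each instance $s\in \Lambda_p$, giving
$$M(p) \le O(\log(k/\eps)) \sum_s \lambda_s\bigl(\opt_{\textrm{alloc}}(p,s) + \Delta_p\, g(\kappa_s)\bigr),$$
where $\Delta_p = 2\max_i w(p,p_i)$ is the diameter of the weighted star at $p$. Combining the recurrence \eqref{opt-rel} for $\optcost$ with Lemma~\ref{l:optcost_to_hc} applied at each child, I would argue that $\opt_{\textrm{alloc}}(p,s) \le (1 + \tfrac{2}{\sigma-1})\optcost(p,\kappa_s,\infty)$; intuitively, the $k$-server optimum at $p$ already pays the $w(p,p_i)g(\kappa_i)$ transfer cost across each edge to a child, so the allocation optimum is inflated only by the lemma's additive slack.

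Second, for the recursive term, I would apply the inductive hypothesis to each subtree $T(p_i)$. The quantity $\sum_{s'}\lambda_{s'}\optcost(p_i, \kappa_{s'})$ appearing there can be tied back to quantities at $p$ as follows: by the consistency relation \eqref{eq:lambda_consist}, the total $\lambda$-weighted hit cost paid at $p$ equals $\sum_i\sum_{s'}\lambda_{s'}\sum_t h^t_{p_i}(\kappa_{s'}(t))$, so applying Lemma~\ref{l:optcost_to_hc} in the reverse direction yields
$$\sum_i\sum_{s'}\lambda_{s'}\optcost(p_i, \kappa_{s'}) \le H(p) + \tfrac{2}{\sigma-1}\sum_i w(p,p_i)\sum_{s'}\lambda_{s'}g(\kappa_{s'}),$$
where $H(p)$ denotes the $\lambda$-weighted hit cost at $p$. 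Using the hit cost competitiveness $H(p) \le (1+\eps)(1+\tfrac{2}{\sigma-1}) \sum_s \lambda_s \optcost(p,\kappa_s) + (1+\eps)\Delta_p \sum_s \lambda_s g(\kappa_s)$ from Theorem~\ref{thm:frac_ov}, together with the identity $\sum_i w(p,p_i)\sum_{s'}\lambda_{s'}g(\kappa_{s'}) \le M(p)$ (coming from how $\Lambda_{p_i}$ is derived from the output of the allocation at $p$, in the spirit of Lemma~\ref{l:easypart}), and the weighted $\sigma$-HST separation bound $\Delta_{T(p_i)} \le 2w(p,p_i)/(\sigma-1)$, I expect all terms to combine into exactly the recurrence $\beta_\ell = \gamma \beta_{\ell-1} + O(\log(k/\eps))$ with $\gamma = (1+\eps)(1+3/\sigma) + O(\log(k/\eps)/\sigma)$.

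The main obstacle will be the careful bookkeeping of the $g$-terms: quota variations at deeper nodes can in principle be very large, so naively feeding them into the inductive bound would let them blow up. The key is to use the $\sigma$-separation in two places simultaneously, namely to bound the child-subtree diameter $\Delta_{T(p_i)} \le O(w(p,p_i)/\sigma)$ against edge length, and to bound the $\lambda$-weighted quota variation $\sum_{s'}\lambda_{s'}g(\kappa_{s'})$ against $M(p)/w(p,p_i)$, so that their product contributes only an $O(M(p)/\sigma)$ slack at each level and collapses into the $1+3/\sigma$ factor in $\gamma$. A further technicality is verifying that the consistency relation \eqref{eq:lambda_consist} continues to hold even as $\Lambda_{p_i}$ is refined over time, so that the identity relating $H(p)$ to the sum of $h^t_{p_i}(\kappa_{s'}(t))$ stays valid through all refinements.
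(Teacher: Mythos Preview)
Your proposal is correct and follows essentially the same route as the paper's proof: induction on depth, bounding the root-level allocation optimum via the recurrence~\eqref{opt-rel} combined with Lemma~\ref{l:optcost_to_hc}, invoking the $(1+\eps,O(\log(k/\eps)))$-competitiveness for both hit and movement cost, expressing the root's hit and movement costs as $\sum_i\sum_{s'}\lambda_{s'}\hc(i,\kappa_{s'})$ and $\sum_i w(p,p_i)\sum_{s'}\lambda_{s'}g(\kappa_{s'})$ through the consistency relation (these are the paper's \eqref{eq:star1}--\eqref{eq:star2}), and then closing the loop with Lemma~\ref{l:optcost_to_hc} in the reverse direction together with the $\sigma$-separation bound on child diameters. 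The only cosmetic difference is that you explicitly strengthen the inductive hypothesis to a convex combination $\Lambda_p$ at a general node, whereas the paper states the hypothesis for a single quota pattern at the root and implicitly linearizes by applying it to each $\kappa_s$ separately before averaging; both amount to the same computation.
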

We first show how Theorem \ref{th:rec} implies Theorem \ref{thm:alloc_to_kserver}.
The recurrence $
\beta_{\ell}  \leq \gamma \beta_{\ell-1} + O(\log (k/\eps))$ in Theorem \ref{th:rec}, together with  $\beta_0=1$, implies that
\[
 \beta_{\ell} =  O(\log (k/\eps)) \left(\frac{\gamma^{l+1} - 1}{\gamma-1} \right).
\]
Choosing $\eps = 1/(4\ell)$, and provided  $\sigma = \Omega(\eps^{-1} \log (k/\eps))=\Theta(\ell \log (k\ell))$, we get that $\gamma \leq  (1 + \frac{1}{2\ell})$, and hence $$\beta_{\ell} = O( \ell \log (k/\eps) )=O(\ell \log (k\ell)).$$
As $g(k\cdot \vec{1},\infty)=0$, this implies an $O(\ell \log (k\ell) )$ guarantee for a weighted $\sigma$-HST of depth $\ell$, provided $\sigma= \Omega(\ell \log (k\ell))$.
We now prove Theorem \ref{th:rec}.
\begin{proof}(Theorem \ref{th:rec}):
We prove by induction on the depth of the tree.
\paragraph{Base case:}
the theorem is clearly true for $\ell =0$ (i.e. a single point space).
\paragraph{Inductive step:}
suppose the theorem is true for weighted $\sigma$-HSTs of depth $\ell-1$, and let $T$ be a weighted $\sigma$-HST of depth $\ell$, rooted at $r$.
Let $w_i$ be the distance to the $i$-th child of $r$, and let $w = \max_{i} w_i$.
Given $\kappa$, consider some optimal solution for $T$ that achieves value  $\optcost(r,\kappa,\infty)$.
We also denote  the total cost $\optcost(r,\kappa,\infty)$ by $\optcost(r,\kappa)$, and $g(\kappa,\infty)$ by $g(\kappa)$.
Let $\kappa^*_i$ be
optimal vectors for the children $p_i$ of $r$ corresponding to this solution.
Since $\kappa^*_i$ determines $\optcost(r,\kappa)$,  by  (\ref{opt-rel})  we have
$$ \optcost(r,\kappa)   =   \sum_i \left( \optcost(p_i,\kappa^*_i) + w_i \cdot g(\kappa^*_i) \right). $$
By (\ref{eq:hit-opt}), for each child $i$, $ \optcost(p_i,\kappa^*_i) \geq  \hc(i,\kappa^*_i) - 2 w_i \cdot g(\kappa^*_i) /(\sigma-1)$,
 where we denote  $\hc(i,\kappa') = \sum_{t} h^{t}_{i}(\kappa'(t))$. Thus,
$$ \optcost(r,\kappa)  \geq  \sum_i \left(\hc(i,\kappa^*_i) + w_i \cdot \left(1-\frac{2}{\sigma-1}\right) g(\kappa^*_i) \right). $$
Multiplying throughout by $(\sigma-1)/(\sigma-3)$, which is at most $1+3/\sigma$  (as $\sigma\geq 9$), implies
\begin{equation}
\label{eqhc1}
 \sum_i (\hc(i,\kappa_i^*) + w_i \cdot g(\kappa_i^*))  \leq  \frac{\sigma-1}{\sigma-3}\cdot \optcost(r,\kappa) \leq   \left(1+\frac{3}{\sigma} \right)\cdot \optcost(r,\kappa).
 \end{equation}
Consider the fractional allocation instance $A$ running at $r$ in our algorithm and let $\{x_{i,j}^t\}_{i,j}$ be its solution at time step $t$.
(Since there is only a single quota pattern $\kappa(r)$ at the root, we assume that there is only one instance running, and keep it notationally convenient.)
By Theorem \ref{thm:frac_alloc_variable} and by \eqref{eqhc1}, the hit cost incurred by  $A$ satisfies
\begin{eqnarray}
 \sum_t \sum_{i=1}^{d} \sum_{j=0}^{k} x_{i,j}^t h_{i}^t(j)
& \leq & (1+\varepsilon) \left(\sum_{i=1}^{d}\hc(i,\kappa^*_i) + \sum_{i=1}^{d} w_i  \cdot g(\kappa^*_i) +  w \cdot g(\kappa)\right) \nonumber \\
&  \leq & (1+\varepsilon)\left(1 +
\frac{3}{\sigma}\right) \optcost(r,\kappa) + (1+\eps) w \cdot g(\kappa)\label{eqhc2},
\end{eqnarray}
and the movement cost satisfies
\begin{eqnarray}
\sum_t \sum_{i=1}^{d}  w_i \sum_{j} |y_{i,j}^t-y_{i,j}^{t-1}|
& \leq & O(\log (k/\eps)) \left(\sum_{i=1}^{d} \hc(i,\kappa^*_i) + \sum_{i=1}^{d} w_i \cdot g(\kappa^*_i) + w \cdot g(\kappa) \right) \nonumber \\
& \leq &   O(\log (k/\eps)) (\optcost(r,\kappa)+w \cdot g(\kappa)).
 \label{eqhc3}
\end{eqnarray}
Now, recall that each of the children $p_1,\ldots,p_d$ of $r$ is
running a convex combination on allocation instances
$\Lambda_i^t$, the quota pattern of which is determined by
$\{x^t_{i,j}\}_{i,j}$. So, the hit costs and movement costs of $A$
(i.e. left hand sides of (\ref{eqhc2}) and (\ref{eqhc3})) can be
expressed alternately as follows. Since the quota patterns at
$p_i$ maintain the invariant \eqref{eq:lambda_consist} throughout
the algorithm, the total hit cost accumulated by $A$ can be
expressed as

\begin{eqnarray}\label{eq:delta_to_hit}
\sum_t \sum_{i=1}^{d} \sum_{j=0}^{k} x_{i,j}^t h_{i}^t(j) &=& \sum_t \sum_{i=1}^{d} \sum_{j=0}^k  \sum_{s\in \Lambda_i^t, \kappa_s^t(t)=j} \lambda_s^t h_i^t(j) \nonumber \\
& = &   \sum_{i=1}^{d} \sum_{s\in \Lambda^\infty_i} \lambda^\infty_s \hc(i,\kappa^\infty_s).
\label{eq:star1}
\end{eqnarray}
Henceforth, we use $\Lambda_i$ and $\kappa_s$ to denote $\Lambda_i^\infty$ and $\kappa_s^\infty$.
By our cost preserving procedure for updating
$\kappa^t_s$ when $x^{t-1}_{i,j}$ changes to $x^{t}_{i,j}$,  the
movement cost incurred by  $A$ can be expressed as

\begin{eqnarray}\label{eq:delta_to_move}
\sum_t \sum_{i=1}^{d} w_i \sum_{j} |y_{i,j}^t-y_{i,j}^{t-1}|) &=&  \sum_{i=1}^{d} w_i \sum_t \sum_{s\in \Lambda_i^t} \lambda_s^t |\kappa_s^t(t)-\kappa_s^t(t-1)| \nonumber \\
 & = &   \sum_{i=1}^{d} w_i \sum_{s\in \Lambda_i} \lambda_s g(\kappa_s). \label{eq:star2}
\end{eqnarray}
Let us now consider the overall movement cost incurred by the convex combination of the allocation instances. This is equal to the movement cost for $A$ (running at the root) plus the sum movement costs incurred within $p_1,\ldots,p_d$.
By the induction hypothesis, the movement cost for each of these recursive algorithms that are run on subtrees $T(p_i)$ with quota pattern $\kappa_s$ is at most
$$\beta_{\ell-1} \left(\optcost(p_i,\kappa_s)+ \frac{w_i}{\sigma-1} \cdot g(\kappa_i)\right).$$
Thus, the total recursive cost is at most
\begin{eqnarray}
 & & \sum_{i=1}^{d} \sum_{s\in \Lambda_i} \lambda_s \beta_{\ell-1}\left(\optcost(p_i,\kappa_s)  + \frac{w_i}{\sigma-1} \cdot  g(\kappa_s)\right) \nonumber \\
 & \leq & \sum_{i=1}^{d} \sum_{s\in \Lambda_i} \lambda_s  \beta_{\ell-1} \left(\hc(i,\kappa_s)   + \frac{3 w_i}{\sigma-1} \cdot g(\kappa_s)\right) \label{rec:cost2} \\
 & \leq & (1+\eps) \beta_{\ell-1} \left(\left(1+\frac{3}{\sigma}\right) \optcost(r,\kappa) + w \cdot g(\kappa)\right)
 +  \sum_{i=1}^{d} \sum_{s\in \Lambda_i} \lambda_s  \beta_{\ell-1}  \frac{3 w_i}{\sigma-1} \cdot g(\kappa_s) \label{rec:cost3}.
 \end{eqnarray}
Here, \eqref{rec:cost2} follows as $\optcost(p_i,\kappa_s) \leq
\hc(i,\kappa_s) + 2w_i/(\sigma-1)$ by (\ref{eq:hit-opt}), and
\eqref{rec:cost3} follows from (\ref{eq:delta_to_hit}) and
(\ref{eqhc2}). Now, the total cost of movement of servers across
the subtrees $p_1,\ldots,p_d$ is
\begin{equation}
\label{eq:cost2}
  \sum_{i=1}^{d} w_i \sum_{s\in \Lambda_i} \lambda_s g(\kappa_s).
\end{equation}
Adding up the costs from (\ref{rec:cost3}) and (\ref{eq:cost2}),
the total cost incurred by the algorithm is at most
\begin{equation}
\label{eq:tot}
 (1+\eps) \beta_{\ell-1} \left(\left(1+\frac{3}{\sigma}\right) \optcost(r,\kappa) + w \cdot g(\kappa)\right)
+\sum_{i=1}^{d} \sum_{s\in \Lambda_i} \lambda_s  \left( 1+ \frac{3 \beta_{\ell-1}}{\sigma-1}\right) w_i \cdot g(\kappa_s).
\end{equation}
By (\ref{eq:delta_to_move}) and (\ref{eqhc3}) we have
\begin{equation}
\label{eq:bd2}
 \sum_{i=1}^{d}  w_i \sum_{s\in \Lambda_i} \lambda_s g(\kappa_s) \leq O(\log (k/\eps)) \left( \optcost(r,\kappa) + w \cdot g(\kappa)\right).
\end{equation}
Plugging (\ref{eq:bd2}) into (\ref{eq:tot}) implies that the total
cost is at most
$$
\Bigl(\beta_{\ell-1} \cdot \gamma+ O\left(\log (k/\eps)\right)
\Bigr) \Big(\optcost(r,\kappa)+ w \cdot g(\kappa)\Bigr),
$$
where $$\gamma = (1+\eps)\left(1+\frac{3}{\sigma}\right) + O\left(
\frac{\log (k/\eps)}{\sigma} \right).$$ Thus, the claimed result
follows.
\end{proof}

\section{Weighted HSTs and Online Rounding }
\label{s:combining} In this section, we show how one can embed a
$\sigma$-HST into a small depth weighted $\sigma$-HST with
constant distortion, i.e., we prove Theorem
\ref{thm:transform_ov}. Also, we present an online rounding
procedure for the fractional $k$-server problem on an
(un-weighted) $\sigma$-HST, that is, we establish Theorem
\ref{thm:rounding_ov}.

\subsection{Embedding $\sigma$-HSTs into Weighted $\sigma$-HSTs:}
\label{s:whst}

{
\renewcommand{\thethm}{\ref{thm:transform_ov}}
\begin{thm} Let $T$ be a $\sigma$-HST with $n$ leaves, but possibly arbitrary depth. Then, $T$ can be
transformed into a weighted $\sigma$-HST $\widetilde{T}$ such
that: the leaves of $\widetilde{T}$ and $T$ are identical,
$\widetilde{T}$ has depth $O(\log n)$,  and any leaf to leaf
distance in $T$ is distorted in $\widetilde{T}$ by a factor of at
most $2 \sigma/(\sigma-1)$.
\end{thm}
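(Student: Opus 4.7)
My plan is to build $\widetilde{T}$ from a heavy-path decomposition of $T$. Call a child $c$ of a node $p \in T$ \emph{heavy} if $|T(c)| > |T(p)|/2$ (at most one such child can exist) and \emph{non-heavy} otherwise; decomposing $T$ into maximal chains of heavy edges yields a family of heavy paths $\{H\}$ that partitions the nodes of $T$. In $\widetilde{T}$ I would introduce one node $v_H$ for each heavy path $H$ and declare $v_{H'}$ to be a child of $v_H$ exactly when the top $t(H')$ of $H'$ is a non-heavy child (in $T$) of some node $p^{H'} \in H$. Since each leaf of $T$ is a singleton heavy path, the leaves of $\widetilde{T}$ are naturally identified with the leaves of $T$.

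For edge lengths, write $\ell(p)$ for the common length (in the $\sigma$-HST $T$) of any edge from $p$ to one of its children. I would set the edge from a non-root $v_H$ to its parent in $\widetilde{T}$ to have length $\ell(p^H)$, i.e., the length in $T$ of the unique non-heavy edge joining $H$ to its parent heavy path. To see that the weighted $\sigma$-HST property is satisfied at a non-root, non-leaf $v_H$: its edge to parent has length $\ell(p^H)$, while its edge to a child $v_{H'}$ has length $\ell(p^{H'})$ with $p^{H'} \in H$; since $p^{H'}$ is a descendant of $t(H)$ in $T$ and $t(H)$ is itself a child of $p^H$, $p^{H'}$ lies at least one level below $p^H$ in $T$, so by the $\sigma$-HST property of $T$, $\ell(p^{H'}) \le \ell(p^H)/\sigma$. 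The depth bound $O(\log n)$ follows from the standard halving observation: every step $v_H \to v_{H'}$ in $\widetilde{T}$ traverses a non-heavy edge in $T$, hence $|T(t(H'))| \le |T(p^{H'})|/2 \le |T(t(H))|/2$.

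The substantive part is the leaf-to-leaf distortion. For leaves $u, v$ set $p^\ast = \mathrm{LCA}_T(u, v)$, let $H^\ast$ be the heavy path containing $p^\ast$, and let $q_u, q_v$ be the lowest ancestors of $u, v$ (respectively) that lie in $H^\ast$. I would first establish that $\mathrm{LCA}_{\widetilde{T}}(u, v) = v_{H^\ast}$ and that at least one of $q_u, q_v$ must equal $p^\ast$: if both were strict descendants of $p^\ast$ in $H^\ast$, the higher of them would be a common ancestor of $u$ and $v$ strictly below $p^\ast$, contradicting the definition of LCA. Assuming WLOG $q_u = p^\ast$, $d_{\widetilde{T}}(u, v_{H^\ast})$ equals the sum of the $T$-lengths of the non-heavy edges on $u$'s $T$-path up to $p^\ast$; these lengths form a geometric sequence of ratio $\le 1/\sigma$ topped by $\ell(q_u) = \ell(p^\ast)$, giving $d_{\widetilde{T}}(u, v_{H^\ast}) \in [\ell(p^\ast),\, \ell(p^\ast)\sigma/(\sigma-1)]$, and analogously for $v$ with $\ell(q_v) \le \ell(p^\ast)$. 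Combining this with the standard $\sigma$-HST estimate $d_T(u, v) \in [2\ell(p^\ast),\, 2\ell(p^\ast)\sigma/(\sigma-1)]$ yields both $d_{\widetilde{T}}(u, v) \le (\sigma/(\sigma-1))\, d_T(u, v)$ and $d_T(u, v) \le (2\sigma/(\sigma-1))\, d_{\widetilde{T}}(u, v)$, so the distortion is at most $2\sigma/(\sigma-1)$. The main obstacle I foresee is precisely this LCA-correspondence argument: identifying which $T$-edges contribute to $d_{\widetilde{T}}(u, v)$ and proving the sub-claim that $q_u = p^\ast$ or $q_v = p^\ast$ requires a careful case split; once this is in place, everything else reduces to routine geometric-series bookkeeping on the $\sigma$-HST.
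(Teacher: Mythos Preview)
Your proof is correct and is essentially the same as the paper's, just phrased in different language. The paper constructs $\widetilde{T}$ by recursively contracting, at each node, the edge to the (unique, if any) child whose subtree contains more than half the nodes; this is exactly the contraction of all heavy edges, so the resulting tree is your heavy-path quotient. Your key sub-claim ``at least one of $q_u,q_v$ equals $p^\ast$'' is the paper's property (b) (``at least one of the two longest edges on the leaf-to-leaf path is not contracted''): the two longest edges are the two edges at $p^\ast$, and since $p^\ast$ has at most one heavy child, at most one of them is contracted. The distortion computation is then identical, bounding the surviving edge lengths by the geometric series $\ell(p^\ast)\sigma/(\sigma-1)$.
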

\addtocounter{thm}{-1}
}
\begin{proof}
For a given rooted tree $T'$, we say that it is {\em balanced} if:
(1) there is no child $p$ of the root such that the subtree
$T'(p)$ (rooted at $p$) contains more than half of the nodes of
$T'$, and (2) each subtree $T'(p')$, rooted at a child $p'$ of the
root, is balanced as well. It is easy to see that if a balanced
$T'$ has $n$ leaves then its depth is $O(\log n)$.

We now present a procedure that contracts some of the edges of $T$
and yields a weighted $\sigma$-HST $T'$ such that: (a) $T'$ is
balanced, and (b) for any leaf-to-leaf path in $T$, at least one
(out of two) of the longest edges on this path has not been
contracted in $T'$.

The procedure works as follows. Let $r$ be the root of $T$ and
$p_1,\ldots, p_d$ be its children. We first recursively
transform each of the trees $T(p_i)$ rooted at $p_i$. Next, we
check if there is a child $p_i$ such that $T(p_i)$ contains more
than half of the nodes of $T$. (Note that there can be at most one
such child.) If not, then $T$ (with modified $T(p_1),\ldots ,
T(p_d)$) is the desired $T'$. Otherwise, $T'$ is the tree $T$ with
the edge that connects $p_i$ to $r$ contracted.

It is easy to see that the tree $T'$ obtained by the above
procedure is indeed balanced and also the lengths of the edges on
any root-to-leaf part decrease at rate of at least $\sigma$. Thus,
$T'$ is a weighted $\sigma$-HST and (a) holds. Now, to prove (b),
let us inductively assume that it holds for all the transformed
subtrees $T(p_i)$. We prove it for the transformed $T$. Consider a
leaf-to-leaf path in $T$. If the path is contained entirely in one
of the subtrees $T(p_i)$, then we are done by our inductive
assumption. Otherwise, the path contains two edges incident to the
root $r$. As $T$ is a $\sigma$-HST, these two edges must be the
longest ones on this path. Thus, as the procedure could contract
only one of them, (b) follows as well.

Now, clearly, taking $\widetilde{T}$ to be $T'$ satisfies the
first two desired properties of $\widetilde{T}$, as stated in the
theorem. To prove that the last one holds too, we note that the
length of any leaf-to-leaf path can only decrease in
$\widetilde{T}$ (as we only contract edges). However, as we retain
at least one of the longest edges, we have that the worst case
distortion it incurs is at most:
$$
\frac{2}{\sigma^{\ell}} \cdot  \sum_{j=0}^{\ell} \sigma^{j} \leq 2
\cdot \sum_{j=0}^{\ell} \frac{1}{\sigma^j} \leq
\frac{2\sigma}{\sigma-1},
$$
where $\sigma^{\ell}$ is the length of the longest edge on the
path. The theorem thus follows.
\end{proof}

\subsection{Rounding the Fractional $k$-server Solution Online}
\label{s:rounding}

We now show how to obtain an online randomized (integral)
$k$-server algorithm from a {\em fractional} $k$-server algorithm,
when the underlying metric corresponds to a $\sigma$-HST $T$. The
competitiveness of the obtained algorithm will only be an $O(1)$
factor worse than the competitiveness of the fractional algorithm,
provided $\sigma>5$. The rounding procedure builds on ideas in
\cite{BBK} which were developed in the context of the finely
competitive paging problem, and extends those ideas from a uniform
metric to HSTs.

Let $1,\ldots,n$ denote the leaves of the $\sigma$-HST $T$. Recall
that the fractional solution to the $k$-server problem specifies
at each time $t$ the probability $x^t_i$ of having a server at
leaf $i$. The variables $x^t_i$ satisfy $\sum_i
x^t_i=k$.\footnote{Note that, in principle, the definition of a
fractional solution allows us to have $\sum_i x^t_i$ to be
strictly smaller than $k$. However, as the starting configuration
$\ox^0$ has exactly $k$ servers, it is easy to modify our
fractional solution so that it always has exactly $k$ servers,
while not increasing its movement cost.} When these probabilities
change at each time step $t$, the movement cost paid by the
fractional algorithm is equal to the earthmover distance between
the distribution $\ox^{t-1}$ and $\ox^{t}$. In contrast, an
execution of a randomized algorithm can be viewed as an evolution
of a distribution over $k$-tuples of leaves. (There is no point in
having more than one server at a leaf.) To make this more
precise, let us define a {\em configuration} to be a subset $C$ of
$\{1,\ldots, n\}$ of size exactly $k$. The state $S^t$ at a given
time $t$ of a randomized $k$-server algorithm  is specified by a
probability distribution $\mu_{S^t}$ on the configurations, where
$\mu_S(C)$ denotes the probability mass of configuration $C$ in
state $S$.

Now, we say that a state $S$ is {\em consistent} with  a fractional state $\ox$ if,
\begin{eqnarray}
\mbox{(consistency)} & \mbox{for each $i \in [n]$} \qquad \sum_{C:
i \in C} \mu_S(C) = x_i,
\end{eqnarray}
i.e., if the marginal probabilities of the state $S$ coincide with
$\ox$.

We therefore see that in order to round a fractional algorithm to
a randomized algorithm, we need to devise a way of maintaining (in
an online manner) a sequence of states $S^0,S^1,\ldots$ that are
always consistent with the corresponding states
$\ox^0,\ox^1,\ldots$ of the fractional algorithm, and such that
the cost of the maintenance is within an $O(1)$ factor of the
movement cost of the fractional algorithm. More precisely, our
goal is to establish the following result.

\begin{thm}
\label{thm:simulate} Let $T$ be a $\sigma$-HST with $n$ leaves,
$\sigma>5$, and let $\ox^0,\ox^1,\ldots$ be a sequence of states
of a fractional $k$-server algorithm. There is an online procedure
that maintains a sequence of randomized $k$-server states
$S^0,S^1,\ldots $ with the following properties:
\begin{enumerate}
\item At any time $t$, the state $S^t$ is consistent with the
fractional state $\ox^t$. \item If the fractional state changes
from $\ox^{t-1}$ to $\ox^{t}$ at time $t$, incurring a movement
cost of $c_t$, then the state $S^{t-1}$ can be modified to a state
$S^{t}$ while incurring a cost of $O(c_t)$.
\end{enumerate}
\end{thm}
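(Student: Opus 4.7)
The plan is to construct the randomized states $S^t$ top-down along $T$ by maintaining, at every internal node $p$, a random integer server count $K(p,t)\in\{\lfloor k(p,t)\rfloor,\lceil k(p,t)\rceil\}$, where $k(p,t) := \sum_{i\in T(p)} x_i^t$ is the fractional mass in the subtree rooted at $p$. I would enforce two invariants at each time $t$: (i) the marginal expectation $\Ex[K(p,t)] = k(p,t)$, so the induced leaf distribution has marginals matching $\ox^t$, and (ii) tree-consistency $K(p,t) = \sum_i K(p_i,t)$ across the children $p_i$ of $p$. Given these, a valid $S^t$ is realized simply by placing $K(p,t)$ servers inside $T(p)$ for every $p$, and the consistency requirement of the theorem follows automatically at the leaves.

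To evolve the integer counts as $\ox^t$ changes, I would view the fractional movement continuously and decompose it into atomic moves of mass $d\delta$ between a pair of leaves $u,v$. Such a move traverses the unique HST path from $u$ to $v$ through the least common ancestor $\ell(u,v)$ and changes $k(q,t)$ by $\pm d\delta$ at every strict descendant $q$ of $\ell(u,v)$ on this path. At each such $q$, a coupling fires a one-server integer move across the edge of length $W(q)$ precisely when $k(q,t)$ crosses an integer boundary; the firing probability is proportional to $d\delta$, so the expected integer cost incurred at $q$ per atomic move is $O(d\delta\cdot W(q))$. The coupling itself is a local procedure at each parent node $p$ that redistributes the integer quota $K(p,t)$ among the children's subtrees; this is essentially a finely-competitive paging instance on the children, and I would adapt the offset-based coupling of~\cite{BBK} from the binary $\{0,1\}$ paging setting to our multi-server setting by attaching independent random offsets to each node and using them to synchronize integer boundary crossings across sibling subtrees.

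For the cost analysis, I would sum the induced expected integer cost of the atomic move $u\to v$ across all nodes $q$ on the path: the total is $O(d\delta)\cdot \sum_q W(q)$, and since $T$ is a $\sigma$-HST the edge lengths along the path decay geometrically, so $\sum_q W(q) = O(W_{\max}\cdot \sigma/(\sigma-1))$ where $W_{\max}$ is the longest edge on the path. The fractional cost of the move is at least $d\delta\cdot W_{\max}\cdot 2(1-1/\sigma)$, comparable to the integer bound up to a constant factor when $\sigma>5$. Integrating over all atomic moves constituting $\ox^{t-1}\to\ox^t$ yields the claimed $O(c_t)$ bound on the expected rounding cost. The main technical obstacle is designing the local coupling at each parent node so that all three properties hold simultaneously: each $K(p_i,t)$ remains within its $\{\lfloor k(p_i,t)\rfloor, \lceil k(p_i,t)\rceil\}$ range, the children's integer counts always sum to the (random, possibly time-varying) parent count $K(p,t)$, and the expected integer cost per infinitesimal change in $k(p_i,t)$ is $O(1)$. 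Extending the \cite{BBK} offset/threshold coupling to accommodate multi-server occupancy and to cascade consistently under changes in $K(p,t)$ itself is the core technical step; once this local rounding is established, the tree-level geometric summation makes the rest of the argument routine.
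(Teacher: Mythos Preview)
Your proposal shares the paper's central invariant --- that the randomized state keeps the server count in every subtree $T(p)$ within $\{\lfloor x_p\rfloor,\lceil x_p\rceil\}$ --- but the mechanism is different and, as you yourself concede, the core step is left open. The paper does not build a hierarchical offset coupling. It works directly with distributions over global $k$-configurations, defines a \emph{balance gap} $G(S,\ox)=\sum_p W(p)\sum_C \mu_S(C)\,\min(|n_p(C)-\lfloor x_p\rfloor|,|n_p(C)-\lceil x_p\rceil|)$, and proves a rebalancing lemma (Lemma~\ref{l:sim}): any consistent $S$ can be made balanced at cost $O(G(S,\ox))$ by repeatedly picking the highest imbalanced node $p$, locating an underfull configuration $C$ and an overfull configuration $C'$ there, and swapping a pair of leaves between them inside $T(\widetilde{p})$; each swap drops the gap by $\Omega(W(p)\delta)$ while costing $O(W(p)\delta)$, and this is precisely where $\sigma>5$ is used. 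For an elementary fractional move of $\delta$ mass between leaves with LCA $p$, the paper first crudely patches $S$ to restore consistency (cost $O(\delta w(p))$), bounds the resulting balance gap by $O(\delta w(p))$ via the geometric edge decay, and then invokes the lemma.

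The gap in your sketch is exactly the step you flag: the tree-consistent coupling. Your cost accounting sums $O(d\delta\cdot W(q))$ only over nodes $q$ on the $u$--$v$ path, but when the coupling fires at some such $q$ (so $K(q)$ jumps by $\pm1$), that server must reside at an actual leaf of $T(q)$, and there is no reason this leaf lies on the path --- the cascade in general goes off-path, through siblings whose fractional mass never changed, and those edge costs are absent from your sum. More fundamentally, enforcing $K(p,t)=\sum_i K(p_i,t)$ sample-path-wise while keeping each $K(p_i,t)\in\{\lfloor k(p_i,t)\rfloor,\lceil k(p_i,t)\rceil\}$ requires correlating the children's threshold crossings with each other \emph{and} with changes to $K(p,t)$ driven from above; the single-level $\{0,1\}$-occupancy coupling of \cite{BBK} does not supply this, and you have not said how to build it. The paper's potential-function approach sidesteps the need for any such local coupling by reasoning globally about configurations rather than node-by-node.
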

The key idea in bounding the maintenance cost of our rounding in
Theorem \ref{thm:rounding_ov} is to require that the states $S^t$
that we produce are not only consistent with $\ox^t$, but also
each configuration in the support of $S^t$ does not deviate much
from the fractional state $x^t$. To this end, we introduce the
following additional property for $k$-server states.

For a node $p$ of $T$ and a fractional state $\ox$, let $x_p =
\sum_{i \in T(p)} x_i$ be the fractional amount of servers that
$\ox$ has on the leaves of the subtree $T(p)$ rooted at $p$. Also,
let $n_p(C) = C \cap T(p)$ denote the number of servers in
configuration $C$ on leaves of $T(p)$. We say that a configuration
$C$ is {\em balanced} with respect to $\ox$ if $n_p(C) \in
\{\lfloor x_p \rfloor,\lceil x_p \rceil\}$ for every node $p$.
Now, we say that a $k$-server state $S$ is balanced with respect
to $\ox$ if every configuration in its support (i.e., with
non-zero probability mass) is balanced with respect to $\ox$. That
is, for all $p$ and  $C$ for which $\mu_S(C)>0$,
\begin{eqnarray}
 \lfloor x_p\rfloor \leq  n_p(C)  \leq \lceil x_p \rceil.
\end{eqnarray}
Now, our approach to making the states we work with balanced is
facilitated by the following definition. Let $\ox$ be a fractional
state and let $S$ be some $k$-server state consistent with $\ox$
(but $S$ might be not balanced with respect to $\ox$). We define
the {\em balance gap}, $G(S,\ox)$ of $S$, (with respect to $\ox$)
to be:
\begin{eqnarray} G(S,\ox) =  \sum_p  W(p)   \sum_{C \in S}  \mu_S(C)  \min   \left(| n_p(C) - \lfloor x_p \rfloor |, |n_p(C) - \lceil x_p \rceil|\right).
\end{eqnarray}
Here, $W(p)$ denotes the length of the edge from  $p$ to its
parent. Clearly, when $S$ is balanced with respect to $\ox$, its
balance gap is zero. Intuitively, the balance gap measures the
distance of the state $S$ from being balanced. This intuition is
made concrete by the following lemma.

\begin{lemma}
\label{l:sim} Let $\ox$ be a fractional state and let $S$ be a
$k$-server state on the leaves of a $\sigma$-HST $T$, with
$\sigma>5$, which is consistent with $\ox$ (but not necessarily
balanced with respect to it). Then, $S$ can be converted to
another state $S'$ which is both consistent and balanced with
respect to $\ox$, while incurring a cost of  $O(G(S,\ox))$.
\end{lemma}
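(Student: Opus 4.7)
The plan is to convert $S$ into $S'$ by repeated application of \emph{elementary swaps} applied top-down along the tree. An elementary swap picks two configurations $C, C'$ in the support of $\mu_S$ together with leaves $i \in C \setminus C'$ and $j \in C' \setminus C$, and transfers a small mass $\delta$ from the pair $(C, C')$ onto the pair $(C'', C''')$ defined by $C'' := (C \setminus \{i\}) \cup \{j\}$ and $C''' := (C' \setminus \{j\}) \cup \{i\}$. A direct check shows that this swap leaves every leaf marginal unchanged, and thus preserves consistency with $\ox$, while incurring cost at most $2\delta \cdot d(i, j)$.

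To restore balance at a node $p$ with parent $q$, I repeatedly apply such a swap, each time choosing $C$ with $n_p(C) > \lceil x_p \rceil$ and $C'$ with $n_p(C') < \lfloor x_p \rfloor$, and selecting $i \in (C \setminus C') \cap T(p)$ and $j \in (C' \setminus C) \cap (T(q) \setminus T(p))$. A simple counting argument based on $|C \cap T(p)| > |C' \cap T(p)|$ guarantees that such $i, j$ always exist. Since $T$ is a $\sigma$-HST, both $i$ and $j$ lie in $T(q)$, whose diameter is at most $2 W(q)/(\sigma-1) = 2 \sigma W(p)/(\sigma-1)$, so each unit of mass swapped costs $O(W(p) \sigma/(\sigma-1))$.

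I process the tree top-down, level by level; within each level, I handle each parent $q$ independently by solving the transportation-style subproblem of redistributing servers among $q$'s children via sibling swaps. Let $\hat G_\ell$ denote the total balance-gap at level $\ell$ at the moment I begin processing it, and let $W_\ell$ be the common length of an edge from a level-$\ell$ node to its parent. The total mass needing to be swapped at level $\ell$ is at most $\hat G_\ell / W_\ell$, so the cost incurred at level $\ell$ is $O(\sigma/(\sigma-1)) \cdot \hat G_\ell$. Importantly, the swaps at level $\ell$ are localized within each $T(q)$, and thus never disturb any already-fixed shallower level.

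The main obstacle is controlling how these swaps inflate the gap at deeper levels. A single swap at level $\ell$ alters $n_a(\cdot)$ for every node $a$ strictly between $p$ and $q$ on the $i$-$j$ path, and at each depth $\ell'' > \ell$ there are exactly two such nodes, so per unit of mass swapped the $\ell''$-gap grows by at most $4 W_{\ell''}$. Summing, the total inflation of $\hat G_{\ell''}$ due to processing level $\ell$ is at most $4 \hat G_\ell / \sigma^{\ell'' - \ell}$. This yields the recursion $\hat G_{\ell''} \leq G_{\ell''} + \sum_{\ell < \ell''} 4 \hat G_\ell / \sigma^{\ell'' - \ell}$; summing over $\ell''$ and swapping the order of summation gives
\[
\sum_\ell \hat G_\ell \;\leq\; G(S, \ox) \;+\; \frac{4}{\sigma-1} \sum_\ell \hat G_\ell.
\]
For $\sigma > 5$ this rearranges to $\sum_\ell \hat G_\ell \leq \frac{\sigma-1}{\sigma-5} G(S, \ox)$, and multiplying by the per-level overhead $O(\sigma/(\sigma-1))$ yields a total cost of $O(G(S, \ox))$.
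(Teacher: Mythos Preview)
Your level-by-level framework with the recursion
\[
\hat G_{\ell''} \leq G_{\ell''} + \sum_{\ell < \ell''} \frac{4 \hat G_\ell}{\sigma^{\ell''-\ell}}
\]
is a clean alternative to the paper's potential argument (the paper also always attacks the highest imbalanced node, but instead tracks progress by showing the global gap $G$ drops by $\Omega(W(p)\delta)$ per swap, which immediately bounds total cost by $O(G)$). The inflation bookkeeping and the summation yielding $\sum_\ell \hat G_\ell \le \tfrac{\sigma-1}{\sigma-5}\,G(S,\ox)$ are fine.

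The gap is in the swap rule itself, and it undermines the key claim that level $\ell$ can be balanced using swap mass at most $\hat G_\ell / W_\ell$.

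First, you require $C'$ with $n_p(C') < \lfloor x_p \rfloor$. Consistency only tells you that if some $C$ has $n_p(C) > x_p$ then some $C'$ has $n_p(C') < x_p$; when $x_p$ is non-integral this gives $n_p(C') \le \lfloor x_p \rfloor$, not strict inequality. So your rule can stall while $p$ is still imbalanced. The paper instead takes any $C'$ with $n_p(C')$ at least two away from $n_p(C)$ on the other side of the target (which always exists once $C$ is imbalanced); only $C$ needs to be off-balance.

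Second, and more seriously, you place $j$ in an arbitrary sibling subtree $p' \subset T(q)\setminus T(p)$. Your counting only guarantees some leaf $j \in C'\setminus C$ lives outside $T(p)$; it does \emph{not} guarantee $n_{p'}(C') > n_{p'}(C)$ for the particular sibling $p'$ containing $j$. If $n_{p'}(C') \le n_{p'}(C)$, the swap can strictly increase the imbalance at $p'$, and repeated careless choices can cycle (e.g.\ with three children at targets $2,2,2$ and configurations $(3,2,1),(1,2,3)$, routing the swap through the middle child just permutes the imbalance). So ``swap mass at level $\ell$ is at most $\hat G_\ell/W_\ell$'' is unjustified as written. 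The paper fixes this by first selecting a sibling $p'$ with $n_{p'}(C) > n_{p'}(C')$ (which exists because the parent is balanced and $n_p(C') - n_p(C) \ge 2$) and only then choosing $j$ inside $T(p')$; a convexity observation on $t \mapsto \min(|t-\lfloor x_{p'}\rfloor|,|t-\lceil x_{p'}\rceil|)$ shows the swap cannot raise the imbalance at $p'$. With that correction your per-level mass bound goes through, and the rest of your argument is sound.
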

We will prove Lemma \ref{l:sim} later. First we show show how
Theorem \ref{thm:simulate} follows from it.

\begin{proof}[Proof of Theorem \ref{thm:simulate}]
Consider a fractional state $\ox$ that changes to some other
fractional state $\ox'$, and let $S$ be a $k$-server state which
is both consistent and balanced with respect to $\ox$. As $S^0$
(the state initially at time $t=0$) is consistent and balanced
with respect to $\ox^0$, it is easy to see that to establish the
theorem, it suffices to show that for any $\ox$, $\ox'$, and $S$
as above, there is a $k$-server state $S'$ which is consistent and
balanced with respect to $\ox'$, and such that the cost of
changing the state $S$ to state $S'$ is within $O(1)$ factor of
the cost of changing the state $\ox$ to $\ox'$. Furthermore, it is
enough to restrict oneself to the case in which $\ox'$ is obtained
from $\ox$ by applying an elementary move, i.e. $x_i$, for some
leaf $i$, is increased by $\delta$ and $x_{i'}$, for some other
leaf $i'$, is decreased by $\delta$, where $\delta$ can be chosen
to be infinitesimally small.

In light of this, we can focus on proving the existence of such
$S'$. To this end, let $p$ denote the least common ancestor of $i$
and $i'$. Note that in this case the fractional cost of changing
$\ox$ to $\ox'$ is at least $2 \delta w(p)$, where
$w(p)=W(p)/\sigma$ is the length of the edges from $p$ to its
children.

Consider now the following transformation of the state $S$. First,
we add the leaf $i$ to a probability mass of $\delta$ on arbitrarily
chosen configurations in $S$ that do not contain $i$ already.
Next, we remove $i'$ from some probability mass $\delta$ of
configurations containing $i'$. (Note that the existence of sufficient
mass of each type of configurations follows from the consistency property of
$S$ with $\ox$.) Let $\tS$ be the resulting state.

Before proceeding, we note that as $\delta$ can be taken to be
arbitrarily small, we can restrict our discussion to the case
in which $i$ is added to a mass of $\delta$ of a particular
configuration $C$, and $i'$ is removed from a mass of $\delta$ of
a particular configuration $C'$.

Now, to continue the proof, we observe that $\tS$ is consistent with $\ox'$. However, the
modified configurations $C$ and $C'$ that $\tS$ contains are not legal anymore
as they do not consist of exactly $k$ leaves. Also, $\tS$ might be unbalanced
with respect to $\ox'$.

We show how we can modify $\tS$ to fix these shortcomings. To this
end, we note that as $C$ contains $i$, and it was balanced with
respect to $\ox$ prior to adding $i$, it must hold now that $n_p(C)
\geq \lfloor x_p \rfloor+1$. Similarly, for $C'$, it must hold that
$n_p(C') \leq \lfloor x_p \rfloor<\lfloor x_p \rfloor+1=n_p(C)$.
Thus, by the pigeon hole principle, there must exist a leaf $j$ of
$T(p)$ which is contained in $C$, but not in $C'$. Let us modify
$\tS$ by removing $j$ from $C$ and adding it to $C'$. Clearly,
this makes all the configurations in $\tS$ legal again, keeps $\tS$
consistent with $\ox'$, and the total movement cost corresponding
to this modification (due to deleting $i$, adding $i'$, and
swapping $j$) is at most $4 \delta w(p) \sigma/(\sigma-1)=
O(\delta w(p))$, for $\sigma>5$, which is within an $O(1)$ factor
of the cost of changing $\ox$ to $\ox'$.

Unfortunately, $\tS$ might still be unbalanced with respect to
$\ox'$. To bound the imbalance, let us first consider the case
in which $\lfloor x_q\rfloor =\lfloor x_q'\rfloor$ and $\lceil x_q
\rceil=\lceil x_q' \rceil$ for all nodes $q$. This implies that
all the configurations in $\tS$ other than the modified
configurations $C$ and $C'$ are already balanced with respect to
$\ox'$ as they were balanced with respect to $\ox$. Now, we note
that $x_{q}\neq x_q'$ only for nodes $q$ that are on the path
between $i$ and $i'$ (but excluding $p$). Similarly, $n_q(C)$ and
$n_q(C')$ could change only for nodes on the path from $p$ to $i$,
$i'$, or $j$ (but, again, excluding $p$). Therefore, as both $C$
and $C'$ were initially balanced with respect to $\ox$, we can
conclude that the total imbalance gap $G(\tS,\ox)=G(\tS,\ox')$ of
$\tS$ after our modifications is at most:

$$3 \cdot 2\delta w(p)\left(1+ \frac{1
}{\sigma} + \frac{1}{\sigma^2} + \ldots \right) =O(\delta w(p)).$$
Thus, by applying Lemma \ref{l:sim} to $\tS$, we obtain a state $S'$
that is consistent and balanced with respect to $\ox'$ and the
cost of this procedure is $O(\delta w(p))$, as desired.

Now, it remains to deal with the case in which either
$\lfloor x_q\rfloor \neq \lfloor x_q'\rfloor$, or $\lceil x_q \rceil\neq \lceil x_q' \rceil$, for some nodes $q$. To this end, we note that by taking $\delta$
to be small enough (but non-zero), we can ensure that for each $q$ for which at least one of these two inequalities holds, it must be the case that either $x_q$ or $x_q'$ is an integer. In the former case, we have that for all the configurations $C''$ in $S$ that have non-zero mass, $n_q(C'')=x_q=\lfloor x_q\rfloor =\lceil x_q \rceil$ and thus $\lfloor x_q'\rfloor \leq n_q(C'') \leq \lceil x_q'\rceil$. In the latter case, as $|x_q-x_q'|\leq \delta$ and for every relevant configuration $C''$ in $S$, $\lfloor x_q\rfloor \leq n_q(C'') \leq \lceil x_q\rceil$, the total probability mass of configurations $C''$ in $S$, such that $n_q(C'')=\lfloor x_q\rfloor < \lfloor x_q'\rfloor=x_q'$ or $n_q(C'')=\lceil x_q\rceil>\lceil x_q'\rceil = x_q'$, can be at most $\delta$.

As a result, we see that the total probability mass of
configurations in $\tS$ that are not balanced with respect to
$\ox'$ is at most $3\delta$ (the contribution of $2\delta$ comes
from the modified configurations $C$ and $C'$). Thus, by
calculating the imbalance gap similarly to what we did before, we can show that
 $G(\tS,\ox')$ is $O(\delta w(p))$,
and once again use Lemma \ref{l:sim} to obtain the desired $S'$.
This concludes the proof of the theorem.
\end{proof}

It remains to prove Lemma \ref{l:sim}.

\begin{proof}[Proof of Lemma \ref{l:sim}]
Let us call a node $p$ in our $\sigma$-HST $T$ {\em imbalanced} if  $$\sum_{C \in S}  \mu_S(C) \min   (| n_p(C) - \lfloor x_p \rfloor |, |n_p(C) - \lceil x_p \rceil|) > 0.$$

If no node is imbalanced, then clearly $G(S,x)=0$, and we are already done, so we assume that this is not the case.
Let $p$ be an imbalanced nodes which is at the highest level of $T$ (breaking ties arbitrarily). We note that $p$ cannot be the root $r$ of $T$, as each configuration has exactly $k$ servers, and $x_r=k$.

Consider now a configuration $C$ for which $\mu_S(C)>0$ and $n_p(C) \notin \{\lfloor x_p \rfloor ,\lceil x_p \rceil\}$ and let us assume that $n_p(C) < \lfloor x_p \rfloor$ (the other case can be treated similarly). As $S$ is consistent with $\ox$, $\sum_{C''} \mu_S(C'') n_p(C'') = x_p$, and so there must be some other configuration $C'$ with $\mu_S(C')>0$ such that $n_p(C') \geq \lfloor x_p \rfloor+1$. So, in particular, we have $n_{p}(C')-n_p(C) \geq 2$.

Now, let $\widetilde{p}$ denote the parent of $p$ in $T$ (recall that $p$ is not the root). As we choose $p$ to be an imbalanced node at the highest possible level, $\widetilde{p}$ must be balanced, and hence $|n_{\widetilde{p}}(C') - n_{\widetilde{p}}(C)| \leq 1$. But, since $n_p(C')- n_p(C) \geq 2$, it implies the existence of some other  child $p'$, $p' \neq p$, of $\widetilde{p}$
such that  $n_{p'}(C') < n_{p'}(C)$.

Therefore, by the pigeon hole principle, there must exist a leaf $i$ in the subtree $T(p)$ rooted at $p$ which is contained in $C'$, but not in $C$. Similarly, $C$ must contain a leaf $i'$ in $T(p')$ which is not contained in $C'$.  Let $\delta = \min (\mu_S(C),\mu_S(C'))$ (note that $\delta>0$). Consider a modification of $S$ in which we take any arbitrary probability mass $\delta$  of configurations $C$ and replace $i'$ by $i$ in them. Next, we take any arbitrary  probability mass $\delta$ of configurations $C'$ and replace $i$ by $i'$ in them.

Let us summarize the properties satisfied by $S$ after this modification. First, the state remains consistent with the fractional solution $\ox$, because the marginals of the leaves $i$ and $i'$ have not changed.
Second, since neither $n_{\widetilde{p}}(C)$ nor $n_{\widetilde{p}}(C')$ have changed, $\widetilde{p}$ remains balanced.
Moreover, the only nodes for which the imbalance could have changed are on the path from $i$ to $p$ and $i'$ to $p'$.
Third, replacing $i'$ with $i$ (in a $\delta$ measure of $C$) increases $n_p(C)$ by $1$ for these configurations, and replacing $i$ with $i'$, leaves the quantity $n_p(C')$ to be of value at least $\lfloor x_p \rfloor$. Together, this implies that the imbalance of $p$ decreases by at least $\delta$. Finally, as $n_{p'}(C)>n_{p'}(C')$ before the modification, the imbalance $\sum_{C'' \in S}  \mu_S(C'') \min   (| n_{p'}(C'') - \lfloor x_{p'} \rfloor |, |n_{p'}(C'') - \lceil x_{p'} \rceil|)$ of $p'$ can only decrease.

Now, if we analyze the change in the imbalance gap of $S$, in the worst case, the imbalance of every node from $p$ to $i$ (excluding $p$) could have increased by $2\delta$ due to the addition of $i$ in $C$ or removal of $i$ from $C'$. Similarly, the imbalance of every node from $p'$ to $i'$ (excluding $p'$) could have increased by up to $2 \delta$.
Together with the above observations, this implies that the imbalance gap of $S$ decreases by at least
$$ W(p) \delta -  4 \delta w(p) \left(1+\frac{1}{\sigma} + \frac{1}{\sigma^2} + \ldots \right) = W(p) \delta \left( \frac{\sigma-5}{\sigma-1}\right) = \Omega (W(p) \delta),$$
where the last inequality uses the fact that $\sigma>5$.

On the other hand, as both $i$ and $i'$ lie in $T(\widetilde{p})$,
the movement cost incurred in the above procedure is at most $4 \delta w(\widetilde{p})(\sigma/(\sigma-1)) =   4 \delta W(p)\frac{\sigma}{(\sigma-1)}$, which is within $O(1)$ factor of the reduction in the imbalance gap.
The lemma follows by applying the above steps repeatedly until the imbalance gap reaches zero.
\end{proof}

\bibliographystyle{plain}
\bibliography{thesis}

\appendix

\section{Proof of Lemma \ref{obs:nosplit}}\label{app:nosplit}

Consider first the simpler case of the fix stage. Here the
variables evolve according to

\begin{eqnarray*}\frac{d y_{i,j}^\tau}{d \tau} & = & \left\{\begin{array}{ll}\frac{1}{w_i}\left(y^\tau_{i,j}+\beta\right) & y^{\tau}_{i,j} <1\\
0 & y_{i,j}^{\tau}=1.\end{array} \right.\end{eqnarray*}
As the derivative of $y_{i,j}^\tau$ only depends on $y^\tau_{i,j}$
and is continuous, the function $y^\tau_{i,j}$ is well defined. As
the derivative is 0 when $y^{\tau}_{i,j}=1$, and non-negative
otherwise, it ensures that $y^{\tau}_{i,j}$ always stays in the
range $[0,1]$. Finally, the monotonicity property holds here, as
it holds initially when $\tau=0$, and  whenever $y_{i,j} \leq
y_{i,j'}$ for some $j<j'$,  we have that $d y_{i,j}^{\tau}/d\tau
\leq d y_{i,j'}^{\tau}/d\tau$, unless $y_{i,j'}^{\tau}=1$, in
which case monotonicity holds trivially.

We now consider the hit stage.
Recall that configuration $y_{i,j}^{\eta}$, for each $(i,j)$, evolves according to Equation \eqref{eq:y_evol_general}, which we reproduce here for convenience:

\begin{eqnarray}
\frac{dy^{\eta}_{i,j}}{d \eta} & = &
\left\{\begin{array}{ll}
0 & \mbox{if } (i,j)\notin A^{\eta}, \\
\frac{1}{w_i}\left(y^{\eta}_{i,j}+\beta\right)\cdot\left(N(\eta)-\alpha \lambda^\eta_{i,j}\right) & \mbox{otherwise}.\\
\end{array}\right.\label{eq:y_evol_general2}
\end{eqnarray}
In the above, the set $A^{\eta}$ denotes the active coordinates at time $\eta$ (cf. Definition \ref{def-active}) and the normalization factor $N(\eta)$ can be expressed as (cf. \eqref{eq:n_general})
\begin{eqnarray}
N(\eta)  &=& \left\{\begin{array}{l} 0 \qquad  \qquad  \qquad \qquad  \qquad \quad \mbox{if
}\sum_{i,j}y^\eta_{i,j} > kd- \kappa(t), \\
\frac{\sum_{(i,j)\in A^{\eta}}\frac{1}{w_i}\left(y^{\eta}_{i,j}+\beta\right)\cdot
\alpha
\lambda_{i,j}^\eta}{\sum_{(i,j)\in A^{\eta}}\frac{1}{w_i}\left(y^{\eta}_{i,j}+\beta\right)}
\qquad \mbox{otherwise (i.e. if } \sum_{i,j}y^\eta_{i,j} = kd- \kappa(t)).\\
\end{array}\right.\label{eq:n_general2}
\end{eqnarray}
First, we show that during the hit stage blocks never split. To this end, we note that when two blocks merge, their $y$-values are identical, and since we also modify $\lambda^{\eta}$ to be
identical for these blocks, all the variables contained in the merged block evolve in the same way from that
point on, as desired. (Note that we do not assume here that the trajectory $\oy^\eta$ is well defined and unique, we just argue that any trajectory compatible with our definition of derivatives cannot split blocks.)

Now, we proceed to analyzing the properties of the evolution described by Equations \eqref{eq:y_evol_general2} and \eqref{eq:n_general2}. As a first step, let us prove the following claim that will be helpful later.

\begin{claim}\label{cl:n_nonincr}
Consider a feasible configuration $\oy^{\eta'}$, for $\eta'\in[0,1]$, a subset of coordinates $A$, and a hit cost vector $\lambda$. Define for any $\eta\geq \eta'$,
\[
N_A(\eta)=\frac{\sum_{(i,j)\in A}u_{i,j}^\eta\cdot
\alpha
\lambda_{i,j}}{\sum_{(i,j)\in A} u_{i,j}^\eta},
\]
where $u_{i,j}^\eta=\frac{1}{w_i}\left(y^{\eta}_{i,j}+\beta\right)$.
Now, if we make the configuration $\oy^{\eta'}$ evolve according to
\begin{eqnarray*}
\frac{dy^{\eta}_{i,j}}{d \eta} & = &
\left\{\begin{array}{ll}
0 & \mbox{if } (i,j)\notin A, \\
\frac{1}{w_i}\left(y^{\eta}_{i,j}+\beta\right)\cdot\left(N_A(\eta)-\alpha \lambda_{i,j}\right) & \mbox{otherwise}\\
\end{array}\right.,
\end{eqnarray*}
then we have that $N_A(\eta)$ does not increase, i.e., $N_A(\eta)\leq N_A(\eta')$ for any $\eta\geq\eta'$.
\end{claim}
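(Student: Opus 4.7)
The plan is to show that $N_A(\eta)$ is non-increasing by directly differentiating it with respect to $\eta$ and showing that the derivative equals a manifestly non-positive expression, namely a negatively-weighted sum of squared deviations $(N_A(\eta)-\alpha\lambda_{i,j})^2$. This is essentially a variance-type computation, reminiscent of the fact that the weighted mean of a quantity decreases when each individual weight grows at a rate proportional to the deviation from the mean.

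Concretely, let $S(\eta):=\sum_{(i,j)\in A} u_{i,j}^\eta$ and $T(\eta):=\sum_{(i,j)\in A} \alpha \lambda_{i,j}\, u_{i,j}^\eta$, so that $N_A(\eta)=T(\eta)/S(\eta)$. Since $u_{i,j}^\eta=\frac{1}{w_i}(y_{i,j}^\eta+\beta)$ and $y_{i,j}^\eta$ only evolves for $(i,j)\in A$, I would first observe that
$$\frac{du_{i,j}^\eta}{d\eta}=\frac{1}{w_i}\cdot\frac{dy_{i,j}^\eta}{d\eta}=\frac{u_{i,j}^\eta}{w_i}\bigl(N_A(\eta)-\alpha\lambda_{i,j}\bigr),\qquad (i,j)\in A,$$
and compute $\frac{dS}{d\eta}$ and $\frac{dT}{d\eta}$ by summing over $A$. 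Applying the quotient rule then gives
$$\frac{dN_A}{d\eta}=\frac{1}{S(\eta)}\left(\frac{dT}{d\eta}-N_A(\eta)\frac{dS}{d\eta}\right)=\frac{1}{S(\eta)}\sum_{(i,j)\in A}\bigl(\alpha\lambda_{i,j}-N_A(\eta)\bigr)\cdot\frac{u_{i,j}^\eta}{w_i}\bigl(N_A(\eta)-\alpha\lambda_{i,j}\bigr),$$
which simplifies to
$$\frac{dN_A}{d\eta}=-\frac{1}{S(\eta)}\sum_{(i,j)\in A}\frac{u_{i,j}^\eta}{w_i}\bigl(N_A(\eta)-\alpha\lambda_{i,j}\bigr)^2\ \leq\ 0.$$
The inequality uses only that $u_{i,j}^\eta\geq \beta/w_i>0$ and $w_i>0$, so $S(\eta)>0$ throughout. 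Integrating from $\eta'$ to $\eta$ yields $N_A(\eta)\leq N_A(\eta')$, as required.

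The main technical point to be careful about is that the evolution of $y_{i,j}^\eta$ restricted to $A$ (with $N_A(\eta)$ defined by the above weighted-average formula) is a well-defined ODE system in the interior of the feasible region: the right-hand sides are smooth functions of the $y_{i,j}^\eta$'s because $S(\eta)$ is bounded below by $|A|\cdot\min_i\beta/w_i>0$. Thus a unique $C^1$ trajectory exists locally, and the differentiation above is legitimate on any interval where it is defined; at the (finitely many, by Lemma~\ref{obs:nosplit}) times where $A$ or $\lambda$ may change within the full algorithm, the claim is applied piecewise. I do not expect any real obstacle beyond this bookkeeping, since the core inequality falls out of a single application of the quotient rule.
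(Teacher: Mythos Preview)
Your proof is correct and is in fact cleaner than the paper's own argument. Both proofs rest on the same underlying observation---that $N_A(\eta)$ is a $u_{i,j}^\eta$-weighted average of the $\alpha\lambda_{i,j}$'s, and the dynamics push the weight of each coordinate in the direction opposite to its deviation from that average---but the executions differ. The paper argues at the level of an infinitesimal time step $\eta^+=\eta+d\eta$, writes $N_A(\eta^+)$ in terms of increments $\Delta_{i,j}=u_{i,j}^{\eta^+}-u_{i,j}^{\eta}$ whose signs are governed by the sign of $\alpha\lambda_{i,j}-N_A(\eta)$, and then derives a contradiction from a mediant-type inequality assuming $N_A(\eta^+)>N_A(\eta)$. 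You instead differentiate $N_A=T/S$ directly and recognize the result as $-\tfrac{1}{S}\sum_{(i,j)\in A}\tfrac{u_{i,j}}{w_i}(N_A-\alpha\lambda_{i,j})^2$, i.e.\ minus a weighted variance, which makes the non-positivity immediate. Your route is shorter and avoids the somewhat informal ``$\eta+d\eta$'' reasoning and the auxiliary fraction inequalities; the paper's route, on the other hand, requires no calculus bookkeeping about well-posedness of the ODE. Your remark that $S(\eta)\geq |A|\beta/\max_i w_i>0$ keeps the quotient rule valid is exactly the right thing to check.
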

\begin{proof}
Let us fix a $\eta\geq \eta'$ and denote $\eta^+=\eta+d \eta$. We will prove that $N_A(\eta^+)\leq N_A(\eta)$, which, in turn, implies our claim.
To this end, note that one can view $N_A(\eta)$ as a weighted average, over all coordinates in $A$, of the value of $\alpha \lambda_{i,j}$, where $u^{\eta}_{i,j}$ is the weight which we attribute to coordinate $(i,j)$ at time $\eta$.

Now, the key observation is that the way the $y_{i,j}$s evolve implies that
$y_{i,j}^{\eta^+}\geq y_{i,j}^{\eta}$ if $\alpha\lambda_{i,j}\leq N_A(\eta)$, and $y_{i,j}^{\eta^+}\leq y_{i,j}^{\eta}$ if $\alpha\lambda_{i,j}\geq N_A(\eta)$. So, as the weights $u^{\eta}_{i,j}$ are directly proportional to $y_{i,j}^\eta$, we can conclude that during our evolution, the weights of coordinates that have a value of $\alpha \lambda_{i,j}$ above the average value $N_A(\eta)$ (i.e. $u^{\eta^+}_{i,j}\leq u^{\eta}_{i,j}$ in this case) can only decrease, and the weights of coordinates that have a value of $\alpha \lambda_{i,j}$ which is at most the average (i.e., we have $u^{\eta^+}_{i,j}\geq u^{\eta}_{i,j}$ for such $(i,j)$) can only increase.
As a result, we can express $N_A(\eta^+)$ as
\begin{equation}\label{eq:lll3}
N_A(\eta^+)=\frac{\sum_{(i,j)\in A}u_{i,j}^{\eta^+}\cdot
\alpha
\lambda_{i,j}}{\sum_{(i,j)\in A} u_{i,j}^{\eta^+} }=\frac{\sum_{(i,j)\in A}u_{i,j}^{\eta}\cdot
\alpha
\lambda_{i,j} + \sum_{(i,j)\in A} \alpha
\lambda_{i,j} \Delta_{i,j} }{\sum_{(i,j)\in A} u_{i,j}^{\eta} + \sum_{(i,j)\in A} \Delta_{i,j}},
\end{equation}
where $\Delta_{i,j}=u^{\eta^+}_{i,j}-u^{\eta}_{i,j}$ and, by our discussion above, we have that $\Delta_{i,j}\geq 0$ if $\alpha \lambda_{i,j}\leq N_A(\eta)$, and $\Delta_{i,j}\leq 0$ otherwise.

Now, if the right hand side of \eqref{eq:lll3} is at most $N_A(\eta)$, then we are done. Otherwise, we must have that
\begin{equation}\label{eq:ll_n}
\frac{\sum_{(i,j)\in A}u_{i,j}^{\eta}\cdot
\alpha
\lambda_{i,j}}{\sum_{(i,j)\in A} u_{i,j}^{\eta}}>N_A(\eta),
\end{equation}
as it is easy to check that if $\frac{a+d\cdot t}{b+t}>c$, then also $\frac{a}{b}>c$, as long as $t$ and $d$ are such that $t\geq 0$ if $d\leq c$, and $t\leq 0$ otherwise.

But, the left hand side of \eqref{eq:ll_n} is by definition equal to $N_A(\eta)$. So, the obtained contradiction implies that indeed $N_A(\eta^+)\leq N_A(\eta)$, and the claim follows.
\end{proof}

Now, observe that the set $A^\eta$, as well as $N(\eta)$, depend only on the state $\oy^{\eta}$ and the hit cost vector $\lambda^{\eta}$. As a result, both $A^{\eta}$ and $N(\eta)$ can, in principle, vary drastically between points of time. However, as we show in the following claim, ``locally'' they tend to behave in a regular manner.

\begin{claim}\label{cl:nbd}
For any $\eta'\in [0,1)$ and feasible configuration $\oy^{\eta'}$, there exists an $\eta''$, $\eta'<\eta''\leq 1$, such that:
\begin{enumerate}[(a)]
\item $A^{\eta}=A^{\eta'}$ for each $\eta\in [\eta',\eta'')$, i.e., the set $A^{\eta}$ of active coordinates does not change for $\eta<\eta''$;\label{cond:newproc1}
\item $\lambda^{\eta}=\lambda^{\eta'}$ for any $\eta\in [\eta',\eta'')$, i.e., there are no block merges until time $\eta''$;\label{cond:newproc2}
\item the sum $\sum_{i,j} y_{i,j}^\eta$ is bigger than $kd-\kappa(t)$ for all $\eta<\eta''$, unless it was already equal to $kd-\kappa(t)$  at time $\eta'$;\label{cond:newproc3}
\item the configuration $\oy^{\eta''}$ is well defined and feasible for $\eta\in [\eta',\eta'']$ (note that this interval contains $\eta''$).\label{cond:newproc4}
\end{enumerate}
Furthermore, one can assume $\eta''$ is maximal, that is, unless $\eta''$ is already equal to $1$, there exists no larger value of $\eta''$ for which the above holds.
\end{claim}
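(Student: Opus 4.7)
The plan is to freeze the active set $A := A^{\eta'}$ and the hit-cost vector $\lambda := \lambda^{\eta'}$, solve the resulting ODE locally via Picard--Lindel\"of, and define $\eta''$ as the supremum of those $\eta\in(\eta',1]$ for which conditions (a)--(c) still hold under this frozen dynamics. With $A$ and $\lambda$ fixed, the right-hand side
$$\frac{dy_{i,j}}{d\eta} \;=\; \frac{1}{w_i}(y_{i,j}+\beta)\bigl(N(\oy)-\alpha\lambda_{i,j}\bigr)\cdot \mathbf{1}_{(i,j)\in A}$$
is Lipschitz in $\oy$ within each regime $\sum y>kd-\kappa(t)$ or $\sum y=kd-\kappa(t)$: the denominator $\sum_{(i,j)\in A}\tfrac{1}{w_i}(y_{i,j}+\beta)$ in the closed-form expression for $N(\oy)$ is bounded below by $|A|\beta/\max_i w_i>0$ (using that $A\neq\emptyset$ by Lemma~\ref{lem:wd}). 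Standard ODE theory therefore yields a unique local solution.

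Next I would verify that (a)--(c) are preserved for some positive amount of time. For (a), Definition~\ref{def-active} combined with continuity of $\oy^\eta$ gives: active coordinates with $y_{i,j}^{\eta'}\in(0,1)$ remain in $(0,1)$; active coordinates with $y_{i,j}^{\eta'}=0$ have strictly positive derivative at $\eta'$ (since $N(\eta')>\alpha\lambda_{i,j}$) and enter $(0,1)$ immediately; active coordinates with $y_{i,j}^{\eta'}=1$ have non-positive derivative and remain in $(0,1]$; inactive coordinates are frozen at their boundary value, and their defining inequalities are preserved because Claim~\ref{cl:n_nonincr} ensures $N(\eta)$ is non-increasing under the frozen dynamics. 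For (b), at $\eta'$ no two adjacent blocks $B_p,B_{p+1}$ can satisfy both $y^{\eta'}(B_p)=y^{\eta'}(B_{p+1})$ and $\lambda(B_p)<\lambda(B_{p+1})$ (otherwise the merge rule would already have fired); either the $y$-values are strictly ordered and continuity preserves this locally, or they coincide with $\lambda(B_p)\geq\lambda(B_{p+1})$, in which case $dy(B_p)/d\eta\leq dy(B_{p+1})/d\eta$ under the frozen dynamics and no merge is triggered. For (c), if $\sum y^{\eta'}>kd-\kappa(t)$ then $N\equiv 0$ and every active derivative is $\leq 0$, so the strict inequality persists on a positive interval; if $\sum y^{\eta'}=kd-\kappa(t)$, the choice of $N$ enforces $\sum_{(i,j)\in A}\frac{dy_{i,j}}{d\eta}=0$ and the equality is preserved identically.

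Setting $\eta''$ to the supremum of $\eta\in(\eta',1]$ for which (a)--(c) hold under the frozen dynamics yields $\eta''>\eta'$ by the above, and condition (d) holds on the closed interval $[\eta',\eta'']$ by continuity of the ODE solution, which stays in the compact cube $[0,1]^{dk}$ and preserves monotonicity (at $\oi$ through the block structure, and at $i\neq\oi$ because $\lambda_{i,j}=0$ makes the derivative an increasing function of $y_{i,j}$). Maximality of $\eta''$ is immediate from this definition. The principal obstacle is the circular coupling between $A^\eta$, $N(\eta)$, and the regime of the sum constraint, together with the fact that $N(\eta)$ could a priori jump; the monotonicity of $N$ from Claim~\ref{cl:n_nonincr} combined with the strict versus non-strict inequalities in Definition~\ref{def-active} is what rules out any such jump at $\eta'$ and produces a positive-length interval on which the frozen-$A$, frozen-$\lambda$ description is self-consistent.
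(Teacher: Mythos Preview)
Your approach is essentially the one the paper takes: freeze the active set $A=A^{\eta'}$ and the cost vector $\lambda=\lambda^{\eta'}$, run the resulting smooth ODE (the paper calls this the ``new process'' and writes down $\tilde N(\eta)$ explicitly rather than invoking Picard--Lindel\"of), and take $\eta''$ to be the first time any of the frozen conditions fails. Your case analysis for why each of (a)--(c) persists on a positive interval tracks the paper's treatment of its times $\eta^1,\ldots,\eta^4,\bar\eta^0,\bar\eta^1$, and Claim~\ref{cl:n_nonincr} is precisely the tool the paper uses as well.

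One small correction: your argument that inactive coordinates stay inactive ``because Claim~\ref{cl:n_nonincr} ensures $N(\eta)$ is non-increasing'' is wrong for the $y_{i,j}^{\eta'}=1$ case. There the defining inequality is $N(\eta')-\alpha\lambda_{i,j}>0$, and non-increase of $N$ works \emph{against} you, not for you. The correct reason (which the paper gives) is simply that this inequality is strict at $\eta'$ and $\tilde N$ is continuous, so it remains strict on a positive interval. Monotonicity of $N$ is what handles the $y=0$ inactive case and the $y=1$ \emph{active} case.
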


For future reference, we call $\eta''$, defined as in the above claim, the {\em horizon} of $\oy^{\eta'}$.

\begin{proof}[Proof of Claim \ref{cl:nbd}]
Let us define a new process for the evolution of $y_{i,j}^\eta$ in a suitably small neighborhood of
$\eta'$ as follows:

\begin{eqnarray}\frac{dy^{\eta}_{i,j}}{d \eta} = & = & \left\{\begin{array}{ll}\frac{1}{w_i}\left(y^{\eta}_{i,j}+\beta\right)\cdot\left(\tilde{N}(\eta)-\alpha
\lambda^{\eta'}_{i,j}\right) & \mbox{if } (i,j) \in
A^{\eta'}\\
0 & \mbox{otherwise}\end{array} \right.\label{eq:evol_new}
\end{eqnarray}
where
\begin{eqnarray}
\tilde{N}(\eta)  &=& \left\{\begin{array}{l} 0 \qquad  \qquad  \qquad \qquad  \qquad \qquad \mbox{if
}\sum_{i,j}y^{\eta'}_{i,j} > kd- \kappa(t), \\
 \frac{\sum_{(i,j)\in
A^{\eta'}}\frac{1}{w_i}\left(y_{i,j}^{\eta}+\beta\right) \alpha
\lambda^{\eta'}_{i,j}}{\sum_{(i,j)\in
A^{\eta'}}\frac{1}{w_i}\left(y_{i,j}^{\eta}+\beta\right)}
\qquad \mbox{otherwise (i.e. if } \sum_{i,j}y^{\eta'}_{i,j} = kd- \kappa(t)).\\
\end{array}\right.\label{eq:n_tilde}
\end{eqnarray}
While the new process looks similar to the original one, there are
some crucial differences. First, we do not care if a variable
$y^\eta_{i,j}$ either exceeds 1 or becomes negative, i.e., the active set is not updated as $\eta$ progresses and remains the set $A^{\eta'}$. Second, the cost vector is not updated and it
remains $\lambda^{\eta'}$, i.e.,
blocks are not merged and the monotonicity requirement \eqref{eq:y_2_consistency} are ignored. Third, the value of $\tilde{N}(\eta)$ is not changed once the value of $\sum_{i,j} y_{i,j}^{\eta}$ hits the quota. If $\sum_{i,j} y_{i,j}^{\eta'}>kd - \kappa(t)$ at time $\eta'$, $\tN(\eta)$ will be always $0$. As a result, the latter sum might become less than $kd - \kappa(t)$ at some point.

Now, the key observation that makes this new process
useful to us is that it is identical to our original process, as long as the evolving configuration is still feasible, and no coordinate becomes active and changes in the original process. More precisely, the trajectories of the two processes coincide as long as in the configuration $\oy^{\eta}$ evolving with respect to this new process:
\begin{enumerate}[(1)]
\item $y_{i,j}^{\eta}$ is in $[0,1]$ for each $(i,j)\in A^{\eta'}$. (Otherwise, a coordinate $(i,j)$ violating this condition would have already become inactive in the original process.), or \label{cond:good1}
\item $\tN(\eta)-\alpha \lambda_{i,j}^{\eta'}$ is non-negative (respectively non-positive) for $(i,j)\notin A^{\eta'}$ with $y_{i,j}^{\eta'}=1$ (respectively $y_{i,j}^{\eta'}=0$). (Otherwise, a coordinate $(i,j)$ violating this condition would have already become active and would have changed in the original process.), or \label{cond:good2}
\item the monotonicity condition holds, i.e., $y_{i,j}^{\eta}\leq y_{i,j+1}^\eta$ for each $(i,j)$. (Otherwise, a block merge would have already happened in the original process.), or \label{cond:good3}
\item the server quota is obeyed, i.e., $\sum_{i,j} y_{i,j}^{\eta}\geq kd - \kappa(t)$. (Otherwise, $N(\eta)$ would have already changed so as  to guarantee that the number of servers stays at the quota.) \label{cond:good4}
\end{enumerate}
Let $\eta^\ell$ for $\ell\in \{1,2,3,4\}$ be the last $\eta\geq
\eta'$ for which the $\ell$-th condition is satisfied in the
interval $[\eta',\eta]$.  Also, let $\oeta^0$ (respectively
$\oeta^1$) be the first time $\eta\geq \eta'$ in which
$y_{i,j}^\eta=0$ (respectively, $y_{i,j}^\eta=1$) for $(i,j)\in
A^{\eta'}$ (respectively $(i,j)\notin A^{\eta'}$), and
$\tN(\eta)-\alpha \lambda_{i,j}^{\eta'}=0$. As in the new process
$\tN(\eta)$ -- and thus all the derivatives and the trajectory of
$\oy^{\eta}$ -- are continuous and bounded in the our interval of
interest\footnote{More precisely, $\tN(\eta)$ is continuous and
bounded, as long as each $y_{i,j}^{\eta}$ is bounded away from
$-\beta$. However, we are interested in only analyzing the process
 as long as all the variables stay non-negative, so for the
sake of our analysis, $\tN(\eta)$ is indeed continuous and
bounded.}, such maximal $\eta^\ell$s and $\oeta^0$, $\oeta^1$
always exists. Let us take $\eta''=\min\{\min_\ell \eta^\ell,
\oeta^0,\oeta^1,1\}$. Note that this implies that  $\oy^{\eta}$ is
well defined and feasible for any $\eta\in [\eta',\eta'']$ -- so,
condition \eqref{cond:newproc4} is satisfied. Furthermore, it is
not hard to verify that all of the conditions
\eqref{cond:newproc1}-\eqref{cond:newproc3} hold for such
$\eta''$, and that this $\eta''\leq 1$ is indeed maximal with
respect to satisfying these conditions. So, if we manage to prove
that also $\eta''>\eta'$, our claim will follow.

To this end, observe that by applying Claim \ref{cl:n_nonincr} with $A=A^{\eta'}$ and $\lambda=\lambda^{\eta'}$, we see that $\tN(\eta)$ can only decrease for $\eta\geq \eta'$. Therefore, all the derivatives \eqref{eq:evol_new} in this process can only decrease too. This implies, in particular, that if some variable stops increasing (i.e., its derivative becomes non-positive) at some point, it will never increase again.

Now, note that if $(i,j)\in A^{\eta'}$ then
\begin{itemize}
\item if $y_{i,j}^{\eta'}\in (0,1)$, then, as the derivatives are bounded, it must be the case that $y_{i,j}^{\eta}\in (0,1)$ until some time $\oeta>\eta'$;
\item if $y_{i,j}^{\eta'}=0$, then its derivative $\tN(\eta')-\alpha \lambda_{i,j}^{\eta'}$ at time $\eta'$ has to be strictly positive (otherwise, $(i,j)$ would be inactive at time $\eta'$) and thus $y_{i,j}^{\eta}$ (respectively its derivative) has to stay non-negative (respectively positive) until some time $\oeta>\eta'$ too;
\item finally, if $y_{i,j}^{\eta'}=1$, then its derivative $\tN(\eta')-\alpha \lambda_{i,j}^{\eta'}$ at time $\eta'$ has to be non-positive (as otherwise, $(i,j)\notin A^{\eta'}$). However, by our discussion above, it means that neither this derivative, nor $y_{i,j}^{\eta}$, will ever increase again.
\end{itemize}
Thus, we can infer from the above that both $\eta^1$ and $\oeta^0$ are strictly larger than $\eta'$.

 Next, let us focus on some $(i,j)\notin A^{\eta'}$. We have that
\begin{itemize}
\item if $y_{i,j}^{\eta'}=0$, then $\tN(\eta')-\alpha \lambda_{i,j}^{\eta'}$ has to be non-positive (otherwise, $(i,j)$ would be active at time $\eta'$) and as $\tN(\eta)$ never increases, $\tN(\eta')-\alpha \lambda_{i,j}^{\eta'}$ will never become positive;
\item if $y_{i,j}^{\eta'}=1$, then $\tN(\eta')-\alpha \lambda_{i,j}^{\eta'}$ at time $\eta'$ has to be strictly positive (as otherwise, $(i,j)\in A^{\eta'}$). So, this quantity has to remain positive for some time $\oeta>\eta'$.
\end{itemize}
Thus, we see that both $\eta^2$ and $\oeta^1$ are also strictly larger than $\eta'$.

Observe now that if $y_{i,j}^{\eta'}=y_{i,j+1}^{\eta'}$ for some $(i,j),(i,j+1)\in A^{\eta'}$, we must have that $\lambda_{i,j}^{\eta'}\geq \lambda_{i,j+1}^{\eta'}$. Otherwise, the blocks to which $(i,j)$ and $(i,j+1)$ belong  would have been merged at time $\eta'$. This means that the derivative of $y_{i,j}^\eta$ is always bounded from above by the derivative of $y_{i,j+1}^{\eta}$. Thus, such a pair of coordinates will never violate the monotonicity property. On the other hand, if $y_{i,j}^{\eta}<y_{i,j+1}^\eta$, then, as the derivatives are bounded, there always exists $\oeta>\eta'$ for which this strict inequality still holds (and thus monotonicity is not violated). Hence, we get that $\eta^3>\eta'$.

Finally, to see that $\eta^{4}>\eta'$ as well, we note that if $\sum_{i,j}y_{i,j}^{\eta'}=kd-\kappa(t)$, then, by design, it will remain equal henceforth. If, however, the latter sum is larger than $kd-\kappa(t)$ at time $\eta'$, then it would still remain so for some $\oeta>\eta'$.

Thus, indeed we have $\eta''=\min\{\min_\ell \eta^\ell, \oeta^0,\oeta^1,1\}>\eta'$, concluding the proof of the claim.
\end{proof}

In light of the above claim, one can consider obtaining a feasible configuration $\oy^1$ from the starting (feasible) configuration $\oy^0$ by simply gluing together the trajectories corresponding to the horizons. More precisely, one could start with $\eta_0=0$, and for each $\eta_s$, with $s\geq 0$ and $\eta_s<1$, define $\eta_{s+1}$ to be the horizon of $\oy^{\eta_s}$. Note that, as we start with the feasible configuration $\oy^{0}$, Claim \ref{cl:nbd} implies that all $\oy^{\eta_s}$ are well defined and feasible too.

Now, the only reason why the above approach might not end up giving us the desired feasible configuration $\oy^{1}$ is that, a priori, it is not clear whether the sequence $\{\eta_s\}_{s}$ ever reaches $1$. That is, even though we know that $\eta_0=0$ and $\eta_{s+1}>\eta_{s}$, it might still be possible that this sequence converges without ever reaching $1$, and thus there is no $s$ with $\eta_s=1$.

In order to rule out this possibility, we will prove that the total number of horizons is always finite. Observe that each horizon  can be associated with at least one of the following {\em events}: \eqref{cond:newproc1} the set of active coordinates changes, or \eqref{cond:newproc2} a block merge occurs, or \eqref{cond:newproc3} the number of servers hits the quota. Thus, it suffices to show that the total number of such events is bounded.

To this end, let us note that in our evolution, once $\sum_{i,j} y_{i,j}^\eta$ becomes equal to $kd-\kappa(t)$, $N(\eta)$ is chosen so that
$\sum_{i,j} {d y_{i,j}^\eta}/{d\eta}=0$. %
So, once we hit the quota we stay there throughout the rest of the hit stage. Hence, there can be at most one event of type \eqref{cond:newproc3}. Also, as we have already argued, during our evolution blocks never split once they are formed, and thus the total number of block merges (i.e., events of type \eqref{cond:newproc2}) can be at most $k$.

It remains to bound the number of events of type \eqref{cond:newproc1}, i.e., the ones corresponding to variables becoming active/inactive. For notational convenience, let us say that a coordinate $(i,j)$ {\em $0$-inactivates} (respectively, {\em $1$-inactivates}) at time $\eta_s$, for some $s\geq 1$, if $N(\eta_s)\leq\alpha \lambda_{i,j}^{\eta_s}$ (respectively, $N(\eta_s)>\alpha \lambda_{i,j}^{\eta_s}$) and $y^{\eta_s}_{i,j}=0$ (respectively, $y^{\eta_s}_{i,j}=1$), but $(i,j)$ was active at time $\eta_{s-1}$. We prove the following claim.

\begin{claim}\label{cl:n}
$N(\eta)$ can increase only at a horizon, i.e., for any $s\geq 0$ with $\eta_s<1$, $N(\eta_s)\geq N(\eta)$ for $\eta\in [\eta_s,\eta_{s+1})$.
Furthermore, if $N(\eta)$ indeed increases at time $\eta_{s+1}$, then at  $\eta_{s+1}$ we have an occurrence of either a block merge, or the quota is hit, or a $1$-inactivation of some $(i,j)$ with $y_{i,j}^{\eta_s}<1$.
\end{claim}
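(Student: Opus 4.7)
The proof naturally splits into two parts: establishing that $N$ is non-increasing within any horizon, and showing that a strict increase at a horizon boundary must be accompanied by one of the three listed events.

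\textbf{Part 1 (monotonicity inside a horizon).} By Claim \ref{cl:nbd}, on $[\eta_s,\eta_{s+1})$ both $A^\eta=A^{\eta_s}$ and $\lambda^\eta=\lambda^{\eta_s}$ are constant, and the quota status does not change. If $\sum_{i,j} y_{i,j}^{\eta_s} > kd-\kappa(t)$, then $N(\eta)\equiv 0$ throughout the horizon and the claim is trivial. Otherwise the quota is maintained and $N(\eta)$ is exactly the weighted average appearing in Claim \ref{cl:n_nonincr}, applied with $A=A^{\eta_s}$ and $\lambda=\lambda^{\eta_s}$. That claim immediately yields $N(\eta)\leq N(\eta_s)$ for all $\eta\in[\eta_s,\eta_{s+1})$, proving the first assertion.

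\textbf{Part 2 (boundary behavior).} I argue by contradiction: assume $N(\eta_{s+1})>N^-$, where $N^-:=\lim_{\eta\to\eta_{s+1}^-}N(\eta)$, and that none of (a) a block merge, (b) the quota being newly hit, or (c) a $1$-inactivation of an $(i,j)$ with $y_{i,j}^{\eta_s}<1$, occurs at $\eta_{s+1}$. Under (a), $\lambda^{\eta_{s+1}}=\lambda^{\eta_s}$; under (b), the quota regime is unchanged, so if the quota was strictly above we would have $N^-=N(\eta_{s+1})=0$, contradicting the hypothesis. Hence we may assume the quota is maintained at $\eta_{s+1}$ as well. Let $f(s):=\sum_{i,j}(dy_{i,j}^\eta/d\eta)|_{N=s}$ be evaluated at $\oy^{\eta_{s+1}}$, with active/inactive status determined by $s$ as in Lemma \ref{lem:wd}. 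By that lemma's proof, $f$ is continuous; moreover each individual derivative is non-decreasing in $s$ (linear-increasing in the interior, and zero/linear-increasing on the boundaries), so $f$ is non-decreasing. The algorithm picks $N(\eta_{s+1})$ as the minimal non-negative root $s^*$ of $f$. Therefore it suffices to prove $f(N^-)\geq 0$: this forces $s^*\leq N^-$, hence $N(\eta_{s+1})\leq N^-$, the desired contradiction.

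\textbf{Key estimate $f(N^-)\geq 0$.} From the horizon formula, taking the limit $\eta\to\eta_{s+1}^-$, we have the identity
\[
\sum_{(i,j)\in A^{\eta_s}}\frac{1}{w_i}\bigl(y^{\eta_{s+1}}_{i,j}+\beta\bigr)\bigl(N^--\alpha\lambda_{i,j}^{\eta_s}\bigr)=0.
\]
I classify how the active set $A(N^-)$ (evaluated at configuration $\oy^{\eta_{s+1}}$ with $N=N^-$) compares to $A^{\eta_s}$, under our assumption:
\begin{itemize}
\item Coordinates in $A^{\eta_s}$ with $y^{\eta_{s+1}}_{i,j}\in(0,1)$ remain active; their contribution is unchanged.
\item \emph{$0$-inactivations} ($(i,j)\in A^{\eta_s}$ with $y_{i,j}^{\eta_{s+1}}=0$): since $y_{i,j}$ was decreasing near $\eta_{s+1}$, we have $\alpha\lambda_{i,j}^{\eta_s}\geq N^-$, so their contribution $\tfrac{\beta}{w_i}(N^--\alpha\lambda_{i,j}^{\eta_s})$ to the above identity is non-positive; they are dropped from $A(N^-)$.
\item Coordinates in $A^{\eta_s}$ with $y^{\eta_{s+1}}_{i,j}=1$: by assumption (c), these can only be coordinates with $y^{\eta_s}_{i,j}=1$, which were active throughout the horizon, so $N(\eta)-\alpha\lambda\leq 0$, hence $N^-\leq\alpha\lambda$; they remain in $A(N^-)$.
\item Coordinates outside $A^{\eta_s}$: these satisfy $y_{i,j}^{\eta_{s+1}}=y_{i,j}^{\eta_s}\in\{0,1\}$. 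At $y=0$ they had $N(\eta)\leq\alpha\lambda$, so $N^-\leq\alpha\lambda$ and they remain inactive at $s=N^-$. At $y=1$ they had $N(\eta)>\alpha\lambda$, so $N^-\geq\alpha\lambda$; they are activated at $s=N^-$ only in the equality case $N^-=\alpha\lambda$, and in that case their contribution is exactly $0$.
\end{itemize}
Subtracting the non-positive $0$-inactivation contributions from the $0$ on the right-hand side of the identity and adding only the zero contributions from boundary activations yields $f(N^-)\geq 0$, which finishes the argument.

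\textbf{Main obstacle.} The subtle part is the classification above: one must rule out \emph{every} way that $A(N^-)$ can differ from $A^{\eta_s}$ except through $0$-inactivations (and zero-contribution boundary activations at $y=1$). Because $N$ and $A$ are defined through a circular fixed-point relation (Lemma \ref{lem:wd}), handling the equality case $N^-=\alpha\lambda$ and correctly interpreting 1-inactivations of coordinates already at $y=1$ requires using the horizon-side inequalities (provided by Claim \ref{cl:n_nonincr}'s monotonicity and the inactive-status definitions) to pin down the sign of $N^--\alpha\lambda_{i,j}^{\eta_s}$ on each class of coordinates.
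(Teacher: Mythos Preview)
Your overall strategy is sound and genuinely different from the paper's. Part~1 is identical. For Part~2, the paper works directly with the weighted-average formula \eqref{eq:n_general2}: it decomposes $A^{\eta_{s+1}}=(A^{\eta_s}\setminus(A_-^0\cup A_-^1))\cup A_+$, bounds $N(\eta_{s+1})$ above by a weighted average over $A^{\eta_s}$, and then invokes Claim~\ref{cl:n_nonincr} once more to compare with $N(\eta_s)$. You instead argue via the fixed-point function $f$ of Lemma~\ref{lem:wd}: show $f$ is monotone in $s$, prove $f(N^-)\geq 0$ by a coordinate census, and conclude $N(\eta_{s+1})\leq N^-$. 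Your route is arguably cleaner because it avoids the chain of weighted-average inequalities, and it yields the sharper conclusion $N(\eta_{s+1})\le N^-$ rather than just $N(\eta_{s+1})\le N(\eta_s)$.

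There is, however, a real (if repairable) gap in your third bullet. You assert that coordinates $(i,j)\in A^{\eta_s}$ with $y_{i,j}^{\eta_{s+1}}=1$ ``can only be coordinates with $y_{i,j}^{\eta_s}=1$'' by assumption~(c). This is false as stated: a coordinate can start with $y_{i,j}^{\eta_s}<1$, rise to $1$ at $\eta_{s+1}$, and yet \emph{not} $1$-inactivate there---that happens precisely when $N(\eta_{s+1})\le\alpha\lambda_{i,j}$, and ``not~(c)'' only rules out the complementary case. Your inference ``active throughout the horizon with $y=1$, hence $N(\eta)\le\alpha\lambda$'' therefore does not apply to such coordinates.

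The fix is short and uses your own contradiction hypothesis. For such a coordinate, ``not~(c)'' forces $N(\eta_{s+1})\le\alpha\lambda_{i,j}$; combining with the assumed $N(\eta_{s+1})>N^-$ gives $N^-<\alpha\lambda_{i,j}$. Hence at $s=N^-$ the coordinate is active (since $y=1$ and $N^-\le\alpha\lambda_{i,j}$), and its contribution to $f(N^-)$ equals its contribution to the limit identity, namely $\tfrac{1}{w_i}(1+\beta)(N^--\alpha\lambda_{i,j})\le 0$. With this correction your census is complete and $f(N^-)\ge 0$ follows exactly as you describe. (A small side remark: your appeal to ``minimal non-negative root'' is implicit in the paper as well; it is how Lemma~\ref{lem:wd} constructs $N(\eta)$, and monotonicity of $f$---which you correctly observe---makes that choice canonical.)
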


\begin{proof}
First, consider the case where at time $\eta_s$ the number of servers is still below the quota. By \eqref{eq:n_general2}, it means that $N(\eta)=0=N(\eta_s)$, for $\eta\in (\eta_s,\eta_{s+1})$, and $N(\eta_{s+1})=0$ unless the quota is hit at time $\eta_{s+1}$. So, the claim follows in this case, and in the rest of the proof we can assume that the number of servers is already  at the quota at time $\eta_{s}$.

First, we prove that $N(\eta_s)\geq N(\eta)$ for $\eta\in [\eta_s,\eta_{s+1})$, i.e., the first part of the claim. Let us fix some $\eta\in [\eta_s, \eta_{s+1})$. Note that by the definition of the horizon, we have that $A^{\eta}=A^{\eta_s}$ and $\lambda^{\eta}=\lambda^{\eta_s}$. So, for our purposes, it suffices to show that whenever $A^{\eta}=A^{\eta^+}=A$, and $\lambda^\eta=\lambda^{\eta^+}=\lambda$, for $\eta^+=\eta+d\eta$, we have that $N(\eta)\geq N(\eta^+)$. This follows immediately from Claim \ref{cl:n_nonincr}.

Now, to prove the second part of the claim, let us assume that none of the events mentioned in the statement of the claim occurred at time $\eta_{s+1}$, otherwise we are already done. So, we have, in particular, that $\lambda_{i,j}^{\eta_s}=\lambda_{i,j}^{\eta_{s+1}}=\lambda_{i,j}$ for each $(i,j)$. This implies that if there is an $(i,j)$ with $y_{i,j}^{\eta_{s+1}}=y_{i,j}^{\eta_s}=1$ that becomes active at time $\eta_{s+1}$,  then by Definition \ref{def-active} it must be the case that
\[
N(\eta_{s})>\alpha \lambda_{i,j}^{\eta_s}=\alpha \lambda_{i,j}^{\eta_{s+1}}\geq N(\eta_{s+1}).
\]
So, in this case $N(\eta_s)\geq N(\eta_{s+1})$, and thus we can restrict ourselves to the scenario in which the only coordinates $(i,j)$ that become active at time $\eta_{s+1}$ have $y_{i,j}^{\eta_{s+1}}=0$. As a result, we have
\[
A^{\eta_{s+1}}=(A^{\eta_s}\setminus (A_-^0\cup A_-^1)) \cup A_+,
\] where $A_{-}^0$ is the set of coordinates $(i,j)$ such that $(i,j)$ $0$-inactivates at time $\eta_{s+1}$, $A_{-}^1$ contains $(i,j)$-s which $1$-inactivate at that time and $y_{i,j}^{\eta_s}=1$, and $A_{+}$ is the set of $(i,j)$-s that become active at time $\eta_{s+1}$ with $y_{i,j}^{\eta_{s+1}}=0$.

Now, observe that if some $(i,j)\in A_{-}^1$, then we need to have $\alpha \lambda_{i,j}\geq N(\eta_{s})$. Otherwise, $(i,j)$ would have already been inactive at time $\eta_{s}$. Furthermore, we actually need to have $\alpha \lambda_{i,j}=N(\eta_s)$, as otherwise the derivative of $y_{i,j}^\eta$ would be negative in the interval $[\eta_s,\eta_{s+1})$, contradicting the fact that $y_{i,j}^{\eta_{s+1}}=1$. (Recall that we have already proved that $N(\eta)$ -- and thus all the derivatives -- do not increase in the interval $[\eta_s,\eta_{s+1})$.)

So, by the above, and Definition \ref{def-active}, we can conclude that
\begin{eqnarray}\label{eq:lll1}
\alpha \lambda_{i,j} \geq N(\eta_{s+1}) & \qquad & \mathrm{for\ each} \quad (i,j)\in A_{-}^0,\\
\alpha \lambda_{i,j} = N(\eta_s) & \qquad & \mathrm{for\ each} \quad (i,j)\in A_{-}^1,\\\label{eq:lll2}
\alpha \lambda_{i,j} < N(\eta_{s+1}) & \qquad & \mathrm{for\ each} \quad (i,j)\in A_{+}.\label{eq:llll3}
\end{eqnarray}

On the other hand, we can express $N(\eta_{s+1})$ as the weighted average of $\alpha \lambda_{i,j}$s over the set $A^{\eta_{s+1}}$
(cf. Claim \ref{cl:n_nonincr}), i.e. we have
\begin{equation*}
N(\eta_{s+1})=\frac{\sum_{(i,j)\in A^{\eta_{s+1}}}u_{i,j}^{\eta_{s+1}}\cdot
\alpha
\lambda_{i,j}}{\sum_{(i,j)\in A^{\eta_{s+1}}} u_{i,j}^{\eta_{s+1}}},
\end{equation*}
where $u_{i,j}^\eta=\frac{1}{w_i}\left(y^{\eta}_{i,j}+\beta\right)$.

Now, as $A^{\eta_{s+1}}=(A^{\eta_s}\setminus (A_-^0\cup A_-^1)) \cup A_+$, we can utilize conditions \eqref{eq:lll1} and \eqref{eq:llll3} to bound $N(\eta_{p+1})$ from above by a corresponding weighted average of $\alpha \lambda_{i,j}$-s over the set $A^{\eta_{s}}\setminus A_-^1$. In particular, we have
\begin{eqnarray}\label{eq:n_non_1}
N(\eta_{s+1}) & = & \frac{\sum_{(i,j)\in A^{\eta_{s+1}}}u_{i,j}^{\eta_{s+1}}\cdot
\alpha
\lambda_{i,j}}{\sum_{(i,j)\in A^{\eta_{s+1}}} u_{i,j}^{\eta_{s+1}}}=\frac{\sum_{(i,j)\in ((A^{\eta_s}\setminus (A_-^0\cup A_-^1)) \cup A_+)}u_{i,j}^{\eta_{s+1}}\cdot
\alpha
\lambda_{i,j}}{\sum_{(i,j)\in ((A^{\eta_s}\setminus (A_-^0\cup A_-^1)) \cup A_+)} u_{i,j}^{\eta_{s+1}}}\nonumber \\ & \leq & \frac{\sum_{(i,j)\in (A^{\eta_s}\setminus A_{-}^1)}u_{i,j}^{\eta_{s+1}}\cdot
\alpha
\lambda_{i,j}}{\sum_{(i,j)\in (A^{\eta_s}\setminus A_{-}^1)} u_{i,j}^{\eta_{s+1}}},
\end{eqnarray}
where the last inequality follows as for any $a,b>0$,
$\frac{a}{b}\leq \frac{a+c_1t_1 - c_2t_2}{b+t_1-t_2}$,
whenever $c_1\geq \frac{a}{b}$, $c_2 \leq \frac{a}{b}$, and  $t_1\geq 0$, $b>t_2\geq 0$.

If the last expression in \eqref{eq:n_non_1} is at most $N(\eta_s)$, then we are already done. So, let us assume, for the sake of contradiction, that it is strictly larger than $N(\eta_s)$. In this case, by \eqref{eq:lll2}, we need to have that also
\begin{equation}\label{eq:n_non_2}
\frac{\sum_{(i,j)\in A^{\eta_s}}u_{i,j}^{\eta_{s+1}}\cdot
\alpha
\lambda_{i,j}}{\sum_{(i,j)\in A^{\eta_s}} u_{i,j}^{\eta_{s+1}}}> N(\eta_s),
\end{equation}
as for any $a,b,c,t>0$, if $\frac{a}{b}>c$, then also $\frac{a+ct}{b+t}>c$.
However, by applying Claim \ref{cl:n_nonincr} with $A=A^{\eta_s}$, we have that
\[
\frac{\sum_{(i,j)\in A^{\eta_s}}u_{i,j}^{\eta_{s+1}}\cdot
\alpha
\lambda_{i,j}}{\sum_{(i,j)\in A^{\eta_s}} u_{i,j}^{\eta_{s+1}}}\leq \frac{\sum_{(i,j)\in A^{\eta_s}}u_{i,j}^{\eta_{s}}\cdot
\alpha
\lambda_{i,j}}{\sum_{(i,j)\in A^{\eta_s}} u_{i,j}^{\eta_{s}}}=N(\eta_s),
\]
contradicting \eqref{eq:n_non_2}, and thus proving that indeed $N(\eta_s)\geq N(\eta_{s+1})$. The proof of the claim is now concluded.
\end{proof}
Now, we are ready to bound the number of events of type \eqref{cond:newproc1}. To show that the number of these events is finite, it suffices to show that the number of $0$-inactivations and $1$-inactivations is finite. Also, observe that during the period in which the number of servers is below the quota, by definition, we have $N(\eta)=0$, and thus  variables can only decrease. As a result, coordinates can only $0$-inactivate in that period, and once they become inactive they stay that way. Hence, we have at most $kd$ such events.

In light of the above, we can focus on analyzing the events after reaching the quota. Note that in this case we can assume that $N(\eta)>0$. (If $N(\eta)=0$, then all derivatives are equal to zero, and the desired bounds trivially follow.) As we have that $\lambda_{i,j}^{\eta}$ is always zero when $i\neq \oi$, $N(\eta)>0$ implies that coordinates $(i,j)$ with $i\neq \oi$ can only increase, and once they $1$-inactivate they stay inactive. As a consequence, it suffices to show that the number of $0$-inactivations and $1$-inactivations is finite for all coordinates $(i,j)$ with $i=\oi$. In order to do so, we prove the following claim.

\begin{claim}
\label{cl:1-inactive} The total number of $1$-inactivations of coordinates $(\oi,j)$ is finite.
\end{claim}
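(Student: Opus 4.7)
The plan is to prove finiteness by combining the block structure of Lemma~\ref{obs:nosplit} with the monotonicity of $N$ from Claim~\ref{cl:n_nonincr} and the characterization of $N$-rises in Claim~\ref{cl:n}. First I would partition the hit stage into at most $k+1$ ``phases'' separated by block-merge horizons; since blocks only merge and never split, there are at most $k$ such horizons. Inside a single phase the cost vector $\lambda^\eta$ is constant, so the thresholds $\alpha\lambda_{\oi,j}$ are fixed, and Claim~\ref{cl:n} specializes to the statement that $N(\eta)$ is nonincreasing between horizons and can jump strictly upward only at a horizon that either hits the quota (at most once per phase) or simultaneously records a climbing $1$-inactivation, i.e., a $1$-inactivation of some $(i,j)$ with $y^{\eta_s}_{i,j}<1$.

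Next I would classify each $1$-inactivation of $(\oi,j)$ as climbing or non-climbing, according as $y^{\eta_s}_{\oi,j}<1$ or $y^{\eta_s}_{\oi,j}=1$ at the preceding horizon. Any non-climbing $1$-inactivation in a phase requires $N$ to strictly cross the fixed threshold $\alpha\lambda_{\oi,j}$ upward at that horizon, so by the specialization above it must be charged to either the unique quota-hit horizon of the phase or to a horizon that simultaneously bears a climbing $1$-inactivation. Since at most $k$ coordinates can transition from active to $1$-inactive at a single horizon, the number of non-climbing $1$-inactivations per phase is at most $k$ plus $k$ times the number of climbing-bearing horizons, reducing the task to bounding the climbing events per phase.

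For climbings, Claim~\ref{cl:n_nonincr} tells us that $N(\eta)$ decreases monotonically between horizons, so inside any maximal inter-horizon subinterval $N$ crosses each fixed threshold $\alpha\lambda_{\oi,j}$ downward at most once; hence each coordinate can undergo at most one $y_{\oi,j}\colon<1\to 1$ climbing inside a single subinterval. Between any two successive climbings of the same coordinate, $y_{\oi,j}$ must first drop below $1$ and then rise back, which forces $N$ to cross $\alpha\lambda_{\oi,j}$ downward and then upward, and the upward crossing is itself a triggering event counted by Claim~\ref{cl:n}. An amortized argument ordering the coordinates by increasing $\alpha\lambda_{\oi,j}$ and inducting along this ordering makes the charging acyclic and yields an explicit $\mathrm{poly}(k)$ bound on climbings per phase; a softer but sufficient alternative is to observe that between horizons the trajectory is given by a coupled linear ODE with analytic (exponential) solutions, so horizons are isolated in $[0,1]$ and cannot accumulate, giving finiteness per phase. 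Combined with the $O(k)$ bound on the number of phases, either route closes the proof.

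The main obstacle is exactly this amortized closure in the third step: naively, climbing events generate $N$-jumps which can enable further climbings in a cascade, and one has to argue that the cascade is self-limiting rather than explosive. The threshold-based induction accomplishes this by ensuring that each repeat climbing of $(\oi,j)$ is paid for by a strictly earlier triggering event, while the analytic/compactness alternative sidesteps the book-keeping at the cost of producing only finiteness without an explicit polynomial bound, which is all that the claim itself demands.
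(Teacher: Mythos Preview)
Your overall strategy---charge repeated $1$-inactivations of $(\oi,j)$ to upward crossings by $N$ of the threshold $\alpha\lambda^\eta_{\oi,j}$, and then charge those crossings to triggering events via Claim~\ref{cl:n}---is the same as the paper's. The gap is in closing the charging loop.

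First, the analyticity alternative does not work as stated. Between horizons the trajectory solves an analytic (rational, not linear) ODE, but the vector field changes at each horizon because the active set $A^\eta$ changes. The trajectory is therefore only \emph{piecewise} analytic with an a~priori unbounded number of pieces, and analyticity of each piece does not prevent the piece lengths from shrinking to zero; that accumulation is exactly what you must rule out, so this route is circular.

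For the threshold induction, you assert but do not prove the point on which everything hinges: that the climbing $1$-inactivation $(i',j')$ responsible (via Claim~\ref{cl:n}) for an $N$-rise above $\alpha\lambda^\eta_{\oi,j}$ satisfies $\lambda^\eta_{i',j'} < \lambda^\eta_{\oi,j}$ strictly. This does hold---since $y_{i',j'}$ had to be increasing on $[\eta_{s^*-1},\eta_{s^*})$ to reach $1$, one gets $N(\eta_{s^*-1})>\alpha\lambda^\eta_{i',j'}$, while $N(\eta_{s^*-1})\le\alpha\lambda^\eta_{\oi,j}$ by minimality of $s^*$---but it is the substance of the proof and cannot be dismissed as ``an amortized argument makes the charging acyclic.''

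The paper closes the loop by a different induction: on the index $j$, decreasing from $k$, rather than on the threshold. It shows that between two consecutive $1$-inactivations of $(\oi,j)$ there must occur a block merge, a quota hit, a $1$-inactivation at some $i'\neq\oi$, or a $1$-inactivation of $(\oi,j')$ with $j'>j$. The argument uses the monotonicity $y_{\oi,j'}\le y_{\oi,j}$ directly (if $y_{\oi,j}<1$ then no smaller index can $1$-inactivate) and invokes the block-merge rule \eqref{eq:mergeblocks} to dispose of the boundary case $s^*=s''$. Once you supply the missing observation, your threshold ordering is a legitimate and arguably cleaner alternative that avoids this case split; note, however, that both inductions unwind to a bound exponential in $k$, not the $\mathrm{poly}(k)$ you claim.
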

\begin{proof}
We will prove the claim first for $j=k$ and then consider consecutive $j$-s in decreasing order. As a result, our task is to prove for a given $j$, that $(\oi,j)$ $1$-inactivates a finite number of times, provided that the number of $1$-inactivations is finite for all coordinates $(\oi,j')$ with $j<j'\leq k$.

To this end, we argue that whenever there are two consecutive $1$-inactivations of some coordinate $(\oi,j)$ -- the first one at time $\eta_{s'}$, and the second one at time $\eta_{s''}$ -- then in the interval $[\eta_{s'}, \eta_{s''}]$ we have either a block merge, or the quota is hit, or a $1$-activation of a coordinate $(i',j')$ that has either $i'\neq \oi$ or $j'>j$. As we know that the number of occurrences of each of these events is finite, we get the desired proof.

To establish the above, let $\eta_{\os}$ for $s'<\os<s''$ be the time in which $(\oi,j)$ is activated between the two $1$-inactivations. Observe that as $N(\eta_{\os})\leq \alpha \lambda_{\oi,j}$ and $N(\eta_{s''})>\alpha \lambda_{\oi,j}$, there is a time $\eta_{s^*}$ with $\os<s^*\leq s''$ in which $N(\eta)$ increases above $\alpha \lambda_{\oi,j}$. (Recall that by Claim \ref{cl:n} we know that $N(\eta)$ can increase only at horizons.) Without loss of generality we take $s^*$ to be the first $s>\os$ corresponding to such an increase.

Now, in light of Claim \ref{cl:n}, we know that $N(\eta)$ increases at time $\eta_{s^*}$. Thus, to conclude our proof it suffices to show that if we have a coordinate $(i',j')$ that $1$-inactivates at time $\eta_{s^*}$ and $y_{i',j'}^{\eta_{s^*-1}}<1$, then we cannot have $i'=\oi$ and $j'\leq j$.

We consider two cases here. The first one corresponds to $s^*< s''$. In this case we have $y_{\oi,j}^{\eta_{s^*}}<1$, as otherwise $(\oi,j)$ would be $1$-inactivated already at time $s^*$, instead of $s''$. However, by the monotonicity property \eqref{eq:y_2_consistency}, we have that $y_{\oi,j''}^{\eta_{s^*}}\leq y_{\oi,j}^{\eta_{s^*}}$ for all $j''\leq j$. So, if $i'=\oi$, then $(i',j')$ cannot $1$-inactivate at time $\eta_{s^*}$ if $j'\leq j$, and the claim follows.

Consider now the remaining case of $s^*=s''$. If we have $i'=\oi$ and $j'\leq j$, then we must have  $\lambda_{\oi,j'}\geq \lambda_{\oi,j}$. Otherwise, condition \eqref{eq:mergeblocks} for block merge would trigger at time $\eta_{s''}$. As a result, by \eqref{eq:y_evol_general2}, we know that in the interval of our interest the derivatives of $y_{\oi,j'}^{\eta}$ are bounded from above by the derivatives of $y_{\oi,j}^{\eta}$.

Furthermore, we have that the derivative of $y_{\oi,j}^{\eta}$ is always non-positive for $\eta\in [\eta_{\os},\eta_{s''})$. This is so, as by the definition of $s^*$, $N(\eta)\leq \alpha \lambda_{\oi,j}$ for $\eta\in [\eta_{\os},\eta_{s^*})=[\eta_{\os},\eta_{s''})$. As a consequence, we must have both $y_{\oi,j}^{\eta}$ and $y_{\oi,j'}^{\eta}$ to be equal to $1$ for all $\eta\in [\eta_{\os},\eta_{s''})$, as otherwise these variables would not be able to reach $1$ at time $\eta_{s''}$. This, however, contradicts the fact that $y_{i',j'}^{\eta_{s^*-1}}$ has to be strictly smaller than $1$, as $s^*-1=s''-1\geq \os$. Thus, we cannot have $i'=\oi$ and $j'\leq j$ and our claim follows.
\end{proof}
Finally, it remains to bound the number of $0$-activations of coordinates $(\oi,j)$. We do this by simply noting that if there are two consecutive $0$-activations of some coordinate $(\oi,j)$, then $N(\eta)$ has to increase at least once between these two events. But, by Claim \ref{cl:nbd}, it means that one of the events, (whose total number is already bounded), would also occur in this period. Therefore, the number of $0$-activations is also finite and we can conclude the proof of Lemma \ref{obs:nosplit}.

\section{Proof of Lemma \ref{l:optcost_to_hc}}\label{app:mey}
Here we prove Lemma \ref{l:optcost_to_hc}. As mentioned earlier, the proof is implicit in the work of \cite{CMP08}, and we make it explicit here for completeness. We begin with some notation, and state another result that we need.

Let $M$ be an arbitrary metric space. Let $C[0]$ denote the configuration specifying the initial location of the $k$-servers. We assume that the servers are labeled, so for every $k' \leq k$, the first $k'$ entries of $C[0]$ specify the location of the first $k'$ servers. Let $\rho$ be some fixed $k$-server request sequence. Let
$\opt(k',X)$ denote the optimum cost of serving $\rho$ with $k'$ servers on $M$, starting in $C[0]$ and ending in configuration $X$
(for notational ease, we are suppressing the dependence on $\rho,M,C[0]$ here).
Let $\opt(k') = \min_X \opt(k',X)$,  denote the minimum cost of server $\rho$ starting in $C[0]$.

\begin{lemma}[\cite{CMP08}, Corollary 2]
\label{l:qconvex}
Let $\rho$ be some fixed request sequence and $C[0]$ be some fixed initial configuration.
For any $k_1,k_2 \in [k]$, given any state $X$ on $k_1$ locations, there exists another state $Y$ such that
\begin{enumerate}
\item $ |X \cap Y| = \min(|X|,|Y|)$, i.e. $Y$ overlaps with $X$ as much as possible, and
\item  $\opt(k_2,Y) \leq   \opt(k_2) + \opt(k_1,X) - \opt(k_1)$. That is, the excess cost incurred for an optimum $k_2$-server solution to end in $Y$, is no more than the excess cost incurred for the optimum $k_1$-server to end in $X$.
\end{enumerate}
\end{lemma}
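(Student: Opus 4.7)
My plan is to prove Lemma~\ref{l:qconvex} by induction on the length $n$ of the request sequence $\rho$, handling the cases $k_1 \leq k_2$ and $k_1 > k_2$ in parallel (they are essentially symmetric, differing only in whether $X \subseteq Y$ or $Y \subseteq X$; in either case the overlap condition $|X \cap Y| = \min(k_1,k_2)$ is the same constraint).

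For the base case $n=0$, $\opt(k',C)$ reduces to the transportation distance $d(C_{k'}[0], C)$ from the first $k'$ initial server positions to $C$, and $\opt(k') = 0$ (attained by $C = C_{k'}[0]$). When $k_1 \leq k_2$ I take $Y := X \cup \{C[0]_{k_1+1},\ldots,C[0]_{k_2}\}$ as a multiset; extending any min-cost matching $C_{k_1}[0] \to X$ by the identity on the extra initial positions gives $\opt(k_2,Y) \leq \opt(k_1,X)$, which is exactly the inequality needed since $\opt(k_1)=\opt(k_2)=0$. When $k_1 > k_2$, I fix a min-cost matching $C_{k_1}[0] \to X$, restrict it to the edges incident to $C_{k_2}[0] \subseteq C_{k_1}[0]$, and let $Y$ be the endpoints in $X$; dropping the other edges can only decrease cost, so $\opt(k_2,Y) \leq \opt(k_1,X)$ as before. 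Both constructions clearly satisfy the overlap condition.

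For the inductive step, let $\pi^X$ be an optimal $k_1$-server solution ending at $X$, let $X'$ be its configuration immediately before serving the last request $\rho_n$, and set $\delta := \opt_n(k_1,X) - \opt_{n-1}(k_1,X')$ (the cost of $\pi^X$'s last move). Applying the IH to the length-$(n-1)$ prefix with end-configuration $X'$ yields $Y'$ with $|Y'|=k_2$, $|X' \cap Y'| = \min(k_1,k_2)$, and
\[
\opt_{n-1}(k_2,Y') - \opt_{n-1}(k_2) \;\leq\; \opt_{n-1}(k_1,X') - \opt_{n-1}(k_1).
\]
I extend $Y'$ to $Y$ by mirroring $\pi^X$'s last move on the $\min(k_1,k_2)$ overlapping servers, leaving the other servers of $Y'$ in place. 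This produces a valid $k_2$-server step serving $\rho_n$ whose extra cost is at most $\delta$, and combining with the monotonicity facts $\opt_n(k_2) \geq \opt_{n-1}(k_2)$ and $\opt_n(k_1) = \opt_{n-1}(k_1) + [\opt_n(k_1) - \opt_{n-1}(k_1)]$ (and the fact that the last-step contributions absorb $\delta$ appropriately) delivers the desired inequality at time $n$, with $Y$ satisfying $|X \cap Y| = \min(k_1,k_2)$ by construction.

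The main obstacle is the mirroring step. When $k_1 > k_2$, the specific server of $\pi^X$ that moves to serve $\rho_n$ may lie outside the $k_2$ servers of $X'$ that overlap with $Y'$; I need to perform a cost-free relabeling within the overlap to designate a different server as the ``moving'' one, which succeeds precisely because the overlap $|X'\cap Y'|$ is maximal. A small case analysis based on whether $\rho_n \in X' \cap Y'$ (or can be brought into the overlap by the last move) is needed to simultaneously (i) certify that $\rho_n$ is served by some server of $Y$ after the mirrored move, (ii) bound the extra cost by $\delta$, and (iii) preserve the final overlap identity $|X \cap Y| = \min(k_1,k_2)$. Verifying that these three properties can always be achieved together, with no cost blow-up, is the technical heart of the argument.
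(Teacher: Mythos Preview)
The paper does not give its own proof of this lemma: it simply cites \cite{CMP08}, Corollary~2. So there is no ``paper's proof'' to compare against, and your attempt must be judged on its own merits.

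Your inductive scheme has a genuine gap in the case $k_1 < k_2$. Tracing your algebra: from the mirrored step you get $\opt_n(k_2,Y)\le \opt_{n-1}(k_2,Y')+\delta$, and combining with the inductive hypothesis and the identity $\delta=\opt_n(k_1,X)-\opt_{n-1}(k_1,X')$ yields
\[
\opt_n(k_2,Y)\;\le\;\opt_{n-1}(k_2)+\opt_n(k_1,X)-\opt_{n-1}(k_1).
\]
To reach the target $\opt_n(k_2)+\opt_n(k_1,X)-\opt_n(k_1)$ you then need
\[
\opt_n(k_1)-\opt_{n-1}(k_1)\;\le\;\opt_n(k_2)-\opt_{n-1}(k_2),
\qquad\text{i.e.}\qquad h^n(k_1)\le h^n(k_2).
\]
But the monotonicity of incremental optimal costs goes the \emph{other} way: more servers means smaller increment, so for $k_1<k_2$ one has $h^n(k_1)\ge h^n(k_2)$, typically strictly. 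A concrete failure: on the line $\{0,1,2\}$ with $C[0]=(0,1)$, $\rho=(2)$, $k_1=1$, $k_2=2$, and $X=\{0\}$, your construction gives $Y'=\{0,1\}$ and a mirrored cost bound of $4$, whereas the inequality you must establish has right-hand side $\opt_1(2)+\opt_1(1,\{0\})-\opt_1(1)=1+4-2=3$. Your bound $4$ does not certify $\le 3$, even though the lemma does hold (e.g.\ $Y=\{0,2\}$ works). The vague phrase ``the last-step contributions absorb $\delta$ appropriately'' is exactly where this shortfall hides.

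Separately, your ``main obstacle'' paragraph is focused on the wrong case. For $k_1>k_2$ the algebra \emph{does} close---the slack $h^n(k_2)-h^n(k_1)\ge 0$ is available to pay for a mirrored step that costs more than $\delta$---so the relabeling issue you worry about is less severe than it looks. The real problem is $k_1<k_2$, where no amount of case analysis on the mirroring will rescue the argument: the induction simply loses $h^n(k_1)-h^n(k_2)$ at every step. The proof in \cite{CMP08} handles both directions via a quasiconvexity property of the work function (their Lemma~1/Corollary~2), and you will likely need a similarly global structural fact rather than a purely local one-step mirroring.
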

This lemma and its proof can be found in \cite{CMP08} (Corollary 2).

Let $T$ be a weighted $\sigma$-HST. Again, for notational convenience, let us drop $\rho$, $C[0]$, and the underlying metric $T$ from the notation (these remain the same, and dropping them will not cause any confusion). Given a quota pattern $\kappa$, recall the definition of  $\optcost(\kappa,t)$ as the optimum cost of serving $\rho$ until time $t$ with quota pattern $\kappa$. We also use $\optcost(\kappa) = \optcost(\kappa,\infty)$ to denote the optimum cost of serving the entire sequence $\rho$.
As previously, let us define $h^t(\kappa) = \optcost(\kappa(t) \cdot \vec{1},t) - \optcost(\kappa(t) \cdot \vec{1},t-1)$ and $g(\kappa) = \sum_{t \geq 1} |\kappa(t)- \kappa(t-1)|$.
Let $D$ denote the diameter of $T$.

We will prove the following, which is the same as Lemma \ref{l:optcost_to_hc}. (In that notation, note that $\Delta \leq W(p)/(\sigma-1)$ for $T(p)$.)
\begin{thm}
$$ \sum_t h^t (\kappa) -  \Delta \cdot g(\kappa) \leq \optcost(\kappa)  \leq \sum_t h^t (\kappa) +  \Delta \cdot g(\kappa).$$
\end{thm}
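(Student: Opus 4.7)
I would prove both inequalities by a telescoping argument anchored at the fixed-quota optimum $V_T := \opt_T(\kappa(T) \cdot \vec{1})$. The plan is to show that both $\sum_t h^t(\kappa)$ and $\optcost(\kappa)$ lie within $O(\Delta \cdot g(\kappa))$ of $V_T$, from which the theorem follows (with the constants absorbed into the $2/(\sigma-1)$ factor of Lemma~\ref{l:optcost_to_hc}).

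The first step is to show $|\sum_t h^t(\kappa) - V_T| \leq \Delta \cdot g(\kappa)$. Rewriting $\sum_{t=1}^T h^t(\kappa) = \sum_t [\opt_t(\kappa(t) \cdot \vec{1}) - \opt_{t-1}(\kappa(t) \cdot \vec{1})]$ via Abel summation gives $V_T - \sum_{t=1}^{T-1} [\opt_t(\kappa(t+1) \cdot \vec{1}) - \opt_t(\kappa(t) \cdot \vec{1})]$ (using $\opt_0 \equiv 0$). The error is bounded using the monotonicity $|\opt_s(k_1 \cdot \vec{1}) - \opt_s(k_2 \cdot \vec{1})| \leq \Delta \cdot |k_1 - k_2|$, which follows from the fact that a server can be injected or withdrawn through the root at cost at most $\Delta$.

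For the upper bound on $\optcost(\kappa)$, I would construct a varying-quota solution inductively, maintaining the invariant that at every time $t$ it ends at a configuration $Z_t$ that is itself optimal for the fixed-quota-$\kappa(t)$ problem up to time $t$. At the step $t-1 \to t$, I would apply Lemma~\ref{l:qconvex} with $X = Z_{t-1}$, $k_1 = \kappa(t-1)$, $k_2 = \kappa(t)$: because $Z_{t-1}$ is fixed-quota-$\kappa(t-1)$ optimal the error term vanishes, yielding a state $W$ with $\kappa(t)$ servers that (a) is fixed-quota-$\kappa(t)$ optimal at time $t-1$ and (b) maximally overlaps with $Z_{t-1}$. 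The transition $Z_{t-1} \to W$ costs at most $\Delta \cdot |\kappa(t) - \kappa(t-1)|$ by routing the removed/added servers through the root, after which a one-step move-and-serve from $W$ to some optimal $Z_t$ accounts for the marginal $V_t - V_{t-1}$, since $W$ is already an optimal intermediate configuration for the $\kappa(t)$-server problem. Summing over $t$ gives $\optcost(\kappa) \leq V_T + \Delta \cdot g(\kappa)$. The lower bound is symmetric: any varying-quota solution can be padded into a fixed-quota-$\kappa(T)$ solution by paying at most $\Delta \cdot |\kappa(t)-\kappa(t-1)|$ of root-routing cost at each quota change, yielding $\optcost(\kappa) \geq V_T - \Delta \cdot g(\kappa)$.

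The main obstacle I foresee is the one-step cost argument in the upper bound: proving that from the specific $W$ guaranteed by Lemma~\ref{l:qconvex} there is indeed some optimal $Z_t$ reachable with one-step cost at most $V_t - V_{t-1}$. Because Lemma~\ref{l:qconvex} only asserts existence of $W$, the state handed to us may fail to coincide with the intermediate configuration of the particular optimal $\kappa(t)$-server trajectory that achieves the marginal cost $h^t(\kappa)$. Bridging this will likely require either a second, coordinated application of Lemma~\ref{l:qconvex} to $W$ at time $t-1$ that aligns it with an optimal continuation to time $t$, or an explicit use of Cot\'{e} et al.'s stronger uncrossing-style argument behind Corollary~2 to match endpoints simultaneously at both times $t-1$ and $t$; this is the technical heart of the proof.
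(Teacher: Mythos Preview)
Your anchoring strategy has a fatal gap in the very first step. The Lipschitz bound
\[
\bigl|\opt_s(k_1 \cdot \vec{1}) - \opt_s(k_2 \cdot \vec{1})\bigr| \;\leq\; \Delta \cdot |k_1 - k_2|
\]
is simply false. Take two leaves at distance $\Delta$ with the request sequence alternating between them for $s$ steps: with two servers the cost is $0$, with one server the cost is $\Theta(s\Delta)$. More generally, $\opt_s(j\cdot\vec{1})$ is monotone non-increasing in $j$ but is not Lipschitz in $j$; removing a server can raise the cost by an amount proportional to the length of the request sequence, not just the diameter. Your justification (``inject or withdraw a server through the root at cost $\Delta$'') conflates the fixed-quota and varying-quota models: in the fixed-quota quantity $\opt_s(j\cdot\vec{1})$ no server ever enters or leaves, so root-routing buys you nothing. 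Since your Abel-summation error term is exactly $\sum_{t}[\opt_t(\kappa(t)\cdot\vec{1}) - \opt_t(\kappa(t{+}1)\cdot\vec{1})]$, and each summand can be unbounded in either sign, step~(1) collapses. The same issue resurfaces in your padding argument for the lower bound on $\optcost(\kappa)$: when $\kappa(t) > \kappa(T)$ you cannot turn the varying-quota solution into a $\kappa(T)$-server solution by merely paying $\Delta$ per discarded server, because those discarded servers may have been the ones covering requests.

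The paper avoids this trap by never comparing $\opt_s(k_1\cdot\vec{1})$ and $\opt_s(k_2\cdot\vec{1})$ directly. Instead it inducts on $g(\kappa)$: it peels off the \emph{first} quota change (at time $\tau$), defines a pattern $\kappa'$ identical to $\kappa$ except that the constant prefix is shifted to $\kappa(\tau{+}1)$, and stitches together solutions for $\kappa$ and $\kappa'$ using Lemma~\ref{l:qconvex}. The point of that lemma is precisely that it controls the \emph{excess} $\opt(k_2,Y) - \opt(k_2)$ in terms of the excess $\opt(k_1,X) - \opt(k_1)$, while making $X$ and $Y$ maximally overlap; this lets you swap server counts at a single time step for a one-time cost of $\Delta\,|k_1 - k_2|$ without ever needing $\opt(k_1)$ and $\opt(k_2)$ themselves to be close. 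Your sketch invokes the same lemma, but only after committing to the Abel decomposition that has already failed; the induction-on-$g(\kappa)$ structure is what makes the lemma do the work. (A secondary point: even if your argument went through, it would give $2\Delta\cdot g(\kappa)$, and there is no slack in Lemma~\ref{l:optcost_to_hc} to absorb this---the factor $2$ there comes from the diameter bound $\Delta \le 2W(p)/(\sigma{-}1)$, not from the theorem's proof.)
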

\begin{proof}
We do an induction on the value of $g(\kappa)$. In the base case, when $g(\kappa)=0$,
the vector $\kappa$ is constant throughout, say $\kappa =  k \cdot \vec{1}$.
In this case, the claimed result holds trivially as the sum over $h^t$ telescopes and we obtain
$$ \sum_t h^t(\kappa) = \sum_t (\optcost(k \cdot \vec{1},t) - \optcost(k \cdot \vec{1},t-1)) =  \optcost(\kappa).$$

So, let $\kappa$ be such that $g(\kappa)>0$.
Let $\tau$ be the earliest time when $\kappa(\tau) \neq \kappa(\tau+1)$. Define a new quota pattern $\kappa'$ as
\begin{displaymath}
\kappa'(t) = \left\{ \begin{array}{ll}
            \kappa(\tau+1) & \textrm{if $t \leq \tau$}\\
            \kappa(t)      &   \textrm{if $t > \tau$.}
\end{array} \right.
\end{displaymath}
Note that both $\kappa$ and $\kappa'$ are constant for $t \leq \tau$. Also,
 $g(\kappa') = g(\kappa) - |\kappa(\tau+1) - \kappa(\tau)| < g(\kappa)$, and hence we can inductively
assume that the claimed result holds for $\kappa'$.

We first show that $ \optcost(\kappa)  \geq \sum_t h^t (\kappa) -  D \cdot g(\kappa)$.
Fix some solution that attains cost $\optcost(\kappa)$ and let $C[t]$ denote its configuration at time $t$.
Applying Lemma \ref{l:qconvex} with $X=C(\tau)$ and $k_1=\kappa(\tau)$ and $k_2= \kappa'(\tau)$, there is some configuration $Y$ satisfying
\begin{eqnarray} \optcost(k_2 \cdot \vec{1},\tau,Y) & \leq &  \optcost(k_2 \cdot \vec{1}, \tau) + \optcost(k_1 \cdot \vec{1},\tau,X) - \optcost(k_1 \cdot \vec{1},\tau)
\label{convexineq} \\
\label{maxovlap}
 |X \cap Y| & = &  \min(k_1,k_2).
 \end{eqnarray}

We construct a solution $S'$ corresponding to $\kappa'$ as follows:
Until time $\tau$, $S'$ follows the solution $\optcost(k_2 \cdot \vec{1},\tau,Y)$. Then, after serving the request at $t=\tau$, it switches to state $C[\tau]=X$, and henceforth for $t > \tau$ sets its the configurations $C'[t] = C[t]$.
Now $$\textrm{cost}(S') = \optcost(k_2\cdot \vec{1},\tau,Y) + c(Y,X)  + Q,$$
where $c(Y,X)$ is the cost of moving from state $Y$ to $X$, and $Q$ is the contribution of the solution $\optcost(\kappa)$ starting from time $\tau$ and state $X$ (recall that $C[\tau]=X$).

It is easily checked that the solution $S'$ constructed above is feasible for quota pattern $\kappa'$.
As the optimum solution for $\kappa'$ can only be better,  $\textrm{cost}(S') \geq \optcost(\kappa')$ and since $\optcost(\kappa') \geq
\sum_t h^t (\kappa') -  D \cdot g(\kappa')$ by the inductive hypothesis, it follows that
\begin{equation}
\label{eq:c'1} \sum_t h^t (\kappa') -  D \cdot g(\kappa') \leq  \textrm{cost}(S')  = \optcost(k_2\cdot \vec{1},\tau,Y) + c(Y,X)  + Q.
\end{equation}
By \eqref{maxovlap}, $|Y\cap X| = \min(k_1,k_2)$ and hence  $c(Y,X) \leq D |k_2 - k_1| = D | \kappa(\tau+1) - \kappa(\tau)|$.
Thus $c(Y,X) + D\cdot g(\kappa') \leq D \cdot g(\kappa)$, and hence \eqref{eq:c'1}
implies that
\begin{equation}
\label{eq:c'2} Q \geq \sum_t h^t (\kappa') -  D \cdot g(\kappa) - \optcost(k_2\cdot \vec{1},\tau,Y).
\end{equation}
On the other hand, as $X= C[\tau]$ we have
$$\optcost(\kappa) = \optcost(\kappa,\tau,X) + Q = \optcost(k_1\cdot \vec{1},\tau,X) + Q.$$
Thus by \eqref{eq:c'2},
\begin{eqnarray}
  \optcost(\kappa)  & \geq &  \optcost(k_1\cdot \vec{1},\tau,X)- \optcost(k_2\cdot \vec{1},\tau,Y) + \sum_t h^t (\kappa') -  D \cdot g(\kappa) \nonumber \\
  & \geq &  \optcost(k_1 \cdot \vec{1},\tau) - \optcost(k_2 \cdot \vec{1}, \tau)  + \sum_t h^t (\kappa') -  D \cdot g(\kappa)
  \label{eq:hitopt1}\\
  & = & \sum_t h^t (\kappa) -  D \cdot g(\kappa) \label{eq:hitopt2},
   \end{eqnarray}
implying the desired lower bound. Here \eqref{eq:hitopt1} follows from \eqref{convexineq}, and \eqref{eq:hitopt2} follows from \eqref{eq:hitopt1} since
\begin{eqnarray*}\sum_{t=1}^\tau h^t(\kappa)& = &\optcost(k_1 \cdot \vec{1},\tau), \\
\sum_{t=1}^\tau h^t(\kappa') &=& \optcost(k_2 \cdot \vec{1},\tau),\\
\end{eqnarray*}
and $\sum_t h^t (\kappa)  - \sum_t h^t (\kappa') = \sum_{t=1}^\tau h^t(\kappa) - \sum_{t=1}^\tau h^t(\kappa')$, since $\kappa$, and $\kappa'$ are the same for any $t>\tau$.

We now show show the upper bound on $\optcost(\kappa)$. The proof is similar to the one above.
Let $\kappa'$ be defined as previously. Let $\{C'[t]\}_t$ denote the configurations for some fixed solution that has value $\optcost(\kappa')$.
Applying Lemma \ref{l:qconvex} with $X=C'[\tau]$,$k_1=\kappa'(\tau)$ and $k_2= \kappa(\tau)$, we obtain a configuration $Y$ satisfying
\begin{eqnarray} \optcost(k_2 \cdot \vec{1},\tau,Y) & \leq &  \optcost(k_2 \cdot \vec{1}, \tau) + \optcost(k_1 \cdot \vec{1},\tau,X) - \optcost(k_1 \cdot \vec{1},\tau)
\label{convexineqb} \\
 |X \cap Y| & = &  \min(k_1,k_2).
 \end{eqnarray}
Consider the following solution $S$ corresponding to $\kappa$:
Until time $\tau$, $S$ mimics the solution $\optcost(k_2 \cdot \vec{1},\tau,Y)$. Then, after serving the request at $t=\tau$, it switches to state $C'[\tau]=X$, and henceforth for $t > \tau$ sets its the configurations $C[t] = C'[t]$.
Now $$\textrm{cost}(S) = \optcost(k_2\cdot \vec{1},\tau,Y) + c(Y,X)  + Q,$$
where $Q$ is the cost of solution $\optcost(\kappa')$ incurred from time $\tau$ starting at state $Y$.
Again $S$ is feasible for quota $\kappa$, and hence $\optcost(\kappa) \leq \textrm{cost}(S)$. By definition of $X$,
$$\optcost(\kappa') =  \optcost(k_1 \cdot \vec{1},\tau,X) + Q.$$
Thus,
 $$\optcost(\kappa) \leq  \optcost(k_2\cdot \vec{1},\tau,Y) + c(Y,X) + \optcost(\kappa') - \optcost(k_1 \cdot \vec{1},\tau,X).$$
As $\optcost(\kappa') \leq \sum_t h^t (\kappa') +  D \cdot g(\kappa')$ by the inductive hypothesis
and  $c(Y,X) + D \cdot g(\kappa') \leq D \cdot g(\kappa)$, we obtain
\begin{eqnarray}
\optcost(\kappa) & \leq &     \optcost(k_2\cdot \vec{1},\tau,Y) + \sum_t h^t (\kappa') +  D \cdot g(\kappa)
 - \optcost(k_1 \cdot \vec{1},\tau,X) \\
&  \leq  &  \optcost(k_2 \cdot \vec{1},\tau) - \optcost(k_1 \cdot \vec{1}, \tau)  + \sum_t h^t (\kappa') -  D \cdot g(\kappa)
  \label{eq:hitopt1b}\\
  & = & \sum_t h^t (\kappa) -  D \cdot g(\kappa) \label{eq:hitopt2b},
   \end{eqnarray}
implying the desired inequality.
Here \eqref{eq:hitopt1b} follows from \eqref{convexineqb}, and \eqref{eq:hitopt2b} follows by noting that
$$\sum_t h^t (\kappa')  - \sum_t h^t (\kappa) = \sum_{t=1}^\tau h^t(\kappa') - \sum_{t=1}^\tau h^t(\kappa) = \optcost(k_1 \cdot \vec{1},\tau) - \optcost(k_2 \cdot \vec{1},\tau).$$

\end{proof}

\end{document}